\providecommand{\tabularnewline}{\\}
\newcommand{\lyxdot}{.}
\def\RSthmtxt{theorem~}\newref{thm}{name = \RSthmtxt}}
\def\RSlemtxt{lemma~}\newref{lem}{name = \RSlemtxt}}
\theoremstyle{plain}
\newtheorem{thm}{\protect\theoremname}
\theoremstyle{remark}
\newtheorem{claim}[thm]{\protect\claimname}
\let \citet \citep
\date{}
\author[1]{Yuval Harel}
\author[1]{Ron Meir}
\author[2]{Manfred Opper}
\affil[1]{Department of Electrical Engineering, Technion -- Israel Institute of Technology, Haifa, Israel}
\affil[2]{Department of Electrical Engineering and Computer Science, Technical University Berlin, Berlin 10587, Germany}
\providecommand{\claimname}{Claim}
\providecommand{\theoremname}{Theorem}
\begin{document}
\global\long\def\P{\mathbf{P}}%
\global\long\def\E{\mathbf{E}}%
\global\long\def\transpose{^{\intercal}}%
\global\long\def\mc#1{\mathcal{#1}}%

\title{Optimal decoding of dynamic stimuli encoded by heterogeneous populations
of spiking neurons -- a closed form approximation}

\maketitle
\vspace{-20bp}

\begin{center}
Published in Neural Computation, August 2018, Vol. 30, No. 8
\par\end{center}

\medskip{}

\begin{abstract}
Neural decoding may be formulated as dynamic state estimation (filtering)
based on point process observations, a generally intractable problem.
Numerical sampling techniques are often practically useful for the
decoding of real neural data. However, they are less useful as theoretical
tools for modeling and understanding sensory neural systems, since
they lead to limited conceptual insight about optimal encoding and
decoding strategies. We consider sensory neural populations characterized
by a distribution over neuron parameters. We develop an analytically
tractable Bayesian approximation to optimal filtering based on the
observation of spiking activity, that greatly facilitates the analysis
of optimal encoding in situations deviating from common assumptions
of uniform coding. Continuous distributions are used to approximate
large populations with few parameters, resulting in a filter whose
complexity does not grow with the population size, and allowing optimization
of population parameters rather than individual tuning functions.
Numerical comparison with particle filtering demonstrates the quality
of the approximation. The analytic framework leads to insights which
are difficult to obtain from numerical algorithms, and is consistent
with biological observations about the distribution of sensory cells'
preferred stimuli.
\end{abstract}

\section{Introduction}

Populations of sensory neurons encode information about the external
world through their spiking activity. To understand this encoding,
it is natural to model it as an optimal or near-optimal code in the
context of some task performed by higher brain regions, using performance
criteria such as decoding error or motor performance. A Bayesian theory
of neural decoding is useful to characterize optimal encoding, as
the computation of performance criteria typically involves the posterior
distribution of the world state conditioned on spiking activity. 

We model the external world state as a random process, observed through
a set of sensory neuron-like elements characterized by multi-dimensional
tuning functions, representing the elements' average firing rate (see
Figure \ref{setting}). The actual firing of each cell is random and
is given by a Point Process (PP) with rate determined by the external
state and by the cell's tuning function \citet{DayAbb05}. Under this
model, decoding of sensory spike trains may be formulated as a filtering
problem based on PP observations, thus falling within the purview
of nonlinear filtering theory (\citet{SnyMil91}, \citet{Bremaud81}).
Inferring the hidden state under such circumstances has been widely
studied within the Computational Neuroscience literature \citet{DayAbb05,Macke2015}.
Beyond neuroscience, PP-based filtering has been used for position
sensing and tracking in optical communication \citet[sec. 4]{Snyder1977},
control of computer communication networks \citet{Segall1978}, queuing
\citet{Bremaud81} and econometrics \citet{Frey2001}. 

A significant amount of work has been devoted in recent years to the
development of algorithms for fast approximation of the posterior
distribution, leading to an extensive literature (see \citet{Macke2015}
and refs within for a recent review). Much of this work is devoted
to the development of effective sampling techniques, leading to highly
performing finite-dimensional filters that can be applied profitably
to real neural data. These approaches are usually formulated in discrete
time, as befits implementation on digital computers, and lead to complex
mathematical expressions for the posterior distributions, which are
difficult to interpret qualitatively. In this work we are less concerned
with algorithmic issues, and more with establishing closed-form analytic
expressions for approximately optimal continuous time filters, and
using these to characterize the nature of near-optimal encoders, namely
determining the structure and distribution of tuning functions for
optimal state inference. A significant advantage of the closed form
expressions over purely numerical techniques is the insight and intuition
that is gained from them about \textit{qualitative} aspects of the
system. Moreover, the leverage gained by the analytic computation
contributes to reducing the variance inherent to Monte Carlo approaches.
Thus, in this work we do not compare our results to algorithmically
oriented discrete-time filters, but rather to other continuous-time
analytically expressible filters for dynamically varying signals,
with the aim of gaining insight about optimal decoding and encoding
within an analytic framework. 

The problem of filtering a continuous-time diffusion process through
PP observations is solved formally under general conditions in \citet{Snyder1972}
(see also \citet{Segall1976} and \citet{Solo2000}), where a stochastic
PDE for the infinite-dimensional posterior state distribution is derived.
However, this PDE is intractable in general, and not easily amenable
to qualitative or even numerical analysis. Several previous works
have derived exact or approximate finite-dimensional filters, under
various simplifying assumptions. In many of these works (e.g. \citet{RhoSny1977,Komaee2010,YaeMei10,SusMeiOpp13,Twum-Danso2001}),
the tuning functions are chosen so that the total firing rate ---
i.e., the sum of firing rates of all neurons --- is independent of
the state, an assumption we refer to as \emph{uniform coding} (see
Figure \ref{uniform}).  In \citet{RhoSny1977}, an exact finite-dimensional
filter is derived for the case of linear dynamics with Gaussian noise
observed through uniform coding with Gaussian tuning functions\footnote{Although we describe this work using neuroscience terminology, the
motivation and formulation in \citet{RhoSny1977} is not related to
neuroscience.}. The more general setting of uniform coding with arbitrary tuning
functions is considered in \citet{Komaee2010}, where an approximate
filter is obtained.

Other works derive the posterior distribution for non-Markovian state
dynamics, modeled as a Gaussian processes. In \citet{Huys2007}, the
posterior is derived exactly, but its computation is not recursive,
requiring memory of the entire spike history. A recursive version
for Gaussian processes with a Matérn kernel auto-correlation is derived
in \citet{SusMeiOpp11}. Both these works assume uniform coding with
Gaussian tuning functions.

For reasons of mathematical tractability, few previous analytically
oriented works studied neural decoding without the uniform coding
assumption, in spite of the experimental importance and relevance
of non-uniform coding. We discuss such works in comparison to the
present work in section \ref{sec:PreviousWorks} 

The problem of optimal encoding by neural populations has been studied
mostly in the static case. A natural optimality criterion is the estimation
Mean Square Error (MSE). Some works (e.g. \citet{HarMcAlp04}, \citet{GanSim14},
and many others) optimize Fisher information, which serves as a proxy
to the MSE of unbiased estimators through the Cramér-Rao bound \citet{Rao1945}
or, in the Bayesian setting, the Van Trees inequality \citet{GillLevit95}.
Fisher information of neural spiking activity is easy to compute analytically,
at least in the static case \citet[section 3.3]{DayAbb05}, and it
can be used without solving the decoding problem. This approach has
been used to study non-uniform coding of static stimuli by heterogeneous
populations in many works, including \citet{CheDra08,EckBerTol11,GanSim14}.
However, optimizing Fisher information may yield misleading qualitative
results regarding the MSE-optimal encoding \citet{Bethge2002,YaeMei10,Pilarski2015}.
Although, under appropriate conditions, the inverse of Fisher information
approaches the minimum attainable MSE in the limit of infinite decoding
time, it may be a poor proxy for the MSE for finite decoding times,
which are of particular importance in natural settings and in control
problems. Exact computation of the estimation MSE is possible in some
restricted settings: some works along those lines are discussed in
Section \ref{subsec:PreviousWork-Encoding}. 

A possible alternative is the computation of estimation MSE for a
given filter through Monte Carlo simulations. This approach is complicated
by high variability between trials, which means many trials are necessary
for each value of the parameters. Consequently, optimization becomes
very time consuming, possibly impractical when using numerical filters
such as particle filtering, or large neural populations with many
parameters.

In this work, we derive an approximate online filter for the neural
decoding problem in continuous time, and demonstrate its use in investigating
optimal neural encoding. We consider neural populations characterized
by a distribution over neuron parameters. Continuous distributions
are used to approximate large populations with few parameters, resulting
in a filter whose complexity does not grow with the population size,
and allowing optimization of population parameters rather than individual
tuning functions. We suggest further reducing computational complexity
for the encoding problem by using the estimated posterior variance
as an approximation to estimation MSE, as discussed in appendix \ref{sec:Variance-as-proxy}. 

Technically, given the intractable infinite-dimensional nature of
the posterior distribution, we use a projection method replacing the
full posterior at each point in time by a projection onto a simple
family of distributions (Gaussian in our case). This approach, originally
developed in the Filtering literature \citet{Maybeck79,BriHanLeg99},
and termed Assumed Density Filtering (ADF), has been successfully
used more recently in Machine Learning \citet{Opper98,Minka01}. \textcolor{black}{We
derive approximate filters for Gaussian tuning functions, and for
several distributions over tuning function centers, including the
case of a finite population. These filters may be combined to obtain
a filter for heterogeneous mixtures of homogeneous sub-populations.}
We are not aware of any previous work providing an effective closed-form
filter for heterogeneous populations of sensory neurons characterized
by a small number of parameters.

\textbf{Main contributions}: \emph{(i)} Derivation of closed-form
recursive expressions for the continuous-time posterior mean and variance
within the ADF approximation, in the context of large \emph{non-uniform
populations }characterized by a small number of parameters. \emph{(ii)}
Demonstrating the quality of the ADF approximation by comparison to
state-of-the-art particle filtering methods. \emph{(iii)} Characterization
of optimal adaptation (encoding) for sensory cells in a more general
setting than hitherto considered (non-uniform coding, dynamic signals).
\emph{(iv)} Demonstrating the interesting interplay between prior
information and neuronal firing, showing how in certain situations,
the absence of spikes can be highly informative (this phenomenon is
absent under uniform coding). 

Preliminary results discussed in this paper were presented at a conference
\citet{Harel2015}. These included only the special case of Gaussian
distribution of preferred stimuli. The present paper provides a more
general and rigorous formulation of the mathematical framework. By
separately considering different terms in the approximate filter,
we find that updates at spike time depend only on the tuning function
of the spiking neuron, and apply generally to any population distribution.
For the updates between spikes, we provide closed-form expressions
for cases that were not discussed in \citet{Harel2015} -- namely,
uniform populations on an interval and finite heterogeneous mixtures
-- and non-closed-form expressions for the general case, given as
integrals involving the distribution of neuron parameters. We further
supplement our previously published results with numerical evaluation
of the filter's accuracy, an additional example application, and a
detailed comparison with previous works.

\section{Problem Overview\label{sec:Problem-Overview}}

\begin{figure}
\bigskip{}

\includegraphics[width=1\columnwidth]{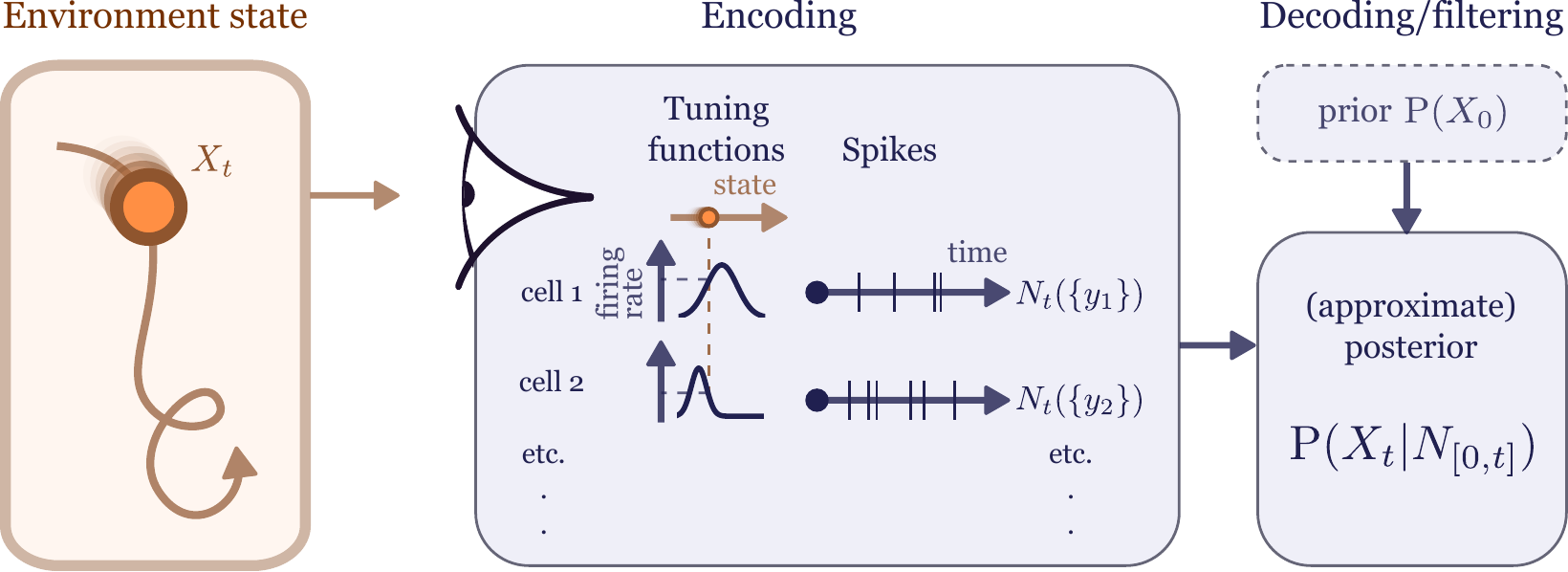}

\bigskip{}
\caption{Problem setting}
\label{setting}
\end{figure}
Consider a dynamical system with state $X_{t}\in\mathbb{R}^{n}$,
observed through the firing patterns of $M$ sensory neurons, as illustrated
in Figure \ref{setting}. Each neuron fires stochastically and independently,
with the $i$th neuron having firing rate $\lambda^{i}\left(X_{t}\right)$.
More detailed assumptions about the dynamics of the state and observation
processes are described in later sections. In this context, we are
interested in the question of optimal encoding and decoding. By \emph{decoding
}we mean computing (exactly or approximately) the full posterior distribution
of $X_{t}$ given $\mathcal{N}_{t}$, which is the history of neural
spikes up to time $t$. The problem of optimal encoding is then the
problem of optimal sensory cell configuration, i.e., finding the optimal
rate function $\left\{ \lambda^{i}\left(\cdot\right)\right\} _{i=1}^{M}$
so as to minimize some performance criterion. We assume the set $\left\{ \lambda^{i}\right\} _{i=1}^{M}$
belong to some parameterized family with parameter $\phi$.

To quantify the performance of the encoding-decoding system, we summarize
the result of decoding using a single estimator $\hat{X}_{t}=\hat{X}_{t}\left(\mathcal{N}_{t}\right)$,
and define the Mean Square Error (MSE) as $\epsilon_{t}\triangleq\mathrm{trace}[(X_{t}-\hat{X}_{t})(X_{t}-\hat{X}_{t})^{T}]$.
We seek $\hat{X}_{t}$ and $\phi$ that solve $\min_{\phi}\min_{\hat{X}_{t}}\mathrm{E}\left[\epsilon_{t}\right]=\min_{\phi}\mathrm{E}[\min_{\hat{X}_{t}}\mathrm{E}[\epsilon_{t}|\mathcal{N}_{t}]]$.
The inner minimization problem in this equation is solved by the MSE-optimal
decoder, which is the posterior mean $\hat{X}_{t}=\mu_{t}\triangleq\mathrm{E}\left[X_{t}|\mathcal{N}_{t}\right]$.
The posterior mean may be computed from the full posterior obtained
by decoding. The outer minimization problem is solved by the optimal
encoder. If decoding is exact, the problem of optimal encoding becomes
that of minimizing the expected posterior variance. Note that, although
we assume a fixed parameter $\phi$ which does not depend on time,
the optimal value of $\phi$ for which the minimum is obtained generally
depends on the time $t$ where the error is to be minimized. In principle,
the encoding/decoding problem can be solved for any value of $t$.
In order to assess performance it is convenient to consider the steady-state
limit $t\to\infty$ for the encoding problem.

Below, we approximately solve the decoding problem for any $t$. We
then explore the problem of choosing the steady-state optimal encoding
parameters $\phi$ using Monte Carlo simulations in an example motivated
by experimental results.

Having an efficient (closed-form) approximate filter allows performing
the Monte Carlo simulation at a significantly reduced computational
cost, relative to numerical methods such as particle filtering. The
computational cost is further reduced by averaging the computed posterior
variance across trials, rather than the squared error, thereby requiring
fewer trials. The mean of the posterior variance equals the MSE (of
the posterior mean), but has the advantage of being less noisy than
the squared error itself -- since by definition it is the mean of
the square error under conditioning on $\mathcal{N}_{t}$ .

\begin{figure}
\subfloat[uniform coding]{\includegraphics[width=0.5\columnwidth]{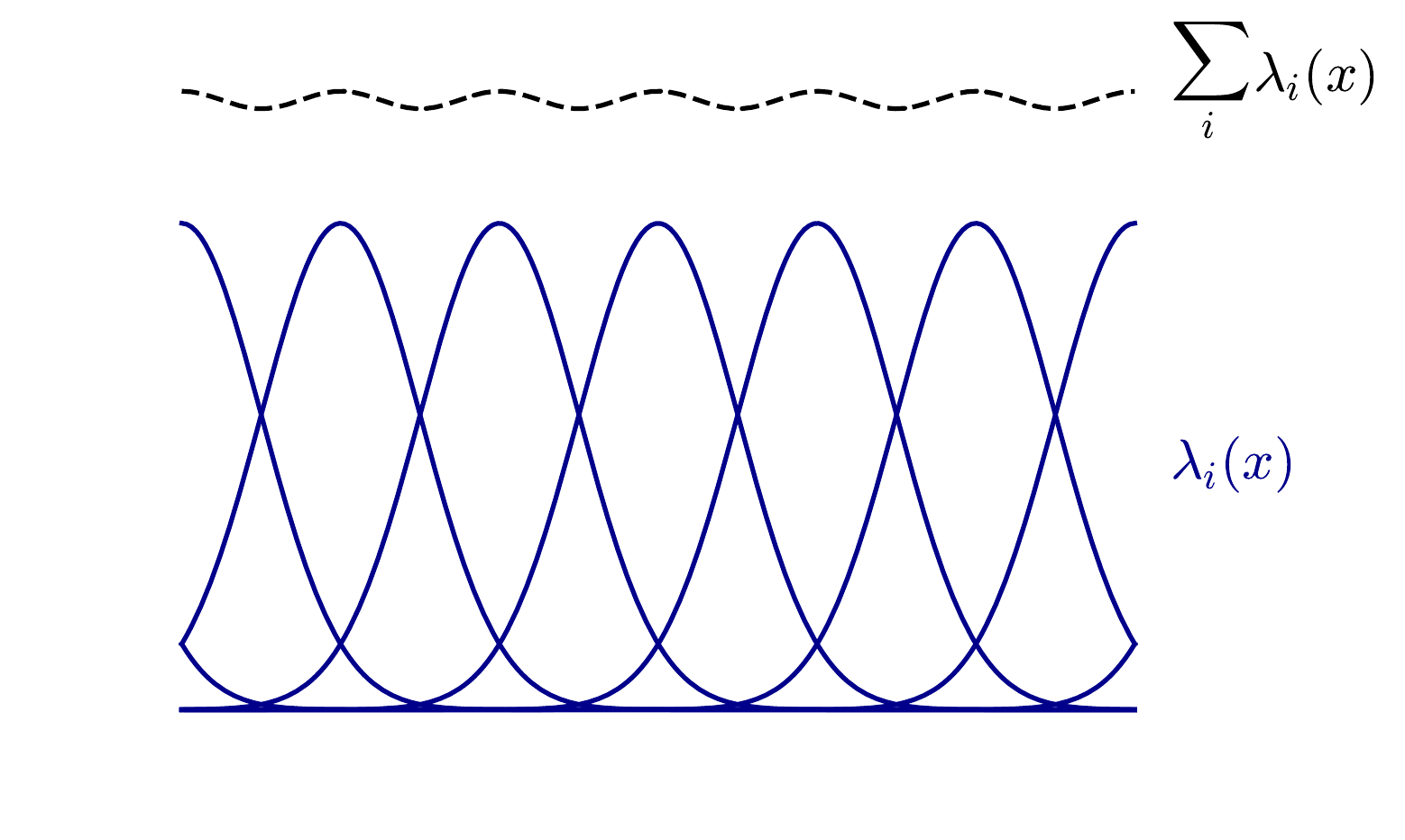}}\subfloat[non-uniform coding]{\includegraphics[width=0.5\columnwidth]{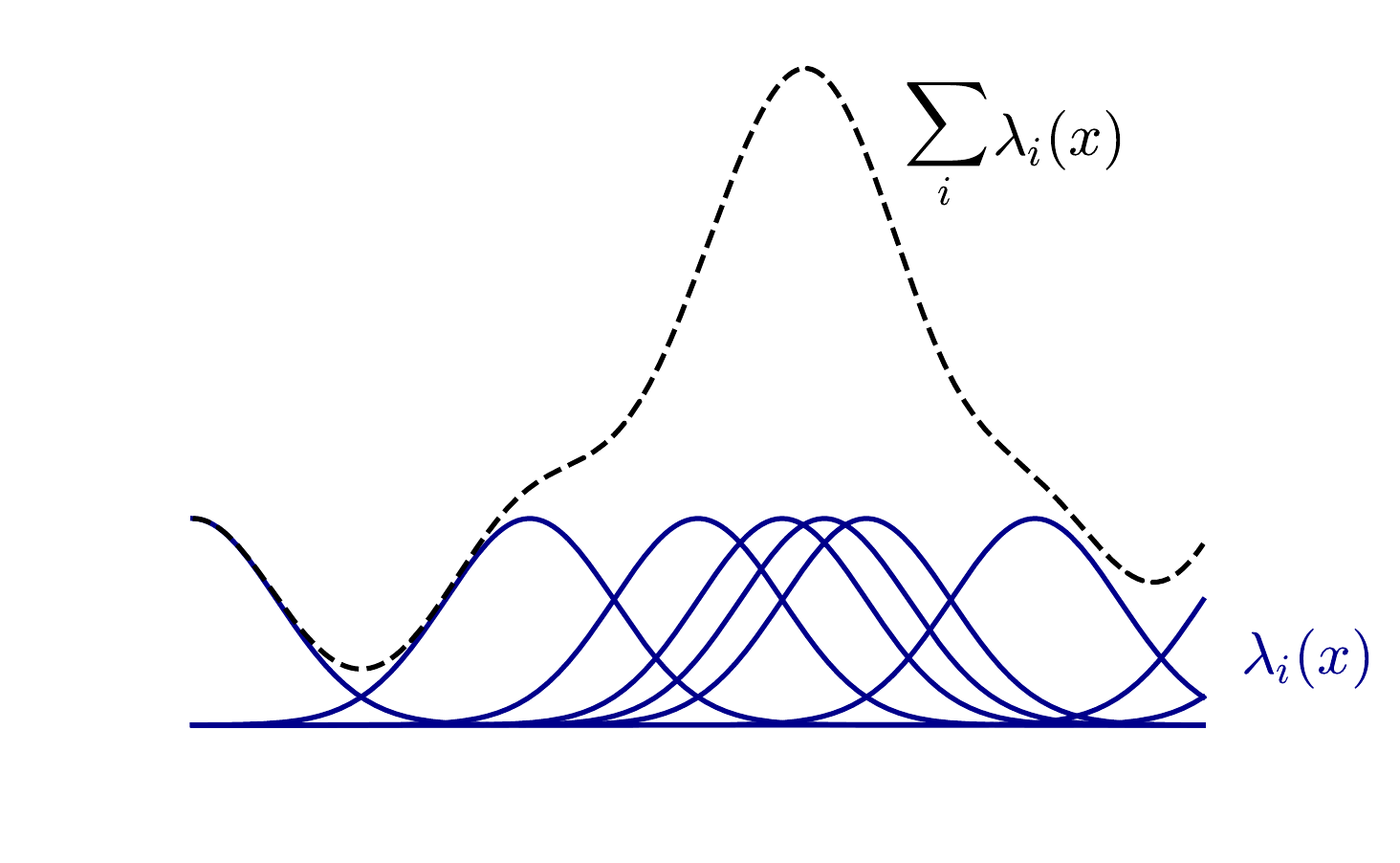}}\caption{The uniform coding property, $\Sigma_{i}\lambda^{i}\left(x\right)=\mathrm{const}$,
holds approximately in homogeneous populations with tuning function
centers located on a dense uniform grid, as demonstrated in (a) in
a 1-dimensional case. (b) illustrates non-uniform coding.}
\label{uniform}
\end{figure}

\section{Decoding}

\subsection{A Finite Population of Gaussian Neurons\label{subsec:FinitePopulation}}

For ease of exposition, we first formulate the problem for a finite
population of neurons. We address a more general setting in subsequent
sections.

\subsubsection{State and observation model\label{subsec:model-finite}}

The observed process $X$ is a diffusion process obeying the Stochastic
Differential Equation (SDE)\footnote{For an introduction to SDEs and the Wiener process see e.g. \citet{Oksendal2003}.
Intuitively, equation (\ref{eq:dynamics}) may be interpreted as a
differential equation with ``continuous-time Gaussian white noise''
$\xi_{t}$:
\[
\dot{X}_{t}=A\left(X_{t}\right)+D\left(X_{t}\right)\xi_{t},
\]
or as the limit as $\Delta t\to0$ of the discretized dynamics
\[
X_{\left(k+1\right)\Delta t}=X_{k\Delta t}+A\left(X_{k\Delta t}\right)\Delta t+D\left(X_{k\Delta t}\right)\xi_{k}\sqrt{\Delta t},
\]
where $\xi_{k}$ are independent standard Gaussian variables. }
\begin{equation}
dX_{t}=A\left(X_{t}\right)dt+D\left(X_{t}\right)dW_{t},\quad\left(t\geq0\right),\label{eq:dynamics}
\end{equation}
where $A\left(\cdot\right),D\left(\cdot\right)$ are arbitrary functions
such that (\ref{eq:dynamics}) has a unique\footnote{in the sense that any two solutions $X_{t}^{\left(1\right)},X_{t}^{\left(2\right)}$
defined over $\left[0,T\right]$ with $X_{0}^{\left(1\right)}=X_{0}^{\left(2\right)}$
satisfy $\P[X_{t}^{\left(1\right)}=X_{t}^{\left(2\right)}\;\forall t\in\left[0,T\right]]=1$.
See e.g. \citet[Theroem 5.2.1]{Oksendal2003} for sufficient conditions.} solution, and $W_{t}$ is a standard Wiener process whose increments
are independent of the history of all other random processes. The
integral with respect to $dW_{t}$ is to be interpreted in the Ito
sense. The initial condition $X_{0}$ is assumed to have a continuous
distribution with a known density.

The observation processes are $\{N^{i}\}_{i=1}^{M}$, where $N_{t}^{i}$
is the spike count of the $i$th neuron up to time $t$. Denote by
$\mathcal{N}_{t}=(N_{\left[0,t\right]}^{i})_{i=1}^{M}$ the history
of neural spikes up to time $t$, and by $N_{t}=\sum_{i=1}^{M}N_{t}^{i}$
the total number of spikes up to time $t$ from all neurons. We assume
that the $i$th neuron fires with rate $\lambda^{i}\left(X_{t}\right)$
at time $t$, independently of other neurons given the state history
$X_{\left[0,t\right]}$. More explicitly, this means 

\begin{align}
\P\left(N_{t+h}^{i}-N_{t}^{i}=k\Big|X_{\left[0,t\right]},\mathcal{N}_{t}\right)= & \begin{cases}
\lambda^{i}\left(X_{t}\right)h+o\left(h\right) & k=0\\
1-\lambda^{i}\left(X_{t}\right)h+o\left(h\right) & k=1\\
o\left(h\right) & k>1
\end{cases}\nonumber \\
 & \left(i=1,\ldots M,\;h\to0^{+}\right)\label{eq:rate-little-oh}
\end{align}
where $X_{\left[0,t\right]}$ denote the history up to time $t$ of
$X$, and $o\left(h\right)$ is little-o asymptotic notation, denoting
any function satisfying $o\left(h\right)/h\to0$ as $h\to0^{+}$.
Thus, each $N^{i}$ is a Doubly-Stochastic Poisson Process (DSPP,
see e.g., \citet{SnyMil91})\footnote{If (\ref{eq:dynamics}) includes an additional feedback (control)
term, $N^{i}$ are not DSPPs. Our results apply with little modification
to this case, as described in appendix \ref{sec:Derivation}.\label{fn:dspp}} with rate process $\lambda^{i}\left(X_{t}\right)$.

To achieve mathematical tractability, we assume that the tuning functions
$\lambda^{i}$ are Gaussian: the firing rate of the $i$th neuron
in response to state $x$ is given by 
\begin{equation}
\lambda^{i}\left(x\right)=h_{i}\exp\left(-\frac{1}{2}\left\Vert H_{i}x-\theta_{i}\right\Vert _{R_{i}}^{2}\right),\label{eq:gauss-tc}
\end{equation}
where $\theta_{i}\in\mathbb{R}^{n}$ is the neuron's preferred location,
$h_{i}\in\mathbb{R}_{+}$ is the neuron's maximal expected firing
rate, $H_{i}\in\mathbb{R}^{m\times n}$ and $R_{i}\in\mathbb{R}^{m\times m}$,
$m\le n$, are fixed matrices, each $R_{i}$ is positive-definite,
and the notation $\left\Vert y\right\Vert _{M}^{2}$ denotes $y^{T}My$.
The inclusion of the matrix $H_{i}$ allows using high-dimensional
models where only some dimensions are observed, for example when the
full state includes velocities but only locations are directly observable.
In typical applications, $H_{i}$ would be the same across all neurons,
or at least across all neurons of the same sensory modality.

In the sequel, we use the following standard notation,
\begin{equation}
\int_{a}^{b}h\left(t\right)dN_{t}^{i}\triangleq\sum_{j}\boldsymbol{1}\left\{ t_{j}^{i}\in\left[a,b\right]\right\} h\left(t_{j}^{i}\right),\label{eq:point-integral}
\end{equation}
for any function $h$, where $t_{j}^{i}$ is the time of the $j$th
point of the process $N^{i}$. This is the usual Lebesgue integral
of $h$ with respect to $N^{i}$ viewed as a discrete measure.

\subsubsection{Model limitations}

The model outlined above involves several simplifications to achieve
tractability. Namely, tuning functions are assumed to be Gaussian,
and firing rates are assumed to be independent of state history and
spike history given the current state (\ref{eq:rate-little-oh}),
yielding a DSPP model (see footnote \ref{fn:dspp} above).

Gaussian tuning functions are a reasonable model for some neural systems,
but are inadequate for others -- e.g., where the tuning is sigmoidal
or where there is a baseline firing rate regardless of stimulus value.
For simplicity, we focus on the Gaussian case in this work. It is
straightforward to extend the derivation presented here to piecewise
linear tuning -- which may be used to represent sigmoidal tuning
functions -- but the resulting expression are more cumbersome. We
have also developed closed-form results for tuning functions given
by sums of Gaussians; however, these require further approximations
in order to obtain analytic results, and are not discussed in this
work.

The assumption of history-independent rates may also limit the model's
applicability. Real sensory neurons exhibit firing-history dependence
in the form of refractory periods and rate adaptation \citet{DayAbb05},
state-history dependence such as input integration \citet{DayAbb05},
or correlations between the firing of different neurons conditioned
on the state \citet{Pillow2008}. These phenomena are captured by
some encoding models, such as simple integrate-and-fire models as
well as more complex physiological models like the Hodgkin-Huxley
model. However, the simplifying independence assumptions above are
common to all works presenting closed-form continuous-time filters
for point process observations that we are aware of. 

Note that characterization of the point processes in terms of their
history-conditioned firing rate, as opposed to finite-dimensional
distributions, does not in itself restrict the model's generality
in any substantial way (see \citet[Theorem 1]{Segall1975-modelling}).
Rather, the independence assumptions are expressed rigorously by the
fact that the right-hand side of (\ref{eq:rate-little-oh}) depends
neither on previous values of $X$, nor on previous spike times of
any neuron. Some of our analysis applies without modification when
rates are allowed to depend on \emph{spiking} history (specifically,
equations (\ref{eq:finite}) below), so it may be possible to extend
these techniques to some history-dependent models. However, when rates
may depend on the \emph{state} history, exact filtering may involve
the posterior distribution of the entire state history rather than
the current state, so that a different approach is probably required.

\subsubsection{Exact filtering equations\label{subsec:exact}}

Let $p_{t}^{\,\mathcal{N}}\left(\cdot\right)$ be the posterior density
of $X_{t}$ given the firing history $\mathcal{N}_{t}$, and $\E_{t}^{\mathcal{N}}\left[\cdot\right]$
the posterior expectation given $\mathcal{N}_{t}$. The prior density
$p_{0}^{\,\mathcal{N}}$ is assumed to be known. We denote by $\hat{\lambda}_{t}^{i}$
the rate of $N_{t}^{i}$ with respect to the history of spiking only
-- i.e., the rates that would appear in the right-hand side of (\ref{eq:rate-little-oh})
if the conditioning on the left were only on $\mc N_{t}$. These rates
are given by\footnote{See \citet[Theorem 2]{Segall1975-modelling}}
\[
\hat{\lambda}_{t}^{i}=\E_{t}^{\mathcal{N}}\left[\lambda^{i}\left(X_{t}\right)\right]=\int p_{t}^{\,\mathcal{N}}\left(x\right)\lambda^{i}\left(x\right)dx.
\]

The problem of filtering a diffusion process $X$ from a doubly stochastic
Poisson process driven by $X$ is formally solved in \citet{Snyder1972},
where the authors derive a stochastic PDE for the posterior density\footnote{the setting of \citet{Snyder1972} includes a single observation point
process. The extension to several point processes is obtained through
summation as in (\ref{eq:snyder-density}), and is a special case
of a more general PDE described in \citet{RhoSny1977} and discussed
in Appendix \ref{sec:Derivation}.},
\begin{equation}
dp_{t}^{\,\mathcal{N}}\left(x\right)=\left\{ \mathcal{L}^{*}p_{t}^{\,\mathcal{N}}\right\} \left(x\right)dt+p_{t}^{\,\mathcal{N}}\left(x\right)\sum_{i}\left(\frac{\lambda^{i}\left(x\right)}{\hat{\lambda}_{t}^{i}}-1\right)\left(dN_{t}^{i}-\hat{\lambda}_{t}^{i}dt\right),\label{eq:snyder-density}
\end{equation}
where $\mathcal{L}$ is the state's infinitesimal generator (Kolmogorov's
backward operator), defined as $\mathcal{L}h\left(x\right)=\lim_{\Delta t\to0^{+}}\left(\mathrm{E}\left[h\left(X_{t+\Delta t}\right)|X_{t}=x\right]-h\left(x\right)\right)/\Delta t$,
$\mathcal{L}^{*}$ is $\mathcal{L}$'s adjoint operator (Kolmogorov's
forward operator). The notation $dN_{t}^{i}$ is interpreted as in
(\ref{eq:point-integral}), so this term contributes a jump of size
$p_{t}^{\,\mathcal{N}}\left(x\right)\left(\lambda^{i}\left(x\right)/\hat{\lambda}_{t}^{i}-1\right)$
at a spike of the $i$th neuron. Equation (\ref{eq:snyder-density})
may be written in a notation more familiar for non-stochastic PDEs
using Dirac delta functions,
\[
\frac{\partial}{\partial t}p_{t}^{\,\mathcal{N}}\left(x\right)=\left\{ \mathcal{L}^{*}p_{t}^{\,\mathcal{N}}\right\} \left(x\right)+p_{t}^{\,\mathcal{N}}\left(x\right)\sum_{i}\left(\frac{\lambda^{i}\left(x\right)}{\hat{\lambda}_{t}^{i}}-1\right)\left(\dot{N}_{t}^{i}-\hat{\lambda}_{t}^{i}\right),
\]
where $\dot{N}_{t}^{i}\triangleq\sum_{j}\delta\left(t-t_{j}^{i}\right)$
is the spike train of the $i$th neuron, which is the formal derivative
of the process $N^{i}$. An accessible, albeit non-rigorous, derivation
of (\ref{eq:snyder-density}) via time discretization is found in
\citet[section 2.3]{SusemihlThesis2014}.

The stochastic PDE (\ref{eq:snyder-density}) is non-linear and non-local
(due to the dependence of $\hat{\lambda}_{t}^{i}$ on $p_{t}^{\,\mathcal{N}}$),
and therefore usually intractable. In \citet{RhoSny1977,Susemihl2014}
the authors consider linear dynamics with a Gaussian prior and Gaussian
sensors with centers distributed uniformly over the state space. In
this case, \emph{the posterior is Gaussian}, and (\ref{eq:snyder-density})
leads to closed-form ODEs for its mean and variance. In our more general
setting, we can obtain exact equations for the posterior mean and
variance, as follows.

Let $\mu_{t}\triangleq\E_{t}^{\mathcal{N}}X_{t},\tilde{X}_{t}\triangleq X_{t}-\mu_{t},\Sigma_{t}\triangleq\E_{t}^{\mathcal{N}}[\tilde{X}_{t}\tilde{X}_{t}^{T}]$.
Using (\ref{eq:snyder-density}), along with known results about the
form of the infinitesimal generator $\mathcal{L}_{t}$ for diffusion
processes (e.g. \citet{Oksendal2003}, Theorem 7.3.3), the first two
posterior moments can be shown to obey the following exact equations
(see Appendix \ref{sec:Derivation}):\begin{subequations}\label{eq:finite}

\begin{align}
d\mu_{t} & =\E_{t}^{\mathcal{N}}\left[A\left(X_{t}\right)\right]dt+\sum_{i}\E_{t^{-}}^{\mathcal{N}}\left[\omega_{t^{-}}^{i}X_{t^{-}}\right]\left(dN_{t}^{i}-\hat{\lambda}_{t}^{i}dt\right)\label{eq:mean-finite}\\
d\Sigma_{t} & =\E_{t}^{\mathcal{N}}\left[A\left(X_{t\left(y\right)}\right)\tilde{X}_{t}\transpose+\tilde{X}_{t}A\left(X_{t}\right)\transpose+D\left(X_{t}\right)D\left(X_{t}\right)\transpose\right]dt\nonumber \\
 & \quad+\sum_{i}\E_{t^{-}}^{\mathcal{N}}\left[\omega_{t^{-}}^{i}\tilde{X}_{t^{-}}\tilde{X}_{t^{-}}\transpose\right]\left(dN_{t}^{i}-\hat{\lambda}_{t}^{i}dt\right)\nonumber \\
 & \quad-\sum_{i}\E_{t^{-}}^{\mathcal{N}}\left[\omega_{t^{-}}^{i}X_{t^{-}}\right]\E_{t^{-}}^{\mathcal{N}}\left[\omega_{t^{-}}^{i}X_{t^{-}}\transpose\right]dN_{t}^{i}\label{eq:var-finite}
\end{align}
\end{subequations}where
\[
\omega_{t}^{i}\triangleq\frac{\lambda^{i}\left(X_{t}\right)}{\hat{\lambda}_{t}^{i}}-1,
\]
and the expressions involving $t^{-}$ denote left limits, which are
necessary since the solutions to (\ref{eq:finite}) are discontinuous
at spike times.

In contrast with the more familiar case of linear dynamics with Gaussian
white noise, and the corresponding Kalman-Bucy filter \citet{Maybeck79},
here the posterior variance is random, and is generally not monotonically
decreasing even when estimating a constant state. However, noting
that $\E[dN_{t}^{i}-\hat{\lambda}_{t}^{i}dt]=0$, we may observe from
(\ref{eq:var-finite}) that for a constant state ($A=D=0$), the \emph{expected}
posterior variance $\E\left[\Sigma_{t}\right]$ is decreasing, since
the first two terms in (\ref{eq:var-finite}) vanish.

We will find it useful to rewrite (\ref{eq:finite}) in a different
form, as follows,
\begin{align*}
d\mu_{t} & =d\mu_{t}^{\pi}+d\mu_{t}^{\mathrm{c}}+d\mu_{t}^{N},\\
d\Sigma_{t} & =d\Sigma_{t}^{\pi}+d\Sigma_{t}^{\mathrm{c}}+d\Sigma_{t}^{N},
\end{align*}
where $d\mu_{t}^{\pi},d\Sigma_{t}^{\pi}$ are the \emph{prior terms},\emph{
}corresponding to $\mathcal{L}^{*}p_{t}^{\,\mathcal{N}}\left(x\right)$
in (\ref{eq:snyder-density}), and the remaining terms are divided
into \emph{continuous update terms} $d\mu_{t}^{\mathrm{c}},d\Sigma_{t}^{\mathrm{c}}$
(multiplying $dt$) and \emph{discontinuous update terms} $d\mu_{t}^{N},d\Sigma_{t}^{N}$
(multiplying $dN_{t}^{i}$). Using (\ref{eq:finite}), we find the
exact equations\begin{subequations}\label{eq:exact-finite}
\begin{align}
d\mu_{t}^{\pi} & =\E_{t}^{\mathcal{N}}\left[A\left(X_{t}\right)\right]dt\label{eq:mean-p}\\
d\Sigma_{t}^{\pi} & =\E_{t}^{\mathcal{N}}\left[A\left(X_{t}\right)\tilde{X}_{t}\transpose+\tilde{X}_{t}A\left(X_{t}\right)\transpose+D\left(X_{t}\right)D\left(X_{t}\right)\transpose\right]dt\label{eq:var-p}\\
d\mu_{t}^{\mathrm{c}} & =-\sum_{i}\E_{t}^{\mathcal{N}}\left[\omega_{t}^{i}X_{t}\right]\hat{\lambda}_{i}dt\label{eq:mean-c-finite}\\
d\Sigma_{t}^{\mathrm{c}} & =-\sum_{i}\E_{t}^{\mathcal{N}}\left[\omega_{t}^{i}\tilde{X}_{t}\tilde{X}_{t}\transpose\right]\hat{\lambda}_{i}dt\\
d\mu_{t}^{N} & =\sum_{i}\E_{t^{-}}^{\mathcal{N}}\left[\omega_{t^{-}}^{i}X_{t^{-}}\right]dN_{t}^{i}\\
d\Sigma_{t}^{N} & =\sum_{i}\bigg(\E_{t^{-}}^{\mathcal{N}}\left[\omega_{t^{-}}^{i}\tilde{X}_{t^{-}}\tilde{X}_{t^{-}}\transpose\right]-\E_{t^{-}}^{\mathcal{N}}\left[\omega_{t^{-}}^{i}X_{t^{-}}\right]\E_{t^{-}}^{\mathcal{N}}\left[\omega_{t^{-}}^{i}X_{t^{-}}\transpose\right]\bigg)dN_{t}^{i}\label{eq:var-N-finite}
\end{align}
\end{subequations}The prior terms $d\mu_{t}^{\pi},d\Sigma_{t}^{\pi}$
represent the known dynamics of $X$, and are the same terms appearing
in the Kalman-Bucy filter. These would be the only terms left if no
measurements were available, and would vanish for a static state.
The continuous update terms $d\mu_{t}^{\mathrm{c}},d\Sigma_{t}^{\mathrm{c}}$
represent updates to the posterior between spikes that are not derived
from $X$'s dynamics, and therefore may be interpreted as corresponding
to information obtained from the \emph{absence of spikes}. The discontinuous
update terms $d\mu_{t}^{N},d\Sigma_{t}^{N}$ contribute a change to
the posterior at spike times, depending on the spike's origin $i$,
and thus represent information obtained from the presence of a spike
as well as the parameters of the spiking neuron.

Note that the Gaussian tuning assumption (\ref{eq:gauss-tc}) has
not been used in this section, and equations (\ref{eq:exact-finite})
are valid for any form of $\lambda_{i}$.

\subsubsection{ADF approximation\label{subsec:ADF}}

While equations (\ref{eq:exact-finite}) are exact, they are not practical,
since they require computation of posterior expectations $\E_{t}^{\mathcal{N}}\left[\cdot\right]$.
To bring them to a closed form, we use ADF with an assumed Gaussian
density (see \citet{Opper98} for details). Informally, this may be
envisioned as integrating (\ref{eq:exact-finite}) while replacing
the distribution $p_{t}^{\,\mathcal{N}}$ by its approximating Gaussian
``at each time step''. The approximating Gaussian is obtained by
matching the first two moments of $p_{t}^{\,\mathcal{N}}$ \citet{Opper98}.
Note that the solution of the resulting equations does not in general
match the first two moments of the exact solution, though it may approximate
it. Practically, the ADF approximation amounts to substituting the
normal distribution $\mathcal{N}(\mu_{t},\Sigma_{t})$ for $p_{t}^{\,\mathcal{N}}$
to compute the expectations in (\ref{eq:exact-finite}). This heuristic
may be justified by its relation to a projection method, where right-hand
side of the density PDE is projected onto the tangent space of the
approximating family of densities: the two approaches are equivalent
when the approximating family is exponential \citet{BriHanLeg99}.

If the dynamics are linear, the prior updates (\ref{eq:mean-p})-(\ref{eq:var-p})
are easily computed in closed form after this substitution. Specifically,
for $dX_{t}=AX_{t}dt+DdW_{t}$, the prior updates read\begin{subequations}\label{eq:p-linear}
\begin{align}
d\mu_{t}^{\pi} & =A\mu_{t}\,dt\label{eq:mean-p-linear}\\
d\Sigma_{t}^{\pi} & =\left(A\Sigma_{t}+\Sigma_{t}A+DD\transpose\right)dt,\label{eq:var-p-linear}
\end{align}
\end{subequations}as in the Kalman-Bucy filter \citet{Maybeck79}.
Otherwise, they may be approximated by expanding the non-linear functions
$A(x)$ and $D\left(x\right)D\left(x\right)\transpose$ as power series
and applying the assumed Gaussian density, resulting in tractable
Gaussian integrals. The use of ADF in the prior terms is outside the
scope of this work; see e.g. \citet[Chapter 12]{Maybeck79}.

We therefore turn to the approximation of the non-prior updates (\ref{eq:mean-c-finite})-(\ref{eq:var-N-finite})
in the case of Gaussian tuning (\ref{eq:gauss-tc}). 

Abusing notation, from here on we use $\mu_{t},\Sigma_{t}$, and $p_{t}^{\,\mathcal{N}}\left(x\right)$
to refer to the ADF approximation rather than to the exact values.
Applying the Gaussian ADF approximation $p_{t}^{\,\mathcal{N}}\left(x\right)\approx\mathcal{N}\left(x;\mu_{t},\Sigma_{t}\right)$
in the case of Gaussian tuning functions (\ref{eq:gauss-tc}) yields
the non-prior terms\begin{subequations}\label{eq:filtering-finite}
\begin{align}
d\mu_{t}^{\mathrm{c}} & =\sum_{i}\Sigma_{t}H_{i}\transpose S_{t}^{i}\delta_{t}^{i}\hat{\lambda}_{t}^{i}\,dt,\label{eq:mean-c-gaussian-finite}\\
d\Sigma_{t}^{\mathrm{c}} & =\sum_{i}\Sigma_{t}H_{i}\transpose\left(S_{t}^{i}-S_{t}^{i}\delta_{t}^{i}\left(\delta_{t}^{i}\right)\transpose S_{t}^{i}\right)H_{i}\Sigma_{t}\hat{\lambda}_{t}^{i}\,dt,\label{var-c-gaussian-finite}\\
d\mu_{t}^{N} & =-\sum_{i}\Sigma_{t^{-}}H_{i}\transpose S_{t^{-}}^{i}\delta_{t^{-}}^{i}\,dN_{t}^{i},\label{eq:mean-N-gaussian-finite}\\
d\Sigma_{t}^{N} & =-\sum_{i}\Sigma_{t^{-}}H_{i}\transpose S_{t^{-}}^{i}H_{i}\Sigma_{t^{-}}\,dN_{t}^{i},\label{eq:var-N-gaussian-finite}\\
\hat{\lambda}_{t}^{i} & =h_{i}\sqrt{\frac{\left|S_{t}^{i}\right|}{\left|R_{i}\right|}}\exp\left(-\frac{1}{2}\left\Vert \delta_{t}^{i}\right\Vert _{S_{t}^{i}}^{2}\right),\label{eq:self-rate-gaussian-finite}
\end{align}
where
\begin{align}
\delta_{t}^{i} & \triangleq H_{i}\mu_{t}-\theta_{i},\nonumber \\
S_{t}^{i} & \triangleq\left(R_{i}^{-1}+H\Sigma_{t}H\transpose\right)^{-1}.\label{eq:S-finite}
\end{align}
\end{subequations}These equations are a special case of (\ref{eq:filtering-gaussian}),
which are derived in appendix \ref{sec:Derivation}. The updates for
the posterior precision $\Sigma_{t}^{-1}$ have a simpler form, also
derived in appendix \ref{sec:Derivation}: \begin{subequations}\label{eq:precision-finite}
\begin{align}
d\Sigma_{t}^{-1,\mathrm{c}} & =-\sum_{i}H_{i}\transpose\left(S_{t}^{i}-S_{t}^{i}\delta_{t}^{i}\left(\delta_{t}^{i}\right)\transpose S_{t}^{i}\right)H_{i}\hat{\lambda}_{t}^{i}dt,\label{eq:prec-c-gaussian-finite}\\
d\Sigma_{t}^{-1,N} & =\sum_{i}H_{i}\transpose R_{i}H_{i}dN_{t}^{i}.\label{eq:prec-N-gaussian-finite}
\end{align}
\end{subequations}

In the scalar case $m=n=1$, with $H=1$, $\sigma_{t}^{2}=\Sigma_{t},\alpha_{i}^{2}=R_{i}^{-1}$,
the update equations (\ref{eq:filtering-finite}), (\ref{eq:precision-finite})
read\begin{subequations}\label{eq:filtering-guassian-1d}
\begin{align}
d\mu_{t}^{\mathrm{c}} & =\sum_{i}\frac{\sigma_{t}^{2}}{\sigma_{t}^{2}+\alpha_{i}^{2}}\left(\mu_{t}-\theta_{i}\right)\hat{\lambda}_{t}^{i}\,dN_{t}^{i},\label{eq:mean-c-gaussian-finite-1d}\\
d\sigma_{t}^{2,\mathrm{c}} & =\sum_{i}\frac{\sigma_{t}^{2}}{\sigma_{t}^{2}+\alpha_{i}^{2}}\left(1-\frac{\left(\mu_{t}-\theta_{i}\right)^{2}}{\sigma_{t}^{2}+\alpha_{i}^{2}}\right)\sigma_{t}^{2}\hat{\lambda}_{t}^{i}\,dt,\\
d\sigma_{t}^{-2,\mathrm{c}} & =\sum_{i}\frac{1}{\sigma_{t}^{2}+\alpha_{i}^{2}}\left(\frac{\left(\mu_{t}-\theta_{i}\right)^{2}}{\sigma_{t}^{2}+\alpha_{i}^{2}}-1\right)\hat{\lambda}_{t}^{i}\,dt,\\
d\mu_{t}^{N} & =\sum_{i}\frac{\sigma_{t^{-}}^{2}}{\sigma_{t^{-}}^{2}+\alpha_{i}^{2}}\left(\theta_{i}-\mu_{t^{-}}\right)dN_{t}^{i},\\
d\sigma_{t}^{2,N} & =-\sum_{i}\frac{\sigma_{t^{-}}^{2}}{\sigma_{t^{-}}^{2}+\alpha_{i}^{2}}\sigma_{t^{-}}^{2}\,dN_{t}^{i},\label{eq:var-N-gaussian-finite-1d}\\
d\sigma_{t}^{-2,N} & =\sum_{i}\alpha_{i}^{-2}\,dN_{t}^{i}\\
\hat{\lambda}_{t}^{i} & =h_{i}\sqrt{2\pi\alpha_{i}^{2}}\mathcal{N}\left(\mu_{t};\theta_{i},\sigma_{t}^{2}+\alpha_{i}^{2}\right).\label{eq:self-rate-gaussian-finite-1d}
\end{align}
\end{subequations}Figure \ref{filtering-example} illustrates the
filter (\ref{eq:filtering-guassian-1d}) in a one-dimensional example.

\begin{figure}
\includegraphics[width=1\columnwidth]{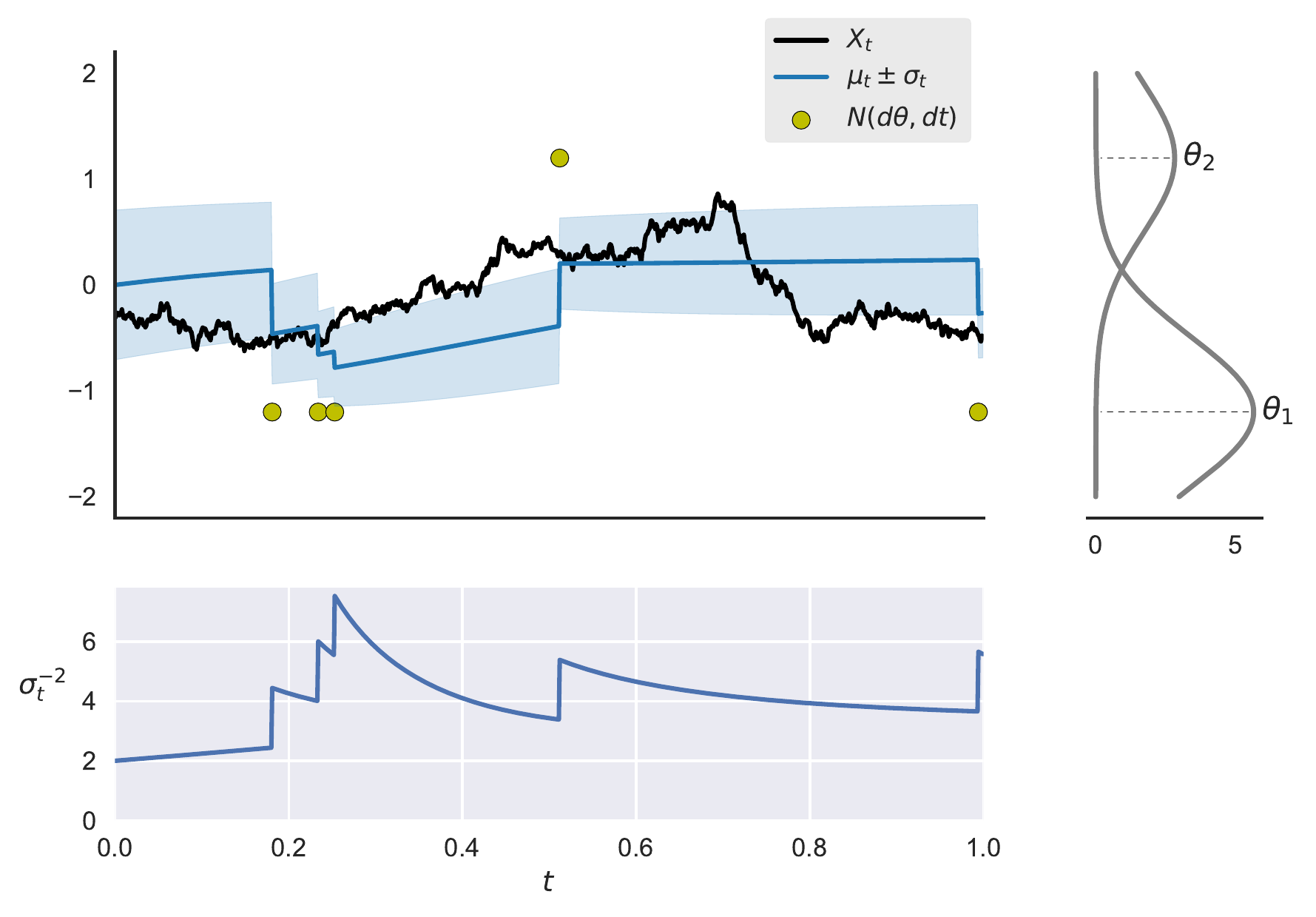}\caption{Filtering example with two sensory neurons. A dynamic state $dX_{t}=-X_{t}dt+dW_{t}$
is observed by two Gaussian neurons (\ref{eq:gauss-tc}), with preferred
stimuli $\theta_{1}=-1.2,\theta_{2}=1.2$, and filtered according
to (\ref{eq:p-linear}),(\ref{eq:filtering-finite}). Each dot correspond
to a spike with the vertical location indicating the neuron's preferred
stimulus $\theta$, equal to $\pm1.2$ in this case. The approximate
posterior mean $\mu_{t}$ is shown in blue, with the posterior standard
deviation $\sigma_{t}$ in shaded areas. The curves to the right of
the graph show the tuning functions of the two neurons. The bottom
graph shows the posterior precision $\sigma_{t}^{-2}$. Neuron parameters
are $\theta_{1}=-1.2,\theta_{2}=1.2,h_{1}=10,h_{2}=5,H_{i}=1,R_{i}^{-1}=\alpha_{i}^{2}=0.5\:\left(i=1,2\right)$.
The process and the filter were both initialized from the steady-state
distribution of the dynamics, which is $\mathcal{N}\left(0,0.5\right)$.
The dynamics were discretized with time step $\Delta t=10^{-3}$. }
\label{filtering-example}
\end{figure}

\subsubsection{Interpretation\label{subsec:Interpretation}}

To gain some insight into the filtering equations, we consider the
discontinuous updates (\ref{eq:mean-N-gaussian-finite})-(\ref{eq:var-N-gaussian-finite})
and continuous updates (\ref{eq:mean-c-gaussian-finite})-(\ref{var-c-gaussian-finite})
in some special cases, in reference to the example presented in Figure
\ref{filtering-example}.

\paragraph{Discontinuous updates}

Consider the case $H_{i}=I$. As seen from the discontinuous update
equations (\ref{eq:mean-N-gaussian-finite})-(\ref{eq:var-N-gaussian-finite}),
when the $i$th neuron spikes, the posterior mean moves towards its
preferred location $\theta_{i}$, and the posterior variance decreases
(in the sense that $\Sigma_{t^{+}}-\Sigma_{t^{-}}$ is negative definite),
as seen in Figure \ref{filtering-example}. Neither update depends
on $h_{i}$. 

For general $H_{i}\in\mathbb{R}^{m\times n}$ of full row rank, let
$H_{i}^{\mathrm{r}}$ be any right inverse of $H_{i}$ and $\bar{\theta}_{i}=H_{i}^{\mathrm{r}}\theta_{i}$.
Note that $H_{i}$ projects the state $X_{t}$ to ``perceptual coordinates''
employed by the $i$th neuron; thus $\bar{\theta}_{i}$ may be interpreted
as the tuning function center in state coordinate, whereas $\theta_{i}$
is in the neuron's perceptual coordinates. We may rewrite (\ref{eq:gauss-tc})
in state coordinates as
\[
\lambda^{i}\left(x\right)=h_{i}\exp\left(-\frac{1}{2}\left\Vert x-\bar{\theta}_{i}\right\Vert _{H_{i}\transpose R_{i}H_{i}}^{2}\right).
\]
Now, the updates for a spike of neuron $i$ at time $t$ can be written
more intuitively (see appendix \ref{sec:Derivation}) as
\begin{align}
\mu_{t^{+}} & =\left(\Sigma_{t^{-}}^{-1}+H_{i}\transpose R_{i}H_{i}\right)^{-1}\left(\Sigma_{t^{-}}^{-1}\mu_{t^{-}}+H_{i}\transpose R_{i}\theta_{i}\right)\label{eq:mean-spike}\\
 & =\left(\Sigma_{t^{-}}^{-1}+H_{i}\transpose R_{i}H_{i}\right)^{-1}\left(\Sigma_{t^{-}}^{-1}\mu_{t^{-}}+H_{i}\transpose R_{i}H_{i}\bar{\theta}_{i}\right)\nonumber \\
\Sigma_{t^{+}}^{-1} & =\Sigma_{t^{-}}^{-1}+H_{i}\transpose R_{i}H_{i}\nonumber 
\end{align}
Thus the new posterior mean is a weighted average of the pre-spike
posterior mean and the preferred stimulus in state coordinates. The
posterior precision $\Sigma_{t}^{-1}$ increases by $H_{i}\transpose R_{i}H_{i}$
which is the tuning function precision matrix in state coordinates.
This may be observed in Figure \ref{filtering-example}, where the
posterior precision increases at each spike time by the fixed amount
$R_{i}=\alpha_{i}^{-2}=2$.

\paragraph{Continuous updates}

The continuous mean update equation (\ref{eq:mean-c-gaussian-finite}),
contributing between spiking events, also admits an intuitive interpretation,
in the case where all neurons share the same shape matrices $H_{i}=H,R_{i}=R$.
In this case, the equation reads
\[
d\mu_{t}^{\mathrm{c}}=\Sigma_{t}H\transpose\left(R^{-1}+H\Sigma_{t}H\transpose\right)^{-1}\left(H\mu_{t}-\sum_{i}\theta_{i}\nu_{t}^{i}\right)\sum_{i}\hat{\lambda}_{t}^{i}dt,
\]
where $\nu_{t}^{i}\triangleq\hat{\lambda}_{t}^{i}/\sum_{j}\hat{\lambda}_{t}^{j}$.
The normalized rates $\nu_{t}^{i}$ may be interpreted heuristically
as the distribution of the next firing neuron's index $i$, provided
the next spike occurs immediately \citet[Section 1, Theorem T15]{Bremaud81}.
Thus, the absence of spikes drives the posterior mean away from the
expected preferred stimulus of the next spiking neuron. The strength
of this effect scales with $\sum_{i}\hat{\lambda}_{t}^{i}$, which
is the total expected rate of spikes given the firing history. This
behavior is qualitatively similar to the result obtained in \citet{BobMeiEld09}
for a finite population of neurons observing a continuous-time finite-state
Markov process, where the posterior probability between spikes concentrates
on states with lower total firing rate.

This behavior may be observed in Figure (\ref{filtering-example}):
when the posterior mean $\mu_{t}$ is near a neuron's preferred stimulus,
it moves away from it between spikes as the next spike is expected
from that neuron. Similarly, despite the symmetry of the two neurons'
preferred stimuli relative to the starting estimate $\mu_{0}$, the
posterior mean shifts at the start of the trial towards the preferred
stimulus of the second neuron, due to its lower firing rate.

The continuous variance update (\ref{eq:mean-c-gaussian-finite})
consists of the difference of two positive semidefinite terms, and
accordingly the posterior variance may increase or decrease between
spikes along various directions. In Figure (\ref{filtering-example}),
the posterior variance decreases before the first spike, and increases
between spikes afterwards.

\subsection{Continuous population approximation}

\subsubsection{Motivation}

The filtering equations (\ref{eq:mean-c-gaussian-finite})-(\ref{eq:S-finite})
implement sensory decoding for non-uniform populations. However, their
applicability to studying neural encoding in large heterogeneous populations
is limited for two closely related reasons. First, the computational
cost of the filter is linear in the number of neurons. Second, the
size of the parameter space describing the population is also linear
in the number of neurons, making optimization of large populations
computationally costly. These traits are shared with other filters
designed for heterogeneous populations, namely \citet{EdenBrown2008,BobMeiEld09,Twum-Danso2001}.

To reduce the parameter space and simplify the filter, we approximate
large neural populations by an infinite continuous population, characterized
by a distribution over neuron parameters. These distributions are
described by few parameters, resulting in a filter whose complexity
does not grow with the population size, and allowing optimization
of population parameters rather than individual tuning functions.

For example, consider a homogeneous population of neurons with preferred
stimuli equally spaced on an interval, as depicted in Figure \ref{continuous-pop}(a).
If the population is large, its firing pattern statistics may be modeled
as an infinite population of neurons, with preferred stimuli uniformly
distributed on the same interval, as in Figure \ref{continuous-pop}(c).
In this continuous population model, each spike is characterized by
the preferred stimulus of the firing neuron -- which is a continuous
variable -- rather than by the neuron's index. Such a population
is parameterized by only two variables representing the endpoints
of the interval, in addition to the tuning function height and width
parameters. There is no need for a parameter representing the density
of neurons on the interval, as scaling the density of neurons is equivalent
to identical scaling of each neuron's maximum firing rate.

\subsubsection{Marked point processes as continuous population models\label{subsec:marked}}

\begin{figure}
\includegraphics[width=1\columnwidth]{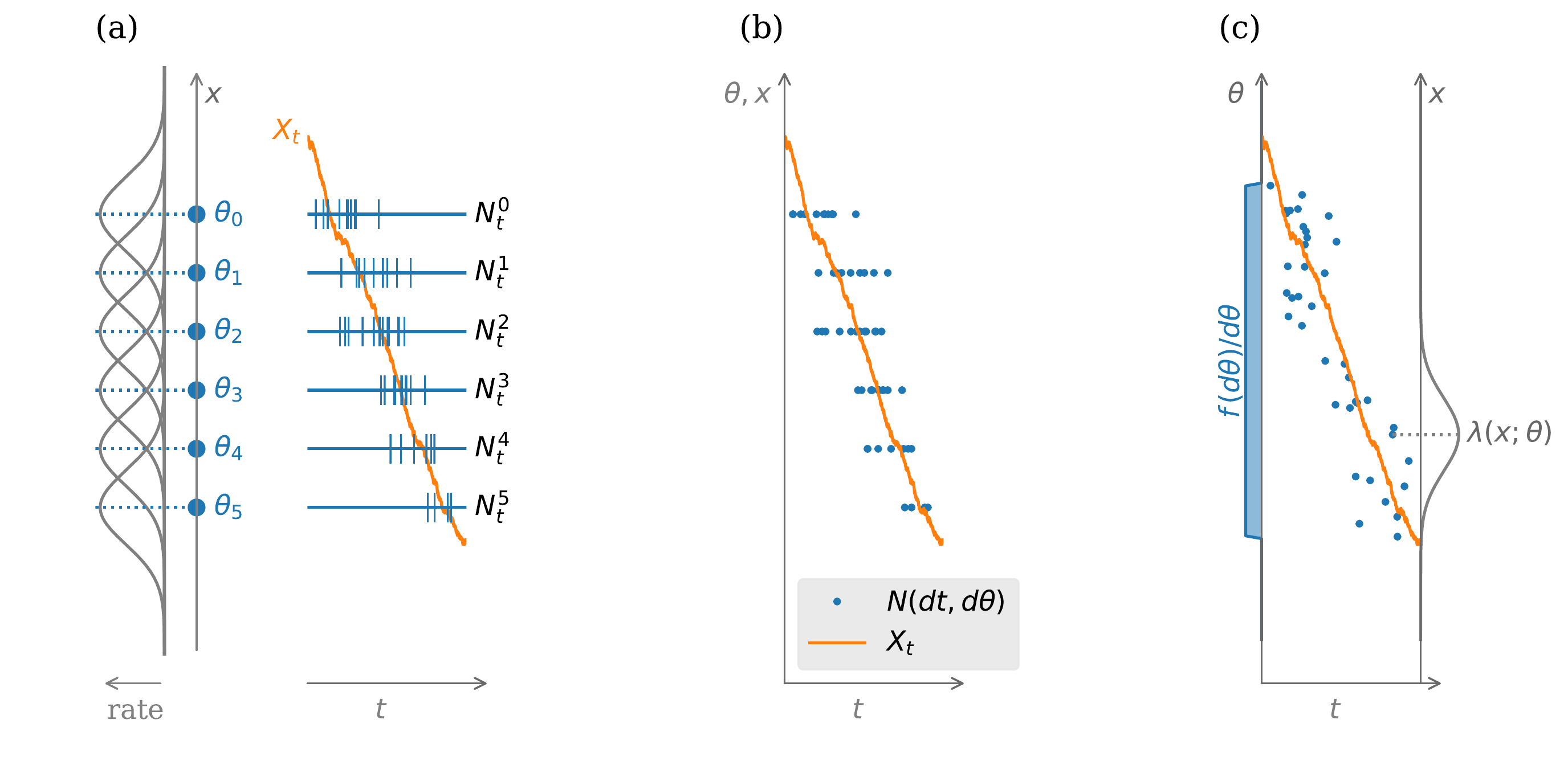}\caption{Finite (discrete) and infinite (continuous) population models. \textbf{(a)}
A finite population of six neurons with preferred stimuli $\theta_{i}$
($i=0,\ldots5$). The tuning functions are depicted on the left. The
spiking pattern of each neurons in response to the stimulus $X_{t}$
(orange line) is represented by a separate point processes $N_{t}^{i}$
(blue vertical ticks). \textbf{(b)} A representation of the same neural
response as a single marked point process. Each blue dot represents
a spike with the vertical location representing its mark $\theta$.
\textbf{(c)} A continuous population model approximating the statistics
of the same population. The approximation would be better for larger
populations. The population density $f\left(d\theta\right)/d\theta$
is depicted to the left of the plot, and the tuning function corresponding
to one of the spikes on the right.}

\label{continuous-pop}
\end{figure}
We now change the mathematical formulation and notation of our model
to accommodate parameterized continuous populations. The new formulation
is more general, and includes finite populations as a special case.
Rather than using a sequence of tuning function parameters --- such
as $(\boldsymbol{y}^{\left(i\right)})_{i=1}^{M}=(h_{i},\theta_{i},H_{i},R_{i})_{i=1}^{M}$
in the case of Gaussian neurons (\ref{eq:gauss-tc}) --- we characterize
the population by a measure $f\left(d\boldsymbol{y}\right)$, where
$f\left(Y\right)$ counts the neurons with parameters in a set $Y$,
up to some multiplicative constant. A continuous measure $f$ may
be used to approximate a large population. Accordingly, we write $\lambda\left(x;\boldsymbol{y}\right)$
for the tuning function of a neuron with parameters \textbf{$\boldsymbol{y}$},
in lieu of the previous notation $\lambda^{i}\left(x\right)$. For
example, the Gaussian case (\ref{eq:gauss-tc}) takes the form
\begin{equation}
\lambda\left(x;\boldsymbol{y}\right)=\lambda\left(x;h,\theta,H,R\right)\triangleq h\exp\left(-\frac{1}{2}\left\Vert Hx-\theta\right\Vert _{R}^{2}\right),\label{eq:gauss-tc-y}
\end{equation}
with $\boldsymbol{y}=\left(h,\theta,H,R\right)$ the parameters of
the Gaussian tuning function.

Similarly, instead of the observation processes $\{N^{i}\}_{i=1}^{M}$
counting the spikes of each neuron, we describe the spikes of all
neurons using a single \emph{marked point process} $N$ \citet{SnyMil91},
which is a random sequence of pairs $\left(t_{k},\boldsymbol{y}_{k}\right)$,
where $t_{k}\in[0,\infty)$ is the time of the $k$th point and $\boldsymbol{y}_{k}\in\boldsymbol{Y}$
its \emph{mark. }In our case, $t_{k}$ is the $k$th spike time, and
$\boldsymbol{y}_{k}\in\mathbf{Y}$ are the parameters of the spiking
neuron. Alternatively, $N$ may be described as a random discrete
measure, where $N\left(\left[s,t\right]\times Y\right)$ is the number
of spikes in the time interval $\left[s,t\right]$ from neurons with
parameters in the set $Y$. In line with the discrete measure view,
we write, for an arbitrary function $h$,
\[
\int_{\left[s,t\right]\times Y}h\left(\tau,\boldsymbol{y}\right)N\left(d\tau,d\boldsymbol{y}\right)=\sum_{k}1\left\{ s\leq t_{k}\leq t,\boldsymbol{y}_{k}\in Y\right\} h\left(t_{k},\boldsymbol{y}_{k}\right),
\]
which is the ordinary Lebesgue integral of $h$ with respect to the
discrete measure $N$. We use the notation $N_{t}\left(Y\right)\triangleq N\left([0,t]\times Y\right)$,
and when $Y=\mathbf{Y}$ we omit it and write $N_{t}$ for the total
number of spikes up to time $t$. As before, $\mathcal{N}_{t}$ denotes
the history up to time $t$ -- here including both spike times and
marks.

Figure \ref{continuous-pop} illustrates how the activity of a neural
population may be represented as a marked point process, and how the
firing statistics are approximated by a continuous population. In
this case, a homogeneous population of neurons with equally-spaced
preferred stimuli is approximated by a continuous population with
uniformly distributed preferred stimuli.

To characterize the statistics of $N$, we first consider the finite
population case. In this case, $f=\sum_{i}\delta_{\boldsymbol{y}_{i}}$
where $\delta_{\boldsymbol{y}_{i}}$ is the point mass at $\boldsymbol{y}_{i}$,
and the rate of points with marks in a set $Y\subseteq\mathbf{Y}$
at time $t$ (conditioned on $\mathcal{N}_{t},X_{\left[0,t\right]}$)
is
\[
\sum_{i:y_{i}\in Y}\lambda\left(X_{t};\boldsymbol{y}_{i}\right)=\int_{Y}\lambda\left(X_{t};\boldsymbol{y}\right)f\left(d\boldsymbol{y}\right).
\]
In the case of a general population distribution $f$, we similarly
take the integral $\int_{Y}\lambda\left(X_{t};\boldsymbol{y}\right)f\left(d\boldsymbol{y}\right)$
as the rate (or \emph{intensity}) of points with marks in $Y$ at
time $t$ conditioned on $(\mathcal{N}_{t},X_{\left[0,t\right]})$.
The random measure $\lambda\left(X_{t};\boldsymbol{y}\right)f\left(d\boldsymbol{y}\right)$
appearing in this integral is termed the \emph{intensity kernel} of
the marked point process $N$ with respect to the history $(\mathcal{N}_{t},X_{\left[0,t\right]})$
(e.g., \citet{Bremaud81}, Chapter VIII). The dynamics of $N$ may
be described heuristically by means of the intensity kernel as 
\begin{equation}
\E\left[N\left(dt,d\boldsymbol{y}\right)|X_{\left[0,t\right]},\mathcal{N}_{t}\right]=\lambda\left(X_{t};\boldsymbol{y}\right)f\left(d\boldsymbol{y}\right)dt.\label{eq:rate-heuristic}
\end{equation}
The intensity kernel with respect to $\mathcal{N}_{t}$ alone is given
by 
\begin{equation}
\E\left[N\left(dt,d\boldsymbol{y}\right)|\mathcal{N}_{t}\right]=\hat{\lambda}_{t}\left(\boldsymbol{y}\right)f\left(d\boldsymbol{y}\right)dt,\label{eq:rate-heuristic-1}
\end{equation}
where
\[
\hat{\lambda}_{t}\left(\boldsymbol{y}\right)\triangleq\E\left[\lambda\left(X_{t};\boldsymbol{y}\right)|\mathcal{N}_{t}\right]=\E_{t}^{\mathcal{N}}\left[\lambda\left(X_{t};\boldsymbol{y}\right)\right]
\]
We denote the rate of the unmarked process $N_{t}$ with respect to
$\mathcal{N}_{t}$ (i.e., the total posterior expected firing rate)
by
\[
\hat{\lambda}_{t}^{f}\triangleq\int\hat{\lambda}_{t}\left(\boldsymbol{y}\right)f\left(d\boldsymbol{y}\right)
\]

\subsubsection{Filtering\label{subsec:cont-filtering}}

Assume Gaussian tuning functions (\ref{eq:gauss-tc-y}). Writing
$\boldsymbol{y}\triangleq\left(h,\theta,H,R\right)$, the filtering
equations (\ref{eq:mean-c-gaussian-finite})-(\ref{eq:self-rate-gaussian-finite})
take the form \begin{subequations}\label{eq:filtering-gaussian}

\begin{align}
d\mu_{t}^{\mathrm{c}} & =\int_{\mathbf{Y}}\Sigma_{t}H\transpose S_{t}^{H,R}\delta_{t}^{H,\theta}\hat{\lambda}_{t}\left(\boldsymbol{y}\right)f\left(d\boldsymbol{y}\right)dt\label{eq:mean-c-gaussian}\\
d\Sigma_{t}^{\mathrm{c}} & =\int_{\mathbf{Y}}\Sigma_{t}H\transpose\left(S_{t}^{H,R}-S_{t}^{H,R}\delta_{t}^{H,\theta}(\delta_{t}^{H,\theta})\transpose S_{t}^{H,R}\right)H\Sigma_{t}\hat{\lambda}_{t}\left(\boldsymbol{y}\right)f\left(d\boldsymbol{y}\right)dt\label{eq:var-c-gaussian}\\
d\mu_{t}^{N} & =\int_{\mathbf{Y}}\Sigma_{t^{-}}H\transpose S_{t^{-}}^{H,R}\delta_{t^{-}}^{H,\theta}N\left(dt,d\boldsymbol{y}\right)\label{eq:mean-N-gaussian}\\
d\Sigma_{t}^{N} & =-\int_{\mathbf{Y}}\Sigma_{t^{-}}H\transpose S_{t^{-}}^{H,R}H\Sigma_{t^{-}}N\left(dt,d\boldsymbol{y}\right)\label{eq:var-N-gaussian}\\
\delta_{t}^{H,\theta} & \triangleq H\mu_{t}-\theta,\label{eq:delta}\\
S_{t}^{H,R} & \triangleq\left(R^{-1}+H\transpose\Sigma_{t}H\right)^{-1},\label{eq:S}\\
\hat{\lambda}_{t}\left(\boldsymbol{y}\right) & \triangleq\hat{\lambda}_{t}\left(h,\theta,H,R\right)=h\sqrt{\frac{\left|S_{t}^{H,R}\right|}{\left|R\right|}}\exp\left(-\frac{1}{2}\left\Vert \theta-H\mu_{t}\right\Vert _{S_{t}^{H,R}}^{2}\right),\label{eq:self-rate-gaussian}
\end{align}
and the posterior precision updates (\ref{eq:prec-c-gaussian-finite})-(\ref{eq:prec-N-gaussian-finite})
become
\begin{align}
d\Sigma_{t}^{-1,\mathrm{c}} & =-\int H\transpose\left(S_{t}^{H,R}-S_{t}^{H,R}\delta_{t}^{H,\theta}(\delta_{t}^{H,\theta})\transpose S_{t}^{H,R}\right)H\hat{\lambda}_{t}\left(\boldsymbol{y}\right)f\left(d\boldsymbol{y}\right)dt\label{eq:prec-c-gaussian}\\
d\Sigma_{t}^{-1,N} & =\int_{\mathbf{Y}}H\transpose RH\,N\left(dt,d\boldsymbol{y}\right)\label{eq:prec-N-gaussian}
\end{align}
\end{subequations}The derivation of these equations, as well as filtering
equations for special cases considered in this section, are found
in appendix \ref{sec:Derivation}.

Using (\ref{eq:self-rate-gaussian}), the continuous update equations
(\ref{eq:mean-c-gaussian})-(\ref{eq:var-c-gaussian}) may be evaluated
in closed form for some specific forms of the population distribution
$f$. Note that the discontinuous update equations (\ref{eq:mean-N-gaussian})-(\ref{eq:var-N-gaussian})
do not depend on $f$, and are already in closed form. We now consider
several population distributions where the continuous updates may
be brought to closed form.

\paragraph{Single neuron}

The result for a single neuron with parameters $h,\theta,H,R$ is
trivial to obtain from (\ref{eq:mean-c-gaussian})-(\ref{eq:var-c-gaussian}),
yielding\begin{subequations}\label{eq:c-single}
\begin{align}
d\mu_{t}^{\mathrm{c}} & =\Sigma_{t}H\transpose S_{t}^{H,R}\left(H\mu_{t}-\theta\right)\hat{\lambda}_{t}^{f}dt,\label{eq:mean-c-single}\\
d\Sigma_{t}^{\mathrm{c}} & =\Sigma_{t^{-}}H\transpose\left(S_{t}^{H,R}-S_{t}^{H,R}\left(H\mu_{t}-\theta\right)\left(H\mu_{t}-\theta\right)\transpose S_{t}^{H,R}\right)\hat{\lambda}_{t}^{f}H\Sigma_{t^{-}}dt,\label{eq:var-c-single}
\end{align}
\end{subequations}where $\hat{\lambda}_{t}^{f}=\hat{\lambda}_{t}\left(\boldsymbol{y}\right)$
as given by (\ref{eq:self-rate-gaussian}), and $S_{t}^{H,R}$ is
defined in (\ref{eq:S}). 

\paragraph{Uniform population}

Here all neurons share the same height $h$ and shape matrices $H,R$,
whereas the location parameter $\theta$ covers $\mathbb{R}^{m}$
uniformly, i.e. $f\left(dh',d\theta,dH',dR'\right)=\delta_{h}\left(dh'\right)\delta_{H}\left(dH'\right)\delta_{R}\left(dR'\right)d\theta$,
where $\delta_{x}$ is a Dirac measure at $x$, i.e. $\delta_{x}\left(A\right)=1\{x\in A\}$,
and $d\theta$ indicates the Lebesgue measure in the parameter $\theta$.
A straightforward calculation from (\ref{eq:mean-c-gaussian})-(\ref{eq:var-c-gaussian})
and (\ref{eq:self-rate-gaussian}) yields
\begin{align}
d\mu_{t}^{\mathrm{c}} & =0,\quad d\Sigma_{t}^{\mathrm{c}}=0,\label{eq:mean-var-uniform}
\end{align}
in agreement with the (exact) result obtained in the uniform coding
setting of \citet{RhoSny1977}, where the filtering equations only
include the prior term and the discontinuous update term.

\paragraph{Gaussian population}

As in the uniform population case, we assume all neurons share the
same height $h$ and shape matrices $H,R$, and differ only in the
location parameter $\theta$. Abusing notation slightly, we write
$f\left(dh',d\theta,dH',dR'\right)=\delta_{h}\left(dh'\right)\delta_{H}\left(dH'\right)\delta_{R}\left(dR'\right)f\left(d\theta\right)$
where the preferred stimuli are normally distributed,
\begin{align}
f\left(d\theta\right) & =\mathcal{N}\left(\theta;c,\Sigma_{\mathrm{pop}}\right)d\theta\nonumber \\
 & =\left(2\pi\right)^{-n/2}\left|\Sigma_{\mathrm{pop}}\right|^{-1/2}\exp\left(-\frac{1}{2}\left\Vert \theta-c\right\Vert _{\Sigma_{\mathrm{pop}}^{-1}}^{2}\right)d\theta,\label{eq:gaussian-f}
\end{align}
for fixed $c\in\mathbb{R}^{m}$, and positive definite $\Sigma_{\mathrm{pop}}$.

We take $f$ to be normalized, since any scaling of $f$ may be included
in the coefficient $h$ in (\ref{eq:gauss-tc-y}), resulting in the
same point process. Thus, when used to approximate a large population,
the coefficient $h$ would be proportional to the number of neurons.

The continuous updates for this case read\begin{subequations}\label{eq:c-gauss-gauss}
\begin{align}
d\mu_{t}^{\mathrm{c}} & =\Sigma_{t}H\transpose Z_{t}^{H,R}\left(H\mu_{t}-c\right)\hat{\lambda}_{t}^{f}dt,\label{eq:mean-c-gauss-gauss}\\
d\Sigma_{t}^{\mathrm{c}} & =\Sigma_{t}H\transpose\Big(Z_{t}^{H,R}-Z_{t}^{H,R}\left(H\mu_{t}-c\right)\left(H\mu_{t}-c\right)\transpose Z_{t}^{H,R}\Big)H\Sigma_{t}\hat{\lambda}_{t}^{f}dt,\label{eq:var-c-gauss-gauss}
\end{align}
where
\begin{align}
Z_{t}^{H,R} & \triangleq\left(\Sigma_{\mathrm{pop}}+R^{-1}+H\Sigma_{t}H\transpose\right)^{-1},\label{eq:Z^HR}\\
\hat{\lambda}_{t}^{f} & \triangleq\int\hat{\lambda}_{t}\left(\boldsymbol{y}\right)f\left(d\boldsymbol{y}\right)=h\sqrt{\frac{\left|Z_{t}^{H,R}\right|}{\left|R\right|}}\exp\left(-\frac{1}{2}\left\Vert H\mu_{t}-c\right\Vert _{Z_{t}^{H,R}}^{2}\right).\label{eq:total-rate-gauss-gauss}
\end{align}
\end{subequations}These updates generalize the single-neuron updates
(\ref{eq:c-single}), with the population center $c$ taking the place
of the location parameter $\theta$, and $Z_{t}^{H,R}$ substituting
$S_{t}^{H,R}$. The single-neuron case is obtained when $\Sigma_{\mathrm{pop}}=0$.

It is illustrative to consider these equations in the scalar case
$m=n=1$, with $H=1$. Letting $\sigma_{t}^{2}=\Sigma_{t},\alpha^{2}=R^{-1},\sigma_{\mathrm{pop}}^{2}=\Sigma_{\mathrm{pop}}$
yields\begin{subequations}\label{eq:c-gauss-gauss-1d}
\begin{align}
d\mu_{t}^{\mathrm{c}} & =\frac{\sigma_{t}^{2}}{\sigma_{t}^{2}+\alpha^{2}+\sigma_{\mathrm{pop}}^{2}}\left(\mu_{t}-c\right)\hat{\lambda}_{t}^{f}dt,\label{eq:mean-c-gauss-gauss-1d}\\
d\sigma_{t}^{2,\mathrm{c}} & =\frac{\sigma_{t}^{2}}{\sigma_{t}^{2}+\alpha^{2}+\sigma_{\mathrm{pop}}^{2}}\left(1-\frac{\left(\mu_{t}-c\right)^{2}}{\sigma_{t}^{2}+\alpha^{2}+\sigma_{\mathrm{pop}}^{2}}\right)\sigma_{t}^{2}\hat{\lambda}_{t}^{f}dt,\label{eq:var-c-gauss-gauss-1d}
\end{align}
where 
\[
\hat{\lambda}_{t}^{f}=h\sqrt{2\pi\alpha^{2}}\mathcal{N}\left(\mu_{t};c,\sigma_{t}^{2}+\alpha^{2}+\sigma_{\mathrm{pop}}^{2}\right).
\]
\end{subequations}

Figure \ref{gaussian} demonstrates the continuous update terms (\ref{eq:c-gauss-gauss-1d})
as a function of the current mean estimate $\mu_{t}$, for various
values of the population variance $\sigma_{\mathrm{pop}}^{2}$, including
the case of a single neuron, $\sigma_{\mathrm{pop}}^{2}=0$. The continuous
update term $d\mu_{t}^{\mathrm{c}}$ pushes the posterior mean $\mu_{t}$
away from the population center $c$ in the absence of spikes. This
effect weakens as $\left|\mu_{t}-c\right|$ grows due to the factor
$\hat{\lambda}_{t}^{f}$, consistent with the idea that far from $c$,
the lack of events is less surprising, hence less informative. The
continuous variance update term $d\sigma_{t}^{2,\mathrm{c}}$ increases
the variance when $\mu_{t}$ is near $\theta$, otherwise decreases
it. This stands in contrast with the Kalman-Bucy filter, where the
posterior variance cannot increase when estimating a static state.

\begin{figure*}
\subfloat[Gaussian population (\ref{eq:c-gauss-gauss-1d}). Parameters are $c=0$
(population centered at the origin), $\alpha^{2}=0.25,h=1,\sigma_{t}^{2}=1$.
The case $\sigma_{\mathrm{pop}}^{2}=0$ corresponds to a single sensory
neuron.]{\includegraphics[width=0.45\columnwidth]{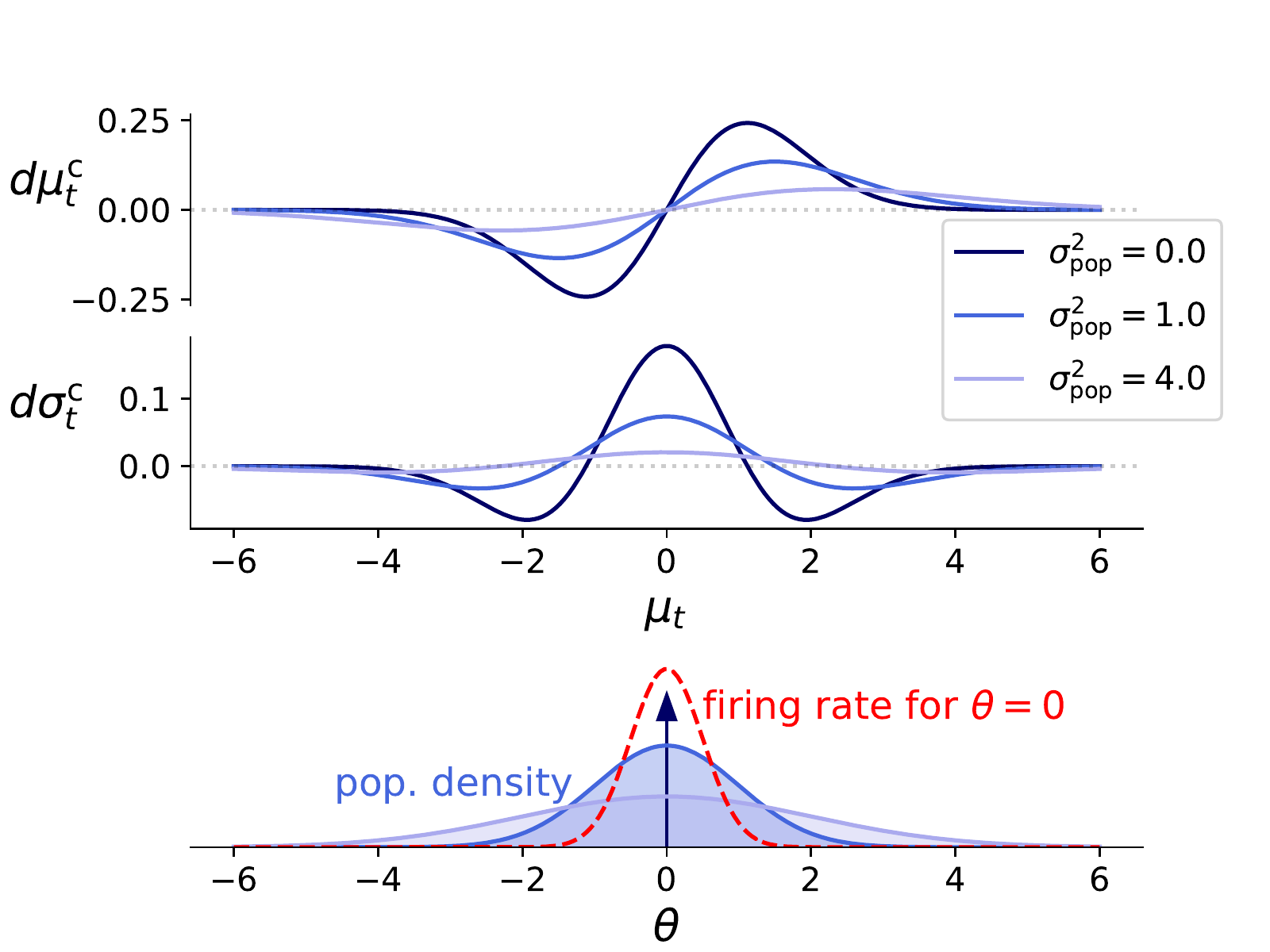}\label{gaussian}}\hfill{}\subfloat[{Uniform population on an interval (\ref{eq:c-interval}). Parameters
are $\left[a,b\right]=\left[-1,1\right],h=1,\sigma_{t}^{2}=0.01$.}]{\includegraphics[width=0.45\columnwidth]{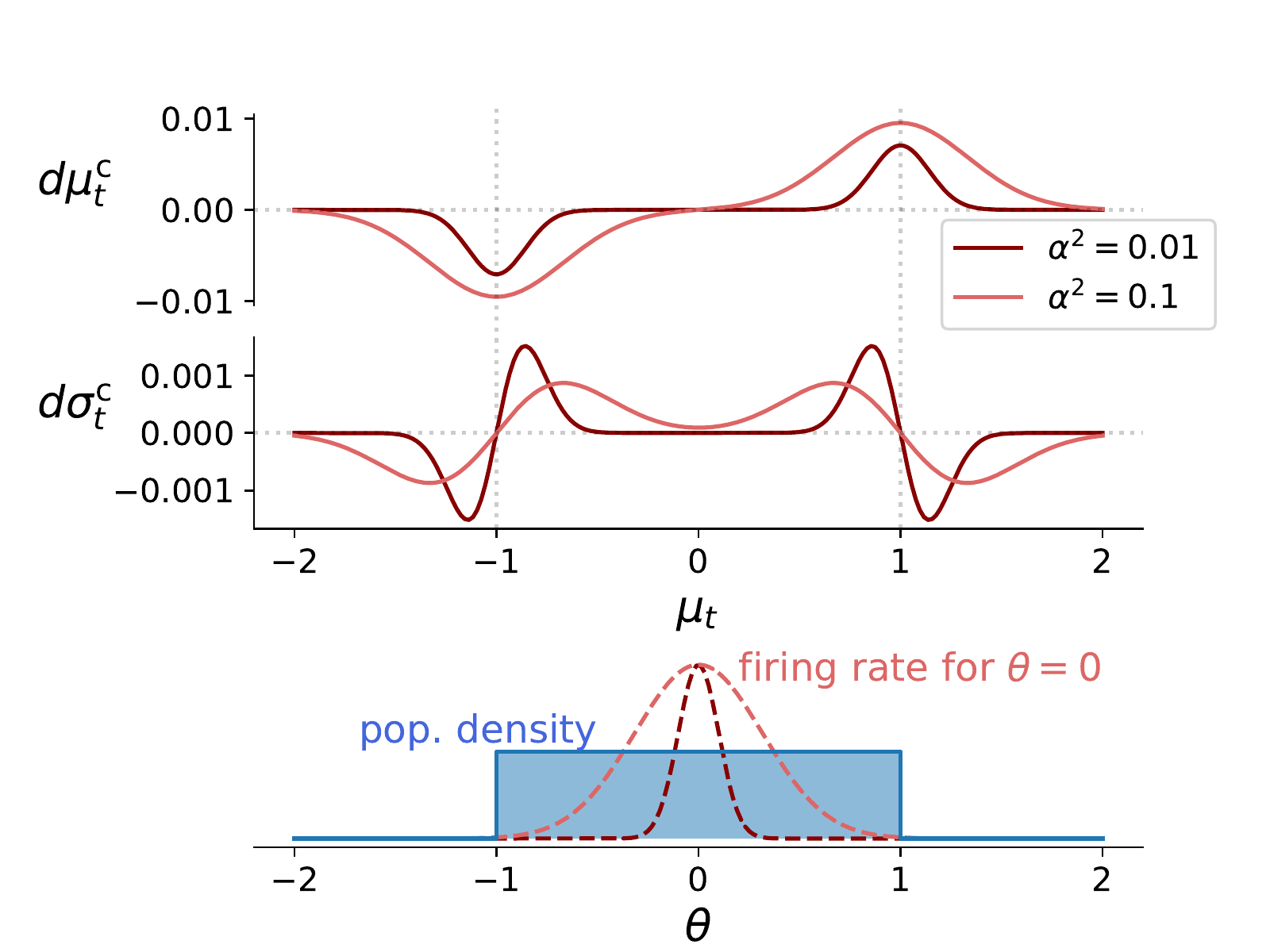}

\label{interval}}\caption{Continuous update terms as a function of the current posterior mean
estimate, for a 1-d state observed through a population of Gaussian
neurons (\ref{eq:gauss-tc}). The population density is Gaussian on
the left plot, and uniform on the interval $\left[-1,1\right]$ on
the right plot. The bottom plots shows the population density $f\left(d\theta\right)/d\theta$
and a tuning function $\lambda\left(x;\theta\right)$ for $\theta=0$.}
\label{continuous-terms}
\end{figure*}

\paragraph{Uniform population on an interval}

In this case we assume a scalar state, $n=m=1$, and 
\begin{align}
f\left(d\theta\right) & =1_{\left[a,b\right]}\left(\theta\right)d\theta,\label{eq:interval}
\end{align}
where similarly to the Gaussian population case, $h$ and $R$ are
fixed. Unlike the Gaussian case, here we find it more convenient not
to normalize the distribution. Since the state is assumed to be scalar,
let $\sigma_{t}^{2}=\Sigma_{t},\alpha^{2}=R^{-1}$. The continuous
updates for this case are\begin{subequations}\label{eq:c-interval}
\begin{align}
d\mu_{t}^{\mathrm{c}} & =h\sqrt{2\pi\alpha^{2}}\sqrt{\frac{\sigma_{t}^{2}}{\sigma_{t}^{2}+\alpha^{2}}}\left(\phi\left(b'_{t}\right)-\phi\left(a'_{t}\right)\right)\sigma_{t}dt\label{eq:mean-c-interval}\\
d\sigma_{t}^{2,\mathrm{c}} & =h\sqrt{2\pi\alpha^{2}}\frac{\sigma_{t}^{2}}{\sigma_{t}^{2}+\alpha^{2}}\left(b'_{t}\phi\left(b'_{t}\right)-a'_{t}\phi\left(a'_{t}\right)\right)\sigma_{t}^{2}dt\label{eq:var-c-interval}
\end{align}
where
\[
a'_{t}\triangleq\frac{a-\mu_{t}}{\sqrt{\sigma_{t}^{2}+\alpha^{2}}}\,,\quad b'_{t}\triangleq\frac{b-\mu_{t}}{\sqrt{\sigma_{t}^{2}+\alpha^{2}}}\,,\quad\phi\left(x\right)\triangleq\mathcal{N}\left(x;0,1\right),
\]
and the posterior rates are given by
\begin{align*}
\hat{\lambda}_{t}\left(\theta\right) & =h\sqrt{2\pi\alpha^{2}}\mathcal{N}\left(\theta;\mu_{t},\alpha^{2}+\sigma_{t}^{2}\right),\\
\hat{\lambda}_{t}^{f} & =h\sqrt{2\pi\alpha^{2}}\int_{a'_{t}}^{b'_{t}}\phi.
\end{align*}
\end{subequations}

Figure \ref{interval} demonstrates the continuous update terms (\ref{eq:c-interval})
as a function of the current mean estimate $\mu_{t}$. When the mean
estimate is around an endpoint of the interval, the mean update $\mu_{t}^{\mathrm{c}}$
pushes the posterior mean outside the interval in the absence of spikes.
The posterior variance $\sigma_{t}^{2}$ decreases outside the interval,
where the absence of spikes is expected, and increases inside the
interval, where it is unexpected\footnote{This holds only approximately, when the tuning width is not too large
relative to the size of the interval. For wider tuning functions the
behavior becomes similar to the single sensor case.}. When the posterior mean is not near the interval endpoints, the
updates are near zero, consistently with the uniform population case
(\ref{eq:mean-var-uniform}).

\paragraph{Finite mixtures}

Note that the continuous updates (\ref{eq:mean-c-gaussian})-(\ref{eq:var-c-gaussian})
are linear in $f$. Accordingly, if $f\left(dy\right)=\sum_{i}\alpha_{i}f_{i}\left(dy\right),$
where each $f_{i}$ is of one of the above forms, the updates are
obtained by the appropriate weighted sums of the filters derived above
for the various special forms of $f_{i}$. This form is quite general:
it includes populations where $\theta$ is distributed according to
a Gaussian mixture, as well as heterogeneous populations with finitely
many different values of the shape matrices $H,R$\@. The resulting
filter includes a term for each component of the mixture.

\section{Numerical evaluation}

\begin{figure*}
\captionsetup{position=top}\subfloat[High firing rate: $h=1000$]{\includegraphics[width=0.5\columnwidth]{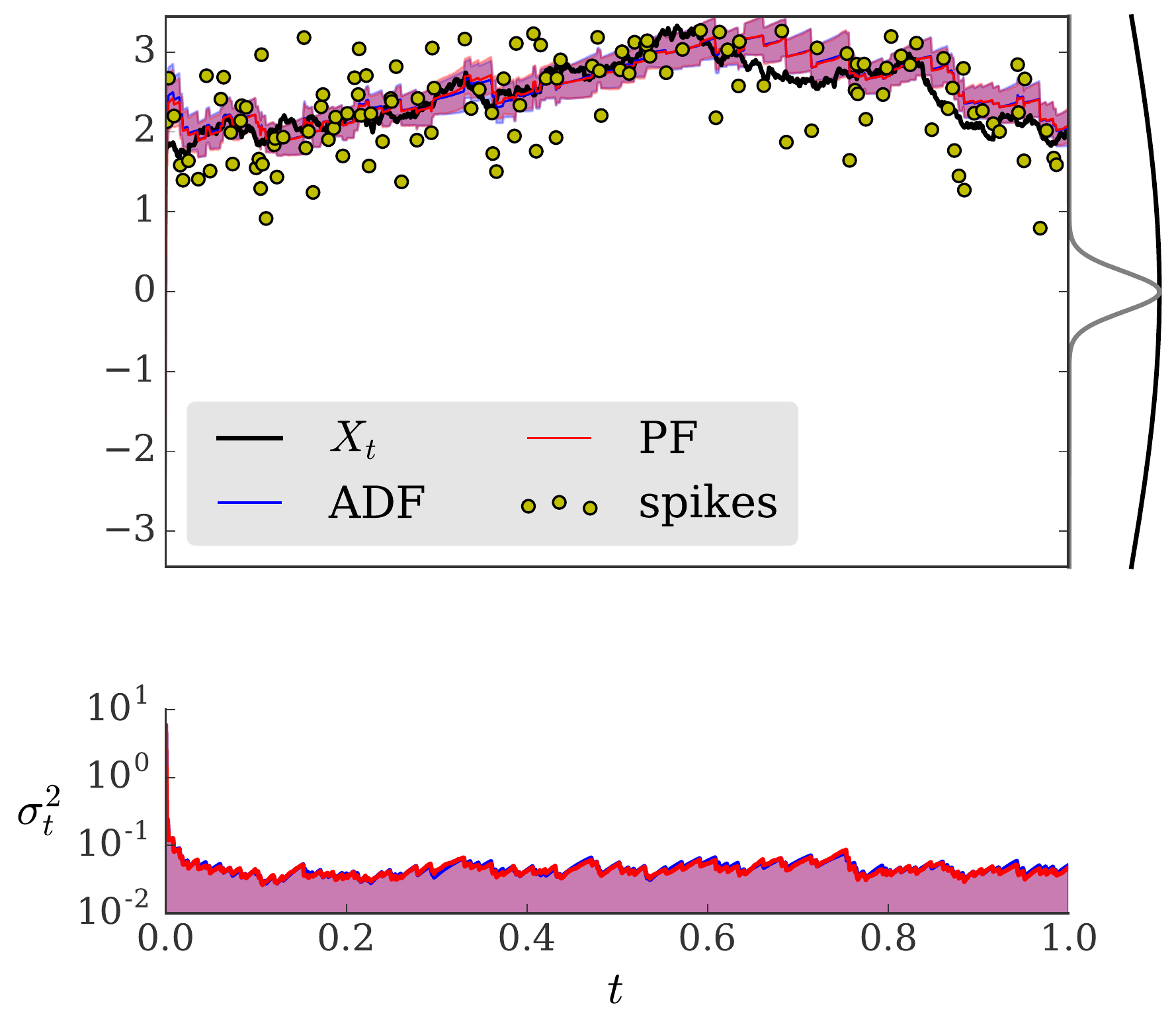}\label{high-rate}}\subfloat[Low firing rate: $h=2$]{\includegraphics[width=0.5\columnwidth]{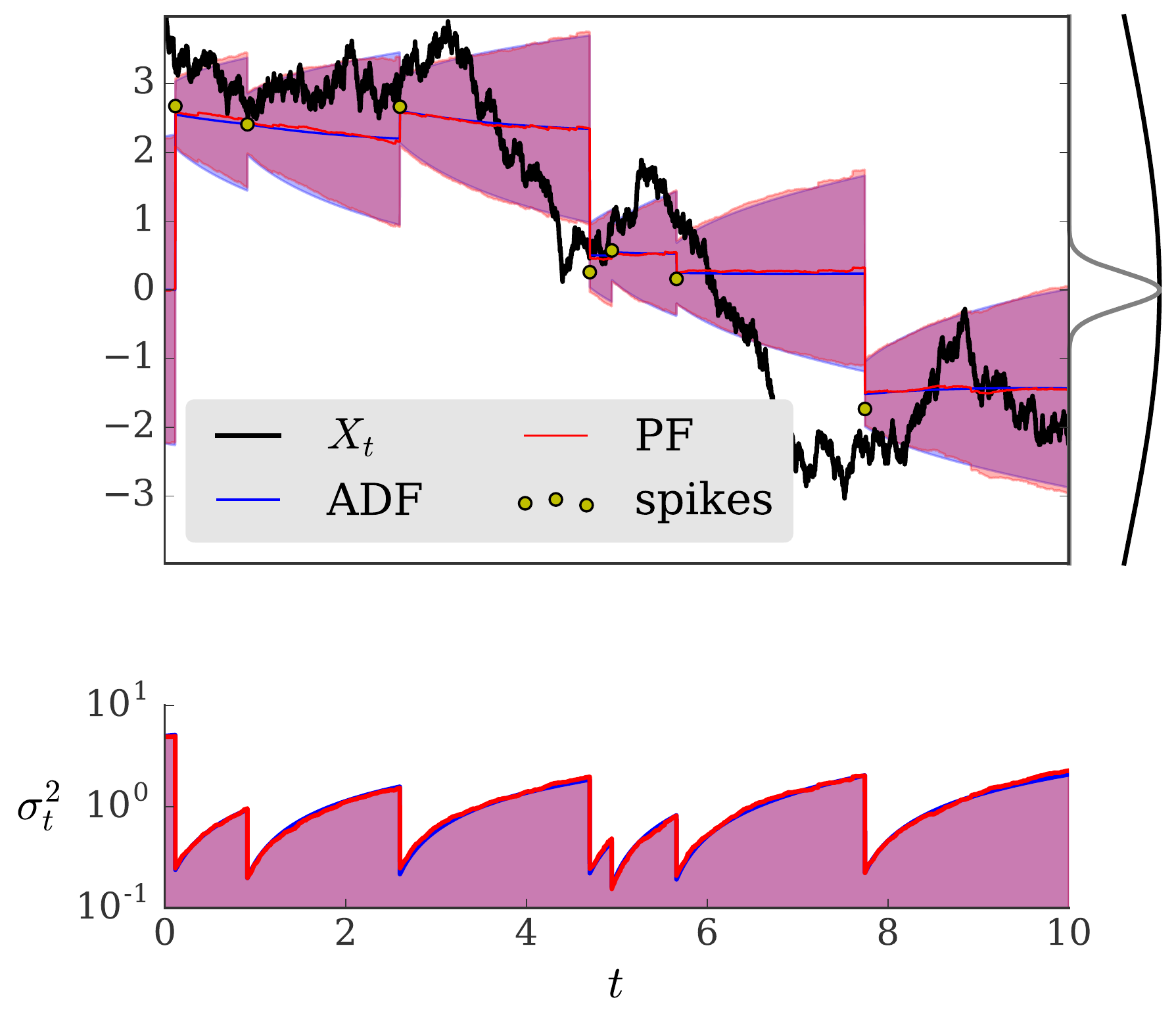}\label{low-rate}}\caption{Two examples of a linear one-dimensional process observed through
a Gaussian population (\ref{eq:gaussian-f}) of Gaussian neurons (\ref{eq:gauss-tc}),
filtered using the ADF approximation (namely, equations (\ref{eq:p-linear}),(\ref{eq:mean-c-gaussian})-(\ref{eq:var-c-gaussian}),(\ref{eq:c-gauss-gauss})),
and using a particle filter. Each dot correspond to a spike with the
vertical location indicating the neuron's preferred stimulus $\theta$.
The approximate posterior means obtained from ADF and particle filtering
are shown in blue and red respectively, with the corresponding posterior
standard deviations in shaded areas of the respective colors. The
curves to the right of the graph show the preferred stimulus density
(black), and a neuron's tuning function (\ref{eq:gauss-tc}) centered
at $\theta=0$ (gray), arbitrarily normalized to the same height for
visualization. The bottom graph shows the posterior variance. Parameters
used in both examples: $a=-0.1,d=1,H=1,\sigma_{\mathrm{pop}}^{2}=4,R^{-1}=0.25,c=0,\mu_{0}=0,\sigma_{0}^{2}=1$
(note the different extent of the time axes). The observed processes
were initialized from their steady-state distribution. The dynamics
were discretized with time step $\Delta t=10^{-3}$. The particle
filter uses 1000 particles with systematic resampling (see, e.g.,
\citet{DouJoh09}) at each time step.}

\label{1d-filtering}
\end{figure*}
Since the filter (\ref{eq:filtering-gaussian}) is based on an assumed
density approximation, its results may be inexact. We tested the accuracy
of the filter in the Gaussian population case (\ref{eq:c-gauss-gauss}),
by numerical comparison with Particle Filtering (PF) \citet{DouJoh09}.

Figure \ref{1d-filtering} shows two examples of filtering a one-dimensional
process observed through a Gaussian population (\ref{eq:gaussian-f})
of Gaussian neurons (\ref{eq:gauss-tc}), using both the ADF approximation
(\ref{eq:c-gauss-gauss}) and a Particle Filter (PF) for comparison.
See the figure caption for precise details. Figure \ref{particle-error-ks-1d}
shows the distribution of approximation errors and the deviation of
the posterior from Gaussian. The approximation errors plotted are
the relative error in the mean estimate $\epsilon_{\mu}\triangleq\left(\mu_{\mathrm{ADF}}-\mu_{\mathrm{PF}}\right)/\sigma_{\mathrm{PF}}$,
and the error in the posterior standard deviation estimate $\epsilon_{\sigma}\triangleq\left(\sigma_{\mathrm{ADF}}-\sigma_{\mathrm{PF}}\right)/\sigma_{\mathrm{PF}}$,
where $\mu_{\mathrm{ADF}},\mu_{\mathrm{PF}},\sigma_{\mathrm{ADF}},\sigma_{\mathrm{PF}}$
are, respectively, the posterior mean obtained from ADF and PF, and
the posterior standard deviation obtained from ADF and PF. The deviation
of the posterior distribution from Gaussian is quantified using the
Kolmogorov-Smirnov (KS) statistic $\sup_{x}\left|F\left(x\right)-G\left(x\right)\right|$
where $F$ is the particle distribution cdf and $G$ is the cdf of
a Gaussian matching $F$’s first two moments. For comparison, the
orange lines in Figure \ref{particle-error-ks-1d} show the distribution
of this KS statistic under the hypothesis that the particles are drawn
independently from a Gaussian, which is known as the Lilliefors distribution
(see \citet{Lilliefors1967}). As seen in the figure, the Gaussian
posterior assumption underlying the ADF approximation is quite accurate
despite the fact that the population is non-uniform. Accordingly,
approximation errors are typically of a few percent (see Table \ref{errors-particle}).

\begin{figure}
\begin{centering}
\captionsetup{position=top}\subfloat[High firing rate: $h=1000$]{\includegraphics[width=0.5\columnwidth]{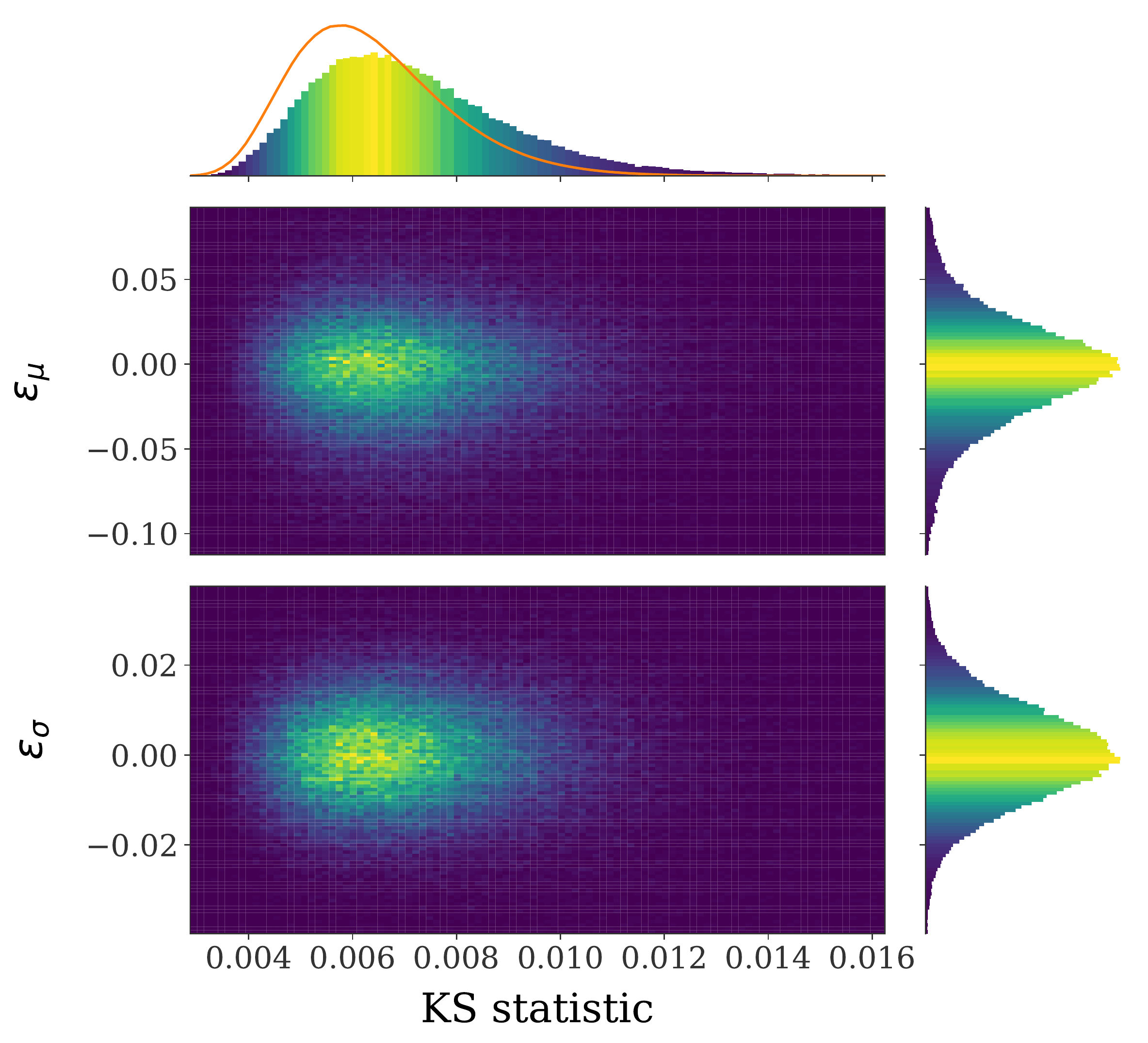}}\subfloat[Low firing rate: $h=2$]{\includegraphics[width=0.5\columnwidth]{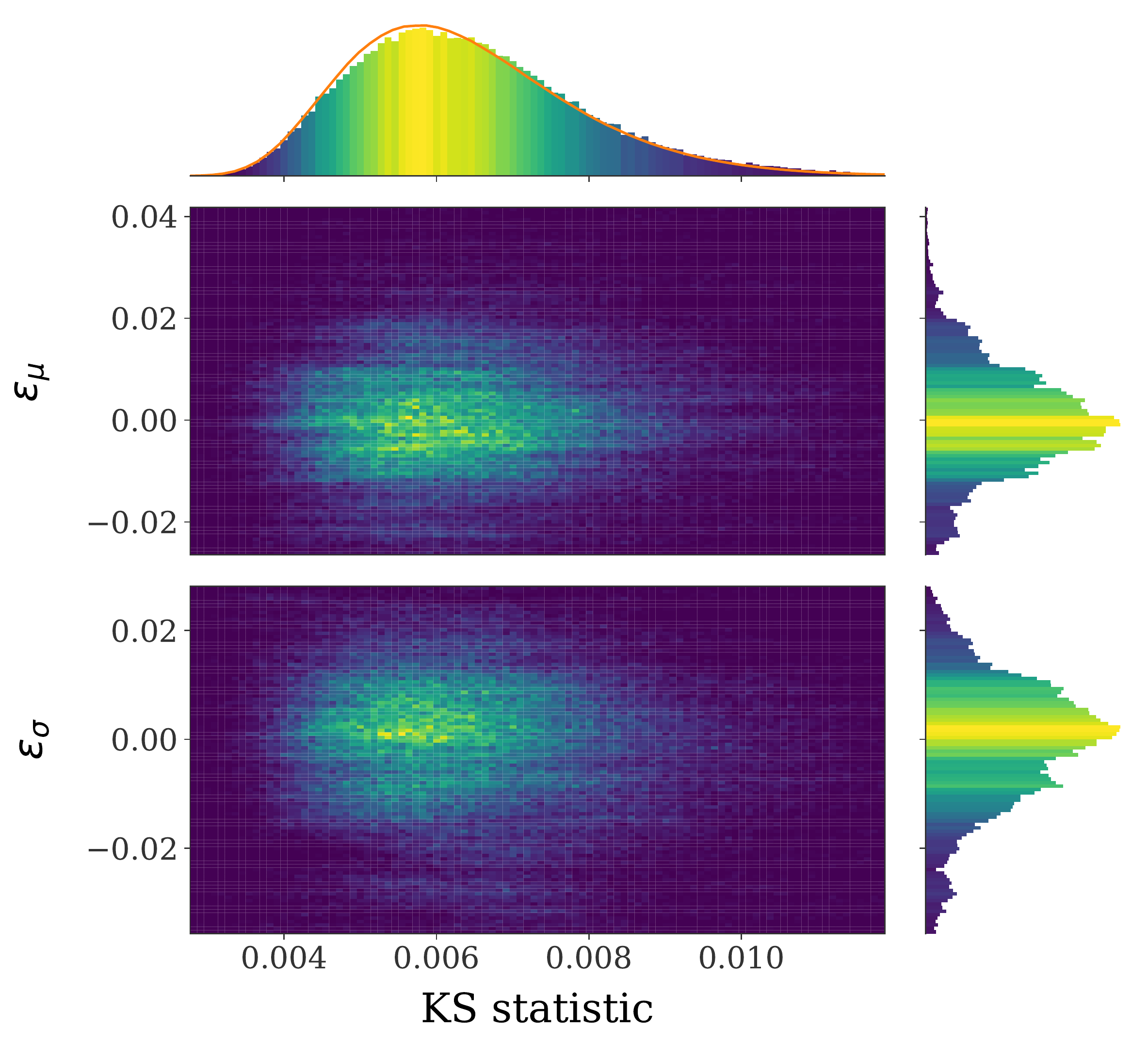}}
\par\end{centering}
\caption{Two-dimensional histograms of approximation errors (relative to particle
filter) vs. KS statistic of particle distribution, with color indicating
count of time steps falling within each two-dimensional bin. The KS
statistic is plotted against the estimation errors $\epsilon_{\mu}$,
$\epsilon_{\sigma}$ (see main text). For clearer visualization, trials
were omitted from the 2d histograms where the respective errors are
below their 0.5-percentile value or over their 99.5-percentile value,
as well as trials where the KS statistic is over its 99.5-percentile
value. The one-dimensional histograms above and to the right of each
plot show the distribution of the KS statistic, of $\epsilon_{\mu}$
and of $\epsilon_{\sigma}$. Orange lines represent the theoretical
distribution of the KS statistic for Gaussian data (Lilliefors distribution),
obtained via Monte Carlo simulation. Data was collected from 100 trials
of 1000 time steps with step size $\Delta t=10^{-3}$, using 10,000
particles with systematic resampling at each time step.}
\label{particle-error-ks-1d}
\end{figure}
\begin{figure}
\captionsetup{position=top}\subfloat[Position]{\includegraphics[height=0.42\columnwidth]{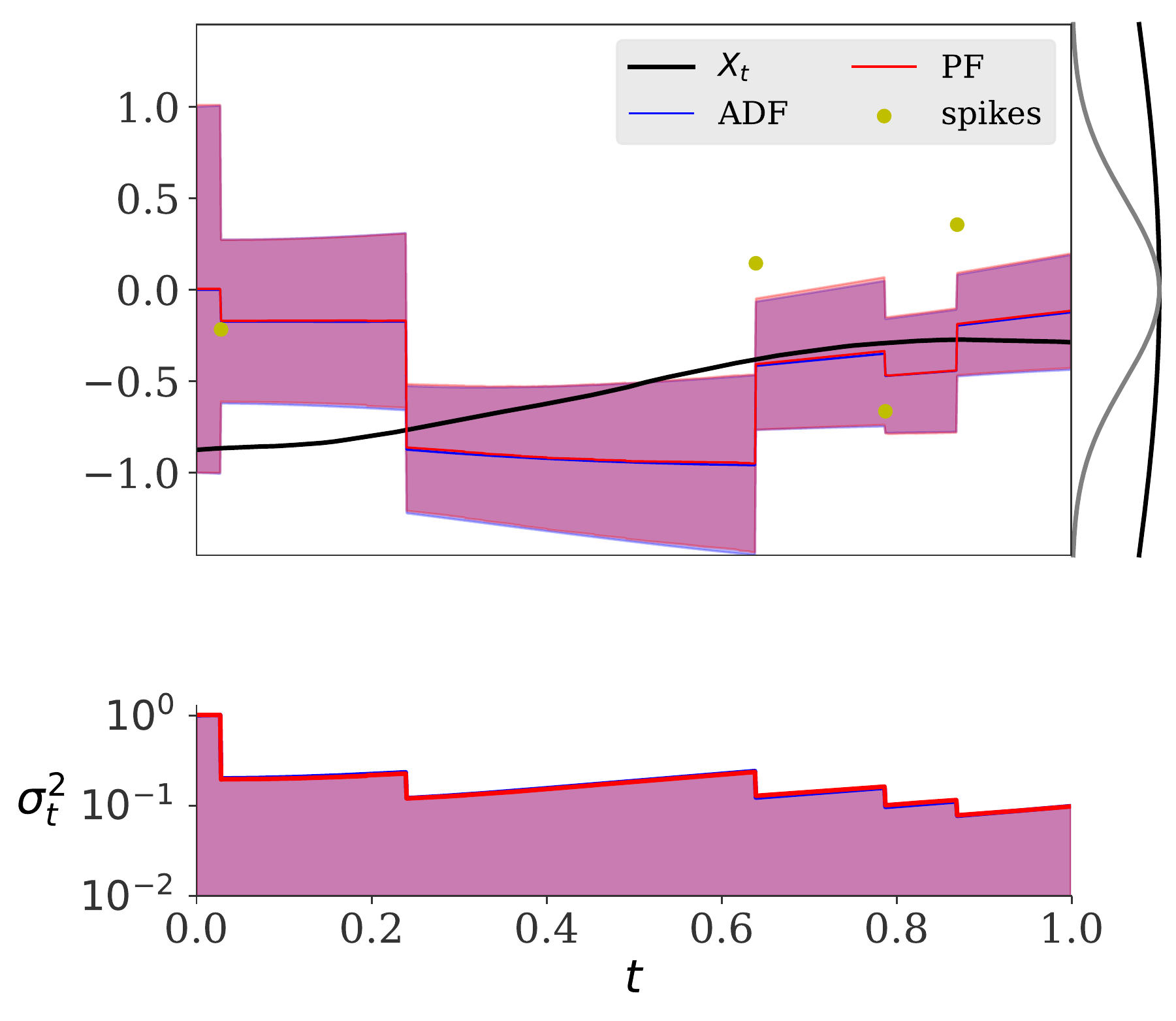}

}\hfill{}\subfloat[Velocity]{\includegraphics[height=0.42\columnwidth]{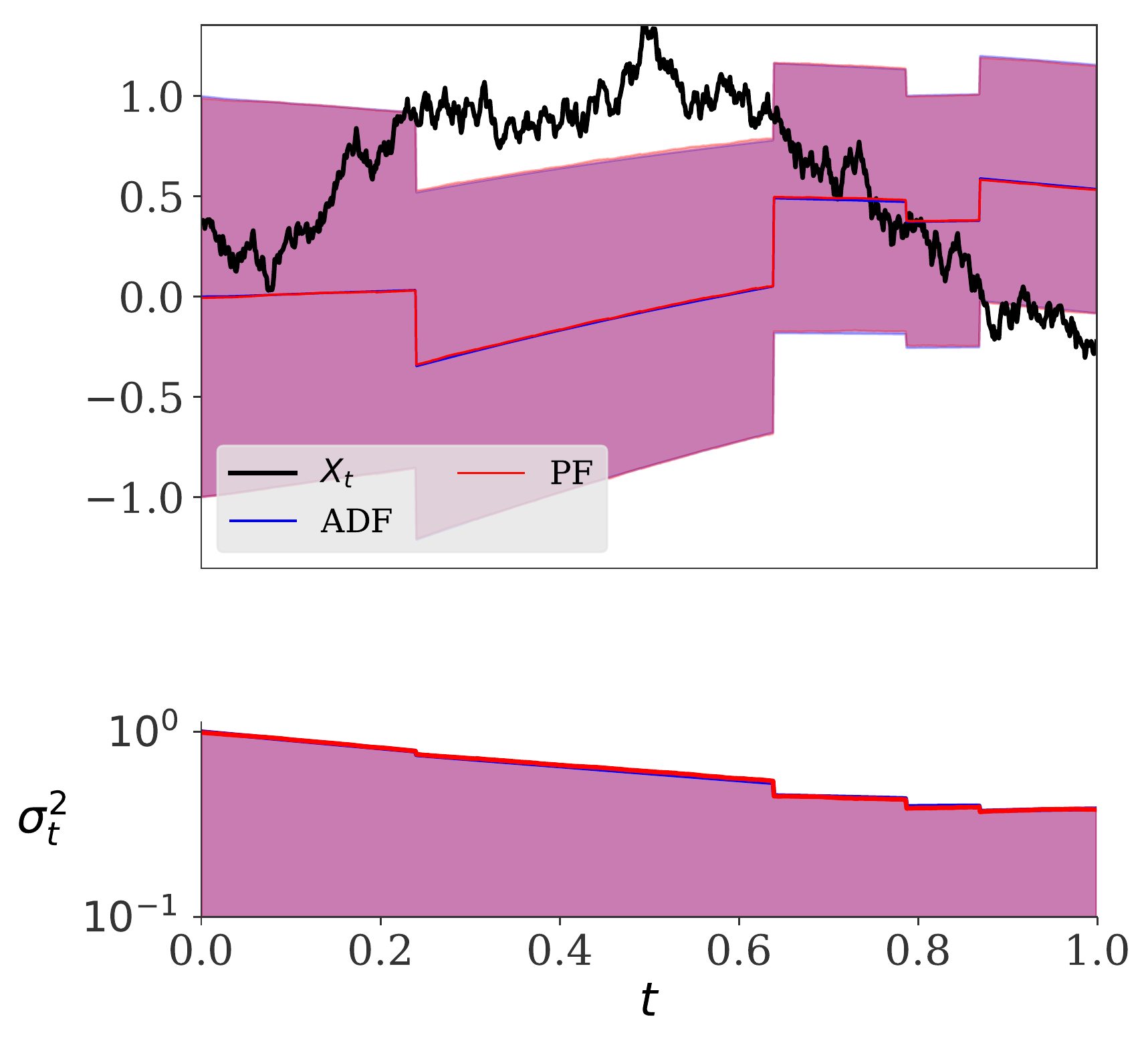}

}\caption{An example two-dimensional process (\ref{eq:2d-particle}) observed
through a Gaussian population (\ref{eq:gaussian-f}) of Gaussian neurons
(\ref{eq:gauss-tc}) (see Figure \ref{1d-filtering}). The sensory
population is Gaussian (\ref{eq:gauss-tc}),(\ref{eq:gaussian-f}),
with $H=\left[1,0\right]$, so only the particle position is directly
observed. Other sensory parameters are $\alpha^{2}=R^{-1}=0.25,h=10,\Sigma_{\mathrm{pop}}=4$.
The process was initialized from $\mathcal{N}\left([0,0],I_{2}\right)$
where $I_{2}$ is the $2\times2$ identity matrix. Other details are
as in Figure \ref{1d-filtering}.}

\label{2d-filtering}
\end{figure}
\begin{figure}
\begin{centering}
\captionsetup{position=top}\subfloat[Position]{

\includegraphics[width=0.5\columnwidth]{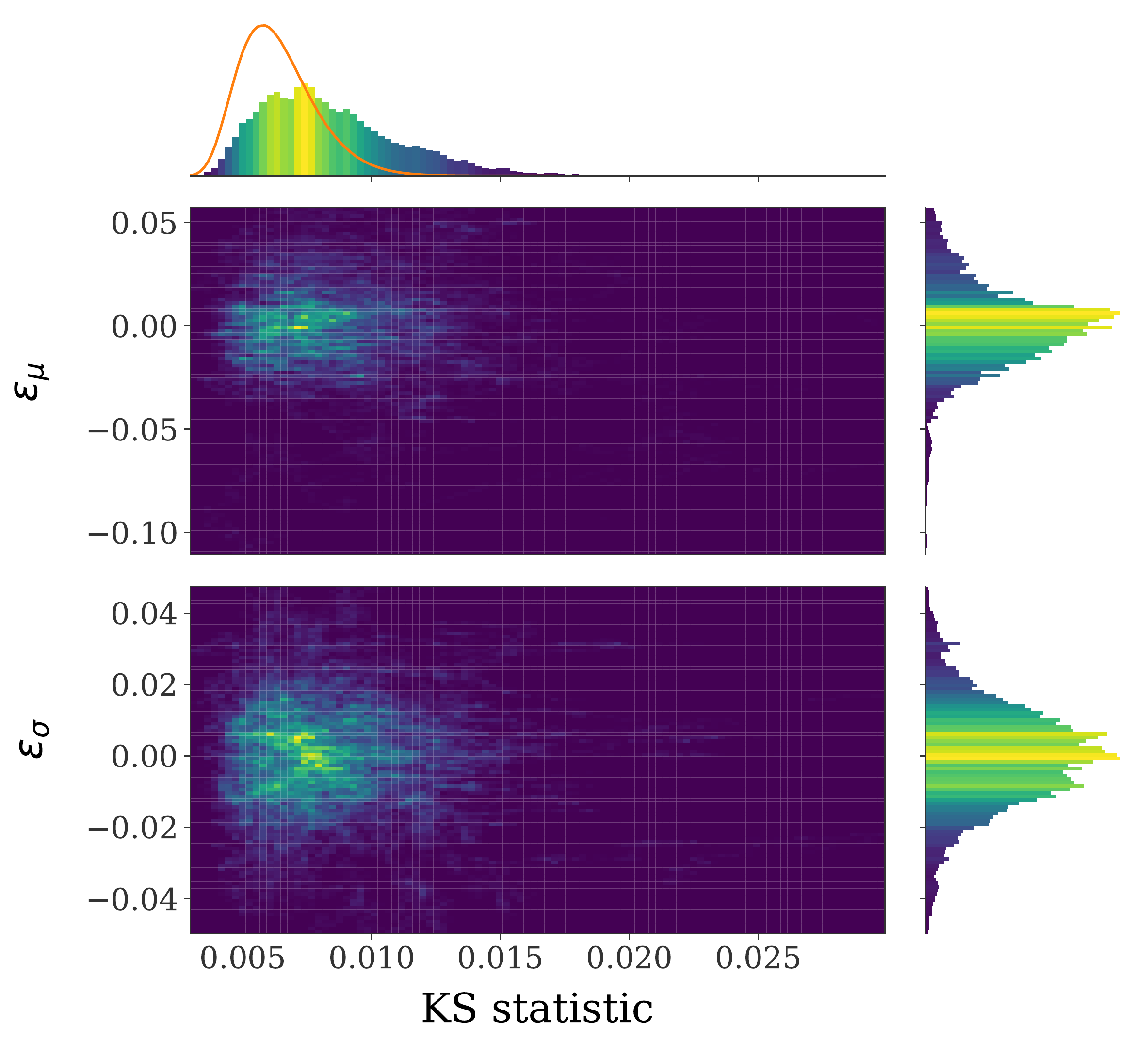}}\subfloat[Velocity]{\includegraphics[width=0.5\columnwidth]{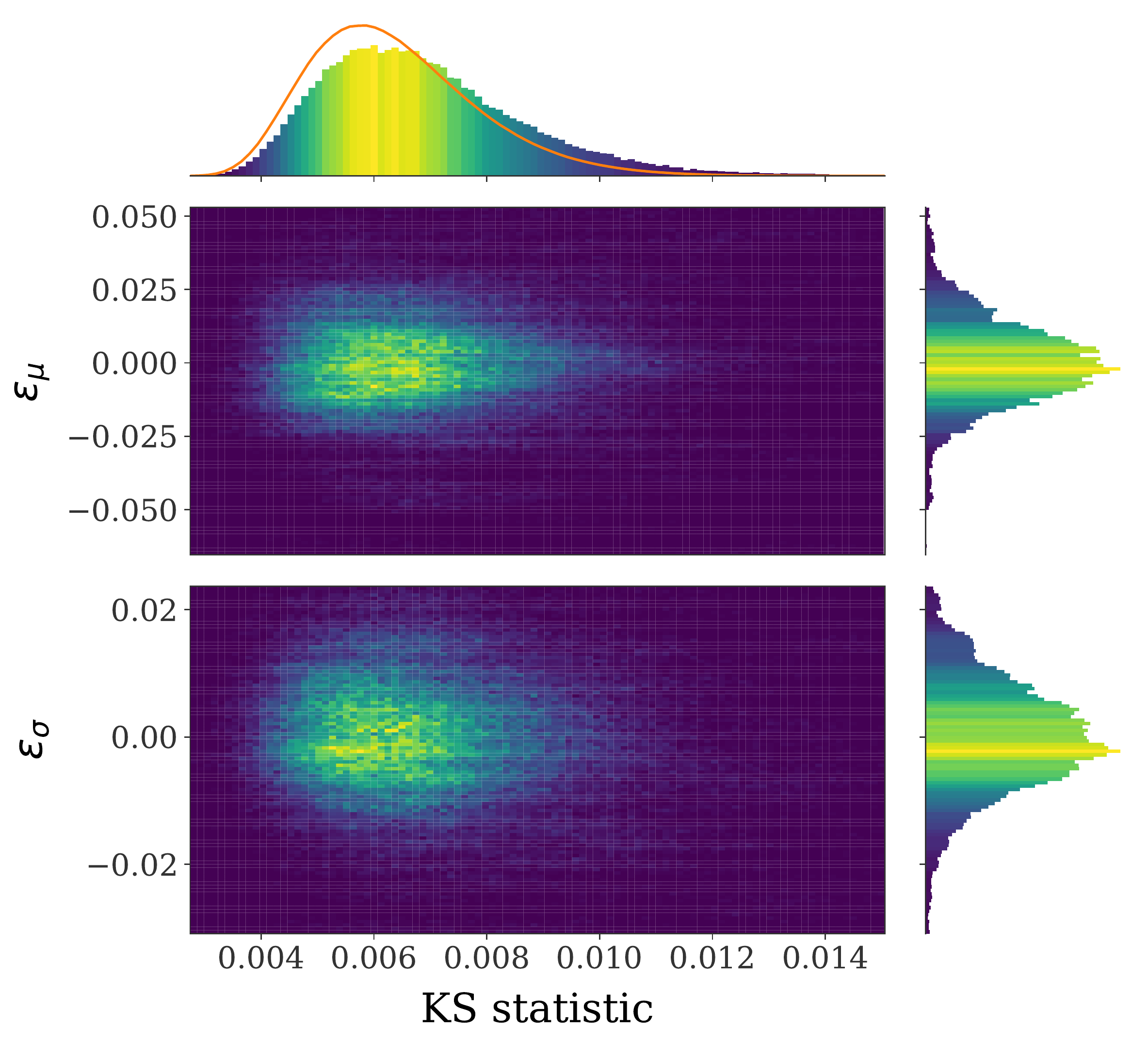}}
\par\end{centering}
\caption{Two-dimensional histograms of approximation errors (relative to particle
filter) vs. KS statistic of particle distribution in a two-dimensional
example (\ref{eq:2d-particle}). Details are as in Figure (\ref{particle-error-ks-1d}).}
\label{particle-error-ks-2d}
\end{figure}
Figure \ref{2d-filtering} shows an example of filtering a two-dimensional
process with dynamics
\begin{equation}
dX_{t}=\left(\begin{array}{cc}
0 & 1\\
0 & -0.1
\end{array}\right)X_{t}dt+\left(\begin{array}{c}
0\\
1
\end{array}\right)dW_{t},\label{eq:2d-particle}
\end{equation}
which may be interpreted as the position and velocity of a particle
subject to friction proportional to its velocity as well as ``Gaussian
white noise'' external force. In this example, only the position
is directly observed by the neural population. Additional details
are given in the figure caption. The distribution of approximation
errors and the KS statistic in this two-dimensional setting is shown
in Figure \ref{particle-error-ks-2d}. The approximation errors plotted
are $\epsilon_{\mu}$ and $\epsilon_{\sigma}$ as defined above; both
these errors and the KS statistic are computed separately for each
dimension.

Statistics of the estimation error distribution for these examples
are provided in Table \ref{errors-particle}. 

\begin{table}
\vspace{-10bp}
\caption{Approximation errors relative to PF in the examples of Figures \ref{1d-filtering}
and \ref{2d-filtering}.}
\vspace{-15bp}

\subfloat[1d example (Figure \ref{1d-filtering})]{

\begin{tabular}{cccccc}
\toprule 
 & \multicolumn{2}{c}{$h=1000$} &  & \multicolumn{2}{c}{$h=2$}\tabularnewline
\cmidrule{2-3} \cmidrule{5-6} 
 & $\epsilon_{\mu}$ & $\epsilon_{\sigma}$ &  & $\epsilon_{\mu}$ & $\epsilon_{\sigma}$\tabularnewline
\midrule
median & -0.00272 & $1.29\times10^{-4}$ &  & $-2.84\times10^{-4}$ & $2.96\times10^{-4}$\tabularnewline
\midrule 
5th perc. & -0.0601 & -0.0185 &  & -0.0184 & -0.0245\tabularnewline
\midrule 
95th perc. & 0.0482 & 0.0192 &  & 0.0186 & 0.0178\tabularnewline
\midrule 
mean & -0.00415 & $1.41\times10^{-4}$ &  & $3.34\times10^{-4}$ & $-9.35\times10^{-4}$\tabularnewline
\midrule 
std. dev. & 0.0345 & 0.0126 &  & 0.0119 & 0.0122\tabularnewline
\midrule 
med. abs. value & 0.0188 & 0.00722 &  & 0.00662 & 0.00766\tabularnewline
\midrule 
mean abs. value & 0.0251 & 0.00919 &  & 0.0086 & 0.00942\tabularnewline
\bottomrule
\end{tabular}}

\subfloat[2d example (Figure \ref{2d-filtering})]{%
\begin{tabular}{cccccc}
\toprule 
 & \multicolumn{2}{c}{position} &  & \multicolumn{2}{c}{velocity}\tabularnewline
\cmidrule{2-3} \cmidrule{5-6} 
 & $\epsilon_{\mu}$ & $\epsilon_{\sigma}$ &  & $\epsilon_{\mu}$ & $\epsilon_{\sigma}$\tabularnewline
\midrule 
median & $-1.32\times10^{-4}$ & $2.28\times10^{-4}$ &  & $-3.37\times10^{-4}$ & $-1.53\times10^{-4}$\tabularnewline
\midrule 
5th perc. & -0.0337 & -0.0253 &  & -0.0234 & -0.0148\tabularnewline
\midrule 
95th perc. & 0.0361 & 0.0257 &  & 0.0258 & 0.0154\tabularnewline
\midrule 
mean & -0.00101 & $2.95\times10^{-5}$ &  & $1.51\times10^{-5}$ & $2.95\times10^{-5}$\tabularnewline
\midrule 
std. dev. & 0.0236 & 0.0157 &  & 0.0169 & 0.00922\tabularnewline
\midrule 
med. abs. value & 0.0115 & 0.00920 &  & 0.00908 & 0.00564\tabularnewline
\midrule 
mean abs. value & 0.0163 & 0.0118 &  & 0.0121 & 0.00711\tabularnewline
\bottomrule
\end{tabular}}

\label{errors-particle}
\end{table}

\section{Encoding\label{sec:Encoding}}

We demonstrate the use of the Assumed Density Filter in determining
optimal encoding strategies, i.e., selecting the optimal population
parameters $\phi$ (see Section \ref{sec:Problem-Overview}). To illustrate
the use of ADF for the encoding problem, we consider two simple examples.
We also use the first example as a test for the filter's robustness.
We will study optimal encoding issues in more detail in a subsequent
paper.

\subsection{Optimal encoding depends on prior variance}

Previous work using a finite neuron population and a Fisher information-based
criterion \citet{HarMcAlp04} has suggested that the optimal distribution
of preferred stimuli depends on the prior variance. When it is small
relative to the tuning width, optimal encoding is achieved by placing
all preferred stimuli at a fixed distance from the prior mean. On
the other hand, when the prior variance is large relative to the tuning
width, optimal encoding is uniform (see figure 2 in \citet{HarMcAlp04}).
These results are consistent with biological observations reported
in \citet{Brand2002} concerning the encoding of aural stimuli.

Similar results are obtained with our model, as shown in Figure \ref{harper}.
Whereas \citet{HarMcAlp04} implicitly assumed a static state in the
computation of Fisher information, we use a time-varying scalar state.
The state obeys the dynamics
\begin{equation}
dX_{t}=aX_{t}dt+s\,dW_{t}\quad\left(a<0\right),\label{eq:1d-diffusion}
\end{equation}
and is observed through a Gaussian population (\ref{eq:gaussian-f})
and filtered using the ADF approximation. In this case, optimal encoding
is interpreted as the simultaneous optimization of the population
center $c$ and the population variance $\sigma_{\mathrm{pop}}^{2}$.
The process is initialized so that it has a constant prior distribution,
its variance given by $s^{2}/\left(2\left|a\right|\right)$. In Figure
\ref{harper} (left), the steady-state prior distribution is narrow
relative to the tuning width, leading to an optimal population with
a narrow population distribution far from the origin. In Figure \ref{harper}
(right), the prior is wide relative to the tuning width, leading to
an optimal population with variance that roughly matches the prior
variance.

Note that we are optimizing only the two parameters $c,\sigma_{\mathrm{pop}}^{2}$
rather than each preferred stimulus as in \citet{HarMcAlp04}. This
mitigates the considerable additional computational cost due to simulating
the decoding process, rather than computing Fisher information. This
direct simulation, though more computationally expensive, offers two
advantages over the Fisher information-based method which is used
in \citet{HarMcAlp04} and which is prevalent in computational neuroscience.
First, the simple computation of Fisher information from tuning functions,
commonly used in the neuroscience literature, is based on the assumption
of a static state, whereas our method can be applied in a fully dynamic
context, including the presence of observation-dependent feedback.
Second, simulations of the decoding process allow for the minimization
of arbitrary criteria, including the direct minimization of posterior
variance or Mean Square Error (MSE). As discussed in the introduction,
Fisher information may be a poor proxy for the MSE for finite decoding
times.

\begin{figure*}
\includegraphics[width=0.45\columnwidth]{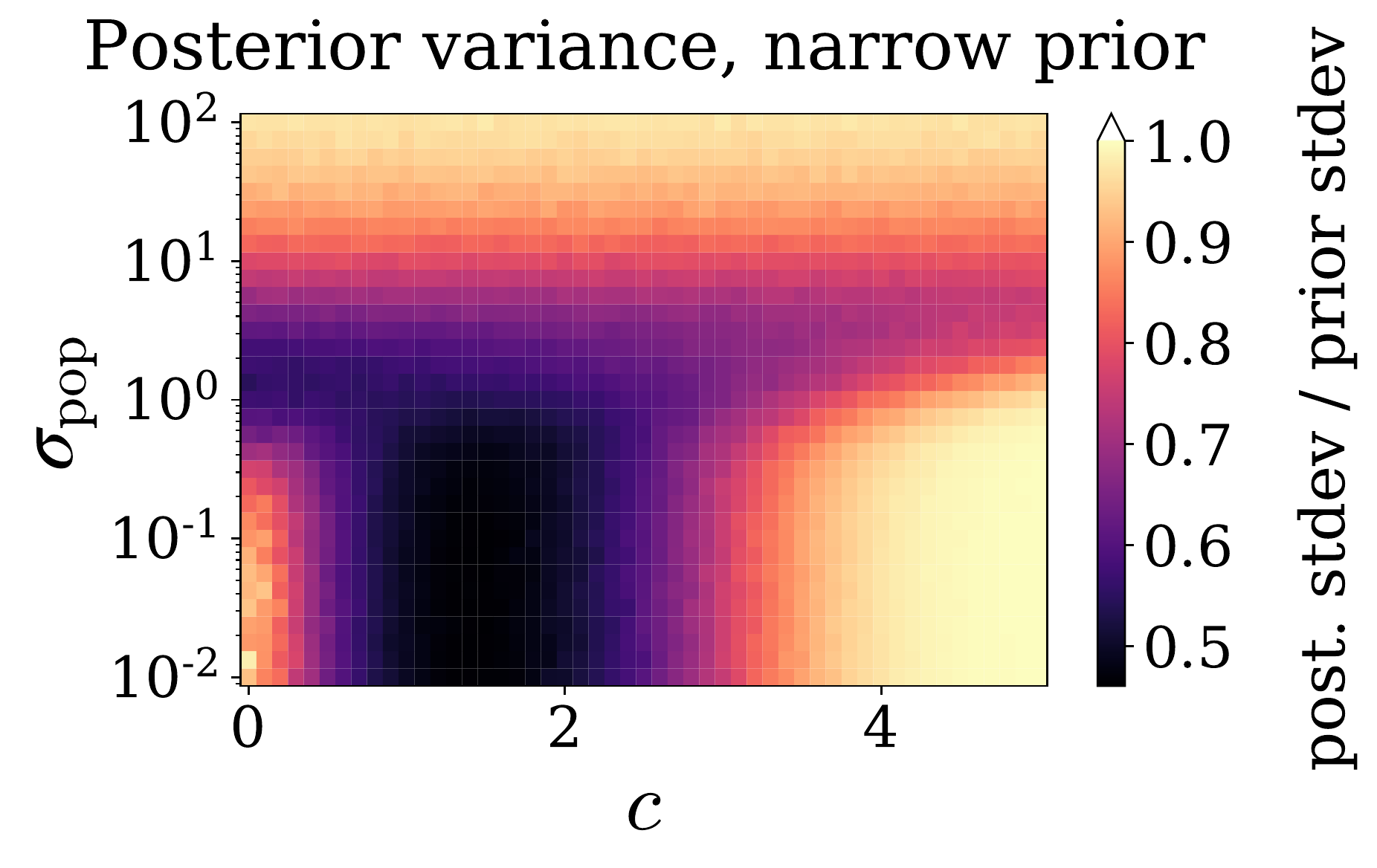}\hfill{}\includegraphics[width=0.45\columnwidth]{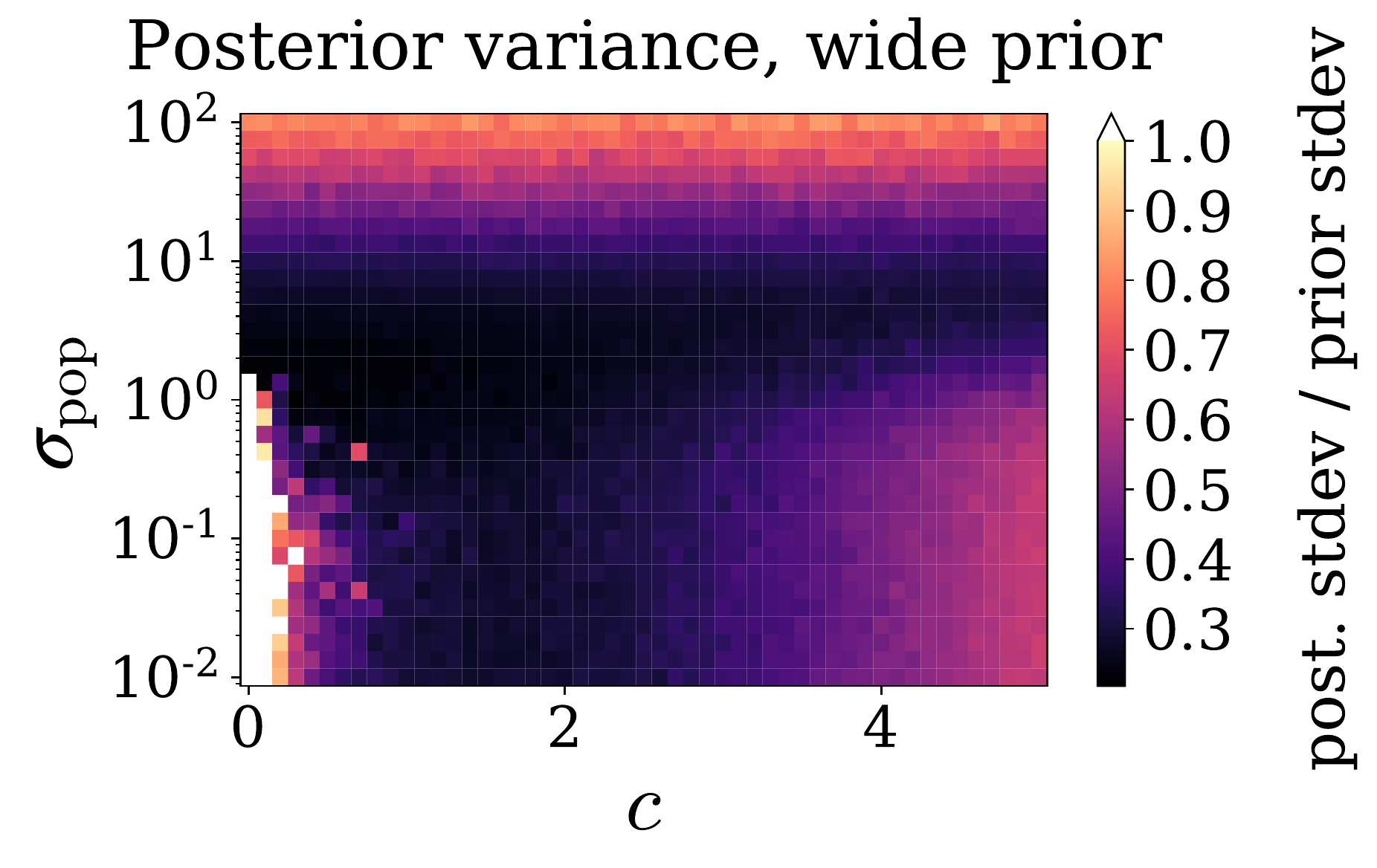}\caption{Optimal population distribution depends on prior variance relative
to tuning function width. A scalar state with dynamics $dX_{t}=aX_{t}+s\,dW_{t}$
($a=-0.05$) is filtered with tuning parameters $h=50,\alpha=1$ and
preferred stimulus density $\mathcal{N}(c,\sigma_{\mathrm{pop}}^{2})$.
The process is initialized from its steady state distribution, $\protect\mc N\left(0,s^{2}/\left(-2a\right)\right)$.
Both graphs show the posterior standard deviation derived from the
ADF approximation, relative to the prior standard deviation $\sigma_{\mathrm{p}}$.
In the left graph, $s=0.1$ so that the prior variance is $0.1$,
whereas on the right, $s=0.5$, so that the prior variance is 2.5.
In both cases the filter is initialized with the correct prior, and
the posterior variance is averaged over the time interval $\left[1,2\right]$
and across 1000 trials for each data point. Only non-negative values
of $c$ were simulated, but note that the process is symmetric about
zero, so that the full plots would also be symmetric about $c=0$.
The areas colored in white in the right plot correspond to parameters
where the computed posterior variance exceeded the prior variance.
This is due to poor performance of the ADF approximation for these
parameter values, in cases where no spikes occur and the true posterior
becomes bimodal.}
\label{harper}
\end{figure*}
We also test the robustness of the filter to inaccuracies in model
parameters or observation/encoding parameters in this problem. We
use inaccurate values for the model parameter $a$ in (\ref{eq:1d-diffusion})
and the observation/encoding parameter $h$ in (\ref{eq:gauss-tc-y})
in the filter. Specifically, we multiply or divide each of the two
parameters by a factor of $5/4$ in the filter, while the dynamics
and the observing neural population remain unaltered. The results
remain qualitatively similar, as seen in Figure (\ref{robustness}).

\begin{figure}
\includegraphics[width=1\textwidth]{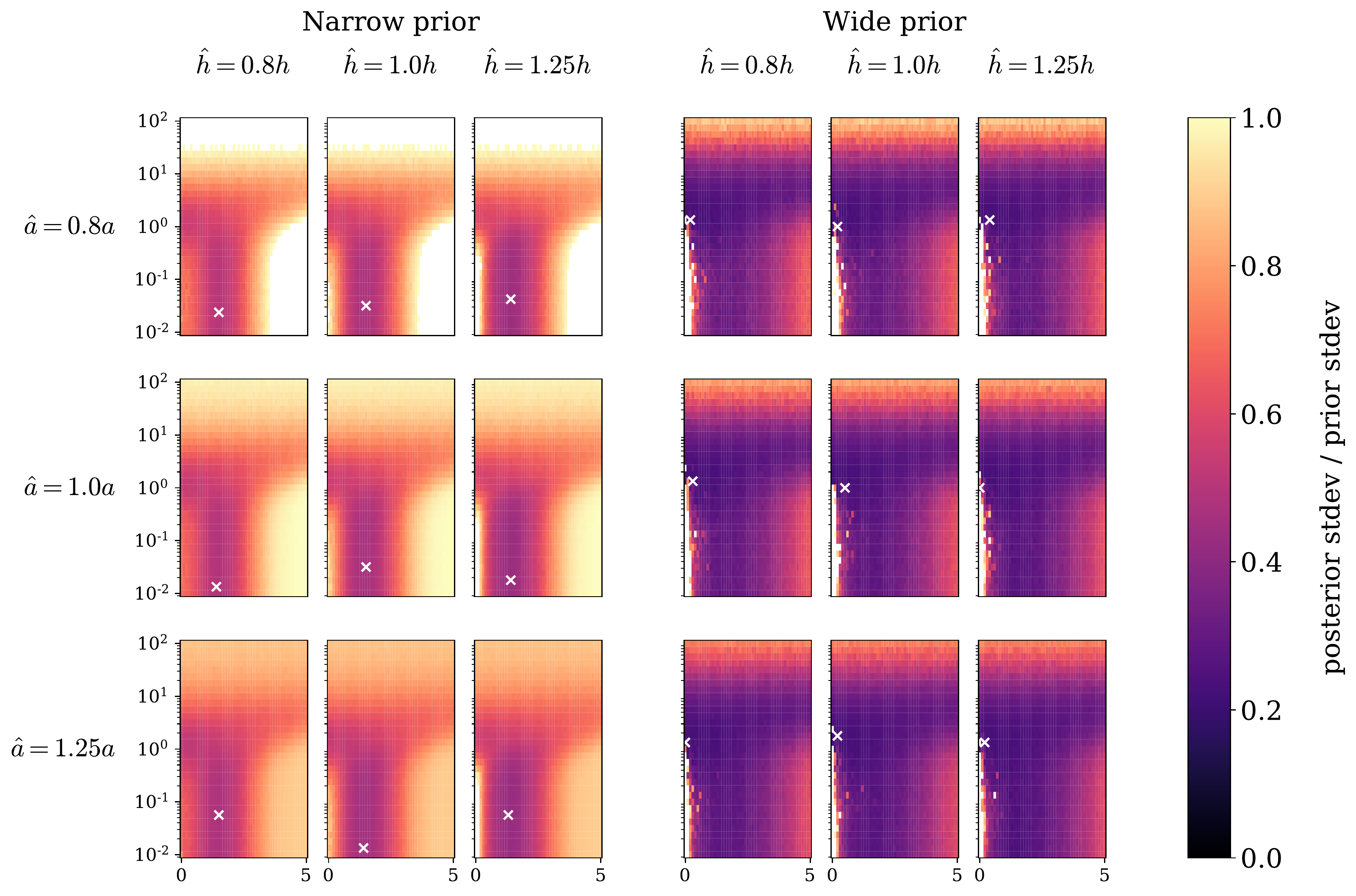}\caption{Robustness of the encoding result of figure (\ref{harper}) to decoding
with an inaccurate model of the dynamics or of the observation/encoding
process. As in Figure (\ref{harper}), color indicates the posterior
standard deviation derived from the ADF approximation, relative to
the prior standard deviation. The values of the dynamics parameter
$a$ and the population parameter $h$ used by the filter, respectively
denoted $\hat{a},\hat{h}$, are varied while the dynamics and observation
processes remain the same. Each row corresponds to a value of $\hat{a}$
and each column to a value of $\hat{h}$. White crosses denote the
values where the posterior standard deviation is minimized.}
\label{robustness}
\end{figure}

\subsection{Adaptation of homogeneous population to stimulus statistics}

In \citet{Benucci2013}, the tuning of cat primary visual cortex (V1)
neurons to oriented gratings was measured after adapting either to
a uniform distribution of orientations, or to a ``biased'' distribution
with one orientation being more common. The population's preferred
stimuli are roughly uniformly distributed, and adapt to the prior
statistics through change in amplitude. The adapted population was
observed to have decreased firing rate near the common orientation,
so that the mean firing rate is constant across the population.

We present a simplified model where optimal encoding exhibits a constant
mean firing rate across a neural population. A random state $X$ is
drawn from a ``biased'' prior distribution which is a mixture of
two uniform distributions on intervals sharing an endpoint (see Figure
\ref{benucci-setup}),
\begin{equation}
p\left(x\right)=\sum_{i=1}^{2}k_{i}\frac{1\left\{ a_{i}\leq x\leq b_{i}\right\} }{b_{i}-a_{i}},\quad k_{1}+k_{2}=1,\quad b_{1}=a_{2}.\label{eq:benucci-prior}
\end{equation}
The neural population consists of Gaussian neurons (\ref{eq:gauss-tc-y})
with preferred locations uniformly distribution on the interval $\left[a_{1},b_{2}\right]$.
However, they are allowed to adapt to different tuning amplitude $h_{1},h_{2}$
in each of the sub-intervals $\left[a_{1},b_{1}\right],\left[a_{2},b_{2}\right]$
respectively. Thus, the population distribution is given by a mixture
of two uniform components,
\begin{align}
f\left(d\boldsymbol{y}\right) & =f\left(dh,d\theta,dH',dR'\right)\nonumber \\
 & =\delta_{H}\left(dH'\right)\delta_{R}\left(dR'\right)\sum_{i}\delta_{h_{i}}\left(dh\right)f_{i}\left(d\theta\right),\nonumber \\
f_{i}\left(d\theta\right) & \triangleq1\left\{ a_{i}\leq\theta\leq b_{i}\right\} d\theta,\label{eq:benucci-pop}
\end{align}
where $H=1$. We optimize the parameters $h_{1},h_{2}$ to minimize
accumulated decoding MSE over a finite decoding interval under a constraint
on the \emph{total} firing rate of the population,
\begin{align}
\min_{h_{1},h_{2}}\quad & \E\left[\int_{0}^{T}\left(X-\hat{X}_{t}\right)^{2}dt\right]\nonumber \\
\mathrm{s.t.}\quad & r\triangleq\E\left[\int\lambda\left(X;y\right)f\left(d\boldsymbol{y}\right)\right]\leq\bar{r},\label{eq:opt}
\end{align}
where $\hat{X}_{t}=\E\left[X|\mathcal{N}_{t}\right]$ is the MMSE
estimate of $X$. The total firing rate $r$ may be obtained from
(\ref{eq:gauss-tc-y}), (\ref{eq:benucci-prior}), (\ref{eq:benucci-pop}),
yielding
\begin{align}
r & =r_{1}+r_{2},\nonumber \\
r_{i} & =\sum_{j=1}^{2}k_{j}\int_{a_{j}}^{b_{j}}\frac{dx}{b_{j}-a_{j}}\int_{a_{i}}^{b_{i}}d\theta\,h_{i}\exp\left(-\frac{\left(x-\theta\right)^{2}}{2\alpha^{2}}\right)\nonumber \\
 & =\sqrt{2\pi}\alpha^{2}\sum_{j=1}^{2}k_{j}\frac{h_{i}}{\left(b_{j}-a_{j}\right)}\nonumber \\
 & \quad\times\left[\Phi_{1}\left(\frac{a_{j}-b_{i}}{\alpha}\right)+\Phi_{1}\left(\frac{b_{j}-a_{i}}{\alpha}\right)-\Phi_{1}\left(\frac{a_{j}-a_{i}}{\alpha}\right)-\Phi_{1}\left(\frac{b_{j}-b_{i}}{\alpha}\right)\right],\label{eq:benucci-E-rate}
\end{align}
where $r_{i}$ is the total firing rate in the $i$th sub-population,
$\alpha^{2}=R$, $\Phi_{1}\left(x\right)\triangleq\int_{-\infty}^{x}\Phi=x\Phi\left(x\right)+\phi\left(x\right),$
and $\phi,\Phi$ are respectively the pdf and cdf of the standard
Gaussian distribution. In this continuous population approximation,
the firing rate of a \emph{single }neuron with preferred stimulus
$\theta$ is proportional to $\E\left[\lambda\left(X;\theta\right)\right]$.
We use narrow tuning ($\alpha=0.1$), so that the rate $\E\left[\lambda\left(X;\theta\right)\right]$
is nearly constant within each sub-population, justifying the approximation
\begin{equation}
\E\left[\lambda\left(X;\theta\right)\right]\approx\frac{1}{b_{i}-a_{i}}\int_{a_{i}}^{b_{i}}\E\left[\lambda\left(X;\theta\right)\right]d\theta=\frac{r_{i}}{b_{i}-a_{i}}\quad\left(a_{i}\leq\theta\leq b_{i}\right)\label{eq:benucci-single-rate}
\end{equation}

To solve this optimization problem, we approximate $\hat{X}_{T}$
by the ADF filter output $\mu_{T}$. We further assume that the MSE
is a decreasing function of the total firing rate $r$, so that the
solution is obtained for $r=\bar{r}$. Although we have no proof of
this claim, it appears reasonable, since the posterior variance decreases
at each spike (see \subsecref{Interpretation}). The stronger constraint
$r=\bar{r}$ leaves a single degree of freedom, the amplitude ratio
$h_{1}/h_{2}$. In Figure \ref{benucci-mse} we evaluate $\E\left[\int\left(X-\mu_{t}\right)^{2}dt\right]$
using Monte Carlo simulation for various values of the ratio $h_{1}/h_{2}$,
as well as location of the interval endpoint $b_{2}$. Although the
optimization is constrained only by the total firing rate, the minimal
MSE is obtained near the solution of
\[
\frac{r_{1}}{b_{1}-a_{1}}=\frac{r_{2}}{b_{2}-a_{2}},
\]
where firing rates are equalized across the population.

\begin{figure}
\subfloat[]{\centering{}\label{benucci-setup}\includegraphics[height=0.3\columnwidth]{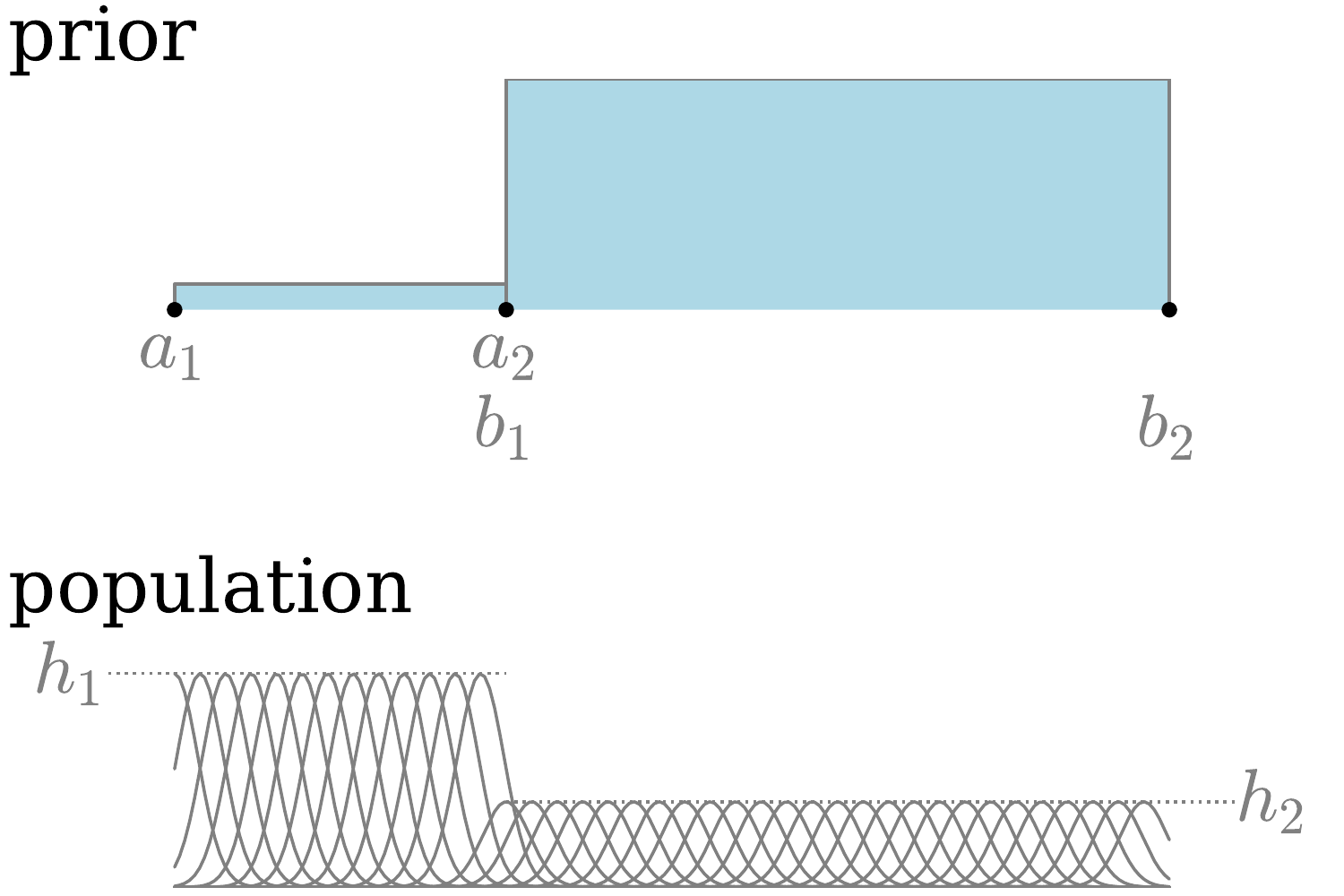}}\hfill{}\subfloat[]{\centering{}\label{benucci-mse}\includegraphics[height=0.3\columnwidth]{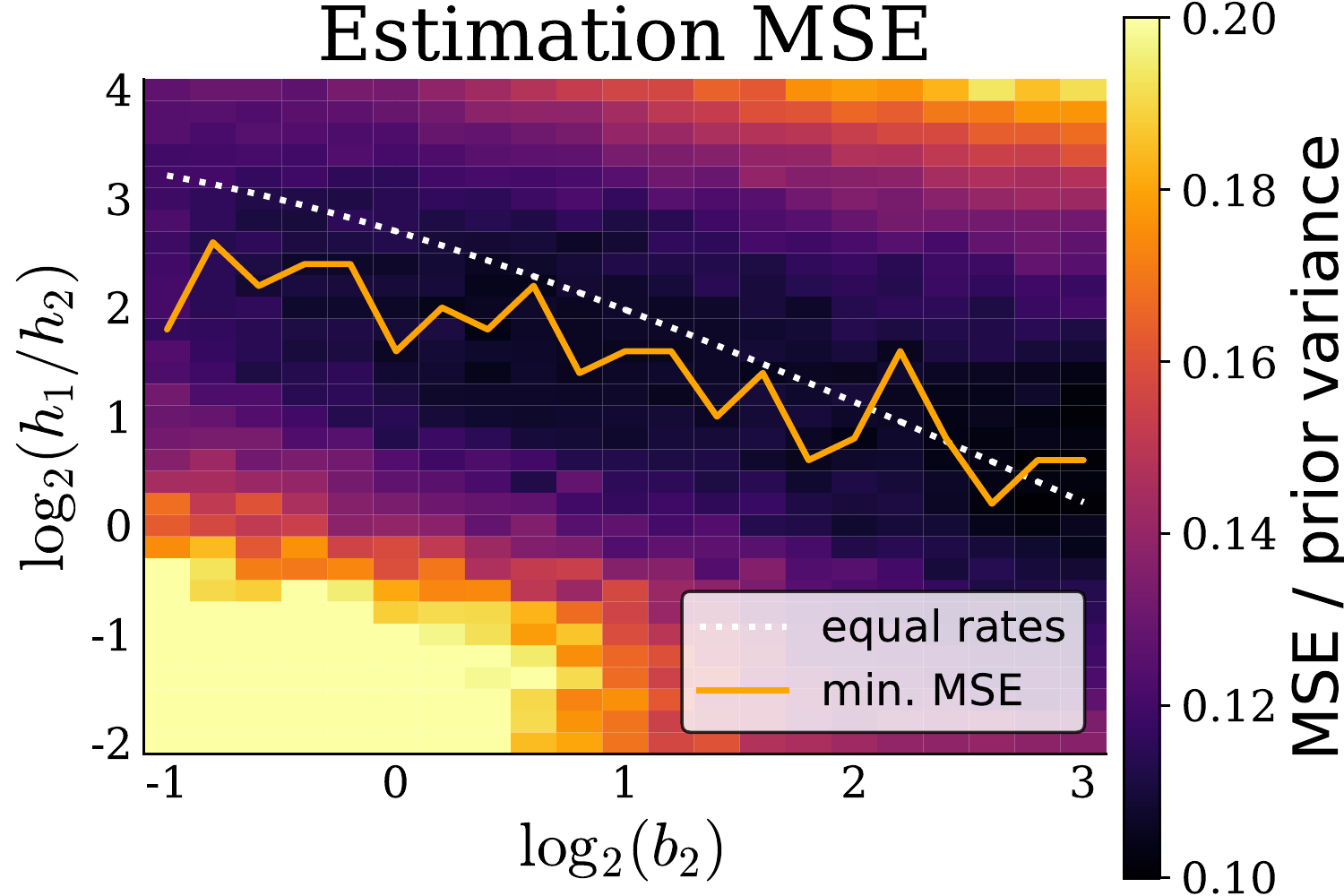}}

\caption{Optimal population in the mixture model equalizes firing rates of
different neurons. \textbf{(a) }prior and neural population \textbf{(b)}
estimation MSE as a function of height ratio $h_{1}/h_{2}$ and interval
endpoint $b_{2}$. Parameters are $k_{1}=0.1,k_{2}=0.9$, $a_{1}=-1,b_{1}=a_{2}=0,T=1,\bar{r}=10,\alpha^{2}=0.01,H=1$.
The dotted white lines indicates the ratio where firing rates are
equalized between the two sub-populations according to the approximation
\eqref{benucci-single-rate}, and the orange line the ratio where
the MSE is minimized. The MSE was computed by averaging across 10,000
trials for each data point.}
\end{figure}

\section{Comparison to Previous Work\label{sec:PreviousWorks}}

\subsection{Neural Decoding\label{subsec:Decoding-Previous}}

\textcolor{black}{Table \ref{lit-comparison} provides a concise comparison
of our setting and results to previous works on optimal neural decoding.
As noted in the introduction, we focus our comparison on analytically
expressible continuous-time filters. }

\begin{sidewaystable}
\caption{Summary of the setting and results of several previous works based
on continuous-time filtering of point process observations, in comparison
to the current work. The \emph{complexity }column lists the asymptotic
computational complexity of each time step in a discrete-time implementation
of the filter, as a function of population size $n$ and number of
spikes $N_{t}$. A complexity of $\infty$ denotes an infinite-dimensional
filter.}
\begin{threeparttable}
\begin{centering}
\begin{tabular}{ccccccccc}
\toprule 
\multirow{2}{*}{Ref.} & \multirow{2}{*}{Dynamics} & \multicolumn{2}{c}{Neural code} &  & \multicolumn{4}{c}{Decoding}\tabularnewline
\cmidrule{3-4} \cmidrule{6-9} 
 &  & rates & population &  &  & exact &  & complexity\tabularnewline
\midrule 
\citet{Snyder1972}\tnote{a} &  &  &  &  &  &  &  & \tabularnewline
\& \citet{Segall1975-filtering} & diffusion & any & finite\tnote{b} &  &  & \Checkmark{} &  & $\infty$\tabularnewline
\citet{Snyder1972}\tnote{a} & diffusion & any & finite\tnote{b} &  &  & - &  & $n$\tabularnewline
\citet{RhoSny1977} & lin. G. diff. & G. & uniform &  &  & \Checkmark{} &  & $1$\tabularnewline
\citet{Komaee2010} & lin. G. diff. & any & uniform &  &  & - &  & $1$\tabularnewline
\citet{Huys2007} & 1d G. process & G. & uniform &  &  & \Checkmark{} &  & $N_{t}^{3}$ \tnote{c}\tabularnewline
\citet{EdenBrown2008} & lin. G. diff. & any & finite &  &  & - &  & $n$\tabularnewline
\citet{SusMeiOpp11,SusMeiOpp13} & 1d G. Mat. & G. & uniform &  &  & - &  & $1$\tabularnewline
\citet{BobMeiEld09} &  &  &  &  &  &  &  & \tabularnewline
\& \citet{Twum-Danso2001} & CTMC & any & finite &  &  & \Checkmark{} &  & $n$\tabularnewline
This work & diffusion & G. & non-uniform &  &  & - &  & $1$ \tnote{d}\tabularnewline
\bottomrule
\end{tabular}
\par\end{centering}
\begin{tablenotes}
\item[a]{\cite{Snyder1972}} includes both an exact PDE for the posterior statistics, and an approximate solution.
\item[b]The setting of {\cite{Snyder1972}} and {\cite{Segall1975-filtering}} is a single observation point process, but the result is readily extended to a finite population by summation.
\item[c]This filter is non-recursive, and its complexity grows with the history of spikes. The exponent 3 is related to inversion of an $N_t \times N_t$ matrix, which in principle can be performed with lower complexity.
\item[d]Constant complexity for distributions over preferred stimuli described in section \ref{subsec:cont-filtering} (including the uniform coding case). For heterogenous mixture populations, the complexity is linear in the number of mixture components.
\end{tablenotes}
\end{threeparttable}

\medskip{}

abbreviations: \textbf{G.}	Gaussian, \ \textbf{lin.~G.~diff.}	linear
Gaussian diffusion, \ \textbf{1d~G.~Mat.} 1-dimensional Gaussian
process with Matern kernel auto-correlation, \ \textbf{CTMC} Continuous-time
Markov chain.

\label{lit-comparison}

\end{sidewaystable}
Few previous analytically oriented works studied neural decoding for
dynamic stimuli, or without the uniform coding assumption. The filtering
problem for a dynamic state with a general (non-uniform) finite population
is tractable when the the state space is finite; in this case the
posterior is finite-dimensional and obeys SDEs derived in \citet{Bremaud81}.
These results have been presented in a neuroscience context in \citet{Twum-Danso2001}
and \citet{BobMeiEld09}. 

We are aware of a single previous work \citet{EdenBrown2008} deriving
a closed-form filter for \emph{non-uniform }coding of a diffusion
process in \emph{continuous time}. The setting of \citet{EdenBrown2008}
is a finite populations with arbitrary tuning functions, and the derivation
uses an approximation similar to the one used in the current work.
Our work differs from \citet{EdenBrown2008} most notably by the use
of a parameterized ``infinite'' population. In this sense, the setting
of \citet{EdenBrown2008} is less general, corresponding to the case
of finite population described in section \ref{subsec:FinitePopulation}.
On the other hand, \citet{EdenBrown2008} deals with a more general
form for the tuning functions. Although \citet{EdenBrown2008} also
approximates the posterior using a Gaussian distribution, it is not
equivalent to our filter. The two filters are compared in detail in
appendix \ref{sec:EB-appendix}.

\subsection{Neural Encoding\label{subsec:PreviousWork-Encoding}}

As the current work is concerned with efficient neural decoding as
a tool for studying neural encoding in non-uniform populations, we
briefly mention some works that study neural encoding analytically.
As noted above, encoding will be studied in more detail in a subsequent
paper. Many previous works used Fisher information to study non-uniform
coding of static stimuli. As mentioned in the introduction, Fisher
information does not necessarily provide a good estimate for possible
decoding performance, and optimizing it may yield qualitatively misleading
results (see \citet{Bethge2002,YaeMei10,Pilarski2015}). \textcolor{black}{However,
we note one such work,} \citet{GanSim14}, which is particularly relevant
in comparison to the current work, as it similarly studies large parameterized
populations. The neural populations studied in \citet{GanSim14} are
characterized by two functions of the stimulus: a ``density'' function,
which described the local density of neurons; and a ``gain'' function,
which modulates each neuron's gain according to its preferred location.
The density function also distorts tuning functions so that tuning
width is approximately inversely proportional to neuron density, resulting
in approximately uniform coding when the gain function is constant.
On the other hand, the introduction of the gain function produces
a violation of uniform coding. This population is optimized for Fisher
information and several related measures. For each of these measures,
the optimal population density is shown to be monotonic with the prior,
i.e., more neurons should be assigned to more probable states. This
is in contrast to the results we present in section (\ref{sec:Encoding})
(e.g. Figure \ref{harper}, left), where the optimal population distribution
is shifted relative to the prior distribution when the prior is narrow.
This discrepancy may be attributed to the limitations of Fisher information
in predicting actual decoding error, or to the coupling of neuron
density and tuning width used in \citet{GanSim14} to facilitate the
derivation of closed-form solutions.

Several previous works attempted direct minimization of decoding MSE
rather than Fisher information. In \citet{YaeMei10}, an explicit
expression for the MSE is derived in the static case with uniform
coding, and is used to characterize optimal tuning function width
and its relation to coding time. More recently, \citet{WanStoLee2016}
studied $L_{p}$-based loss measures and presented exact results for
optimal tuning functions in the case of univariate static signals,
for single neurons and homogeneous populations. In \citet{SusMeiOpp11,SusMeiOpp13},
a mean-field approximation is suggested to allow efficient evaluation
of the MSE in a dynamics setting.

\section*{Acknowledgements}

The work of YH and RM is partially supported by grant No. 451/17 from
the Israel Science Foundation, and by the Ollendorff center of the
Viterbi Faculty of Electrical Engineering at the Technion.

\bibliographystyle{apacite}
\bibliography{nc_adf}

\appendix

\section{Derivation of filtering equations\label{sec:Derivation}}

\subsection{Setting and notation}

In the main text, we have presented our model in an open-loop setting,
where the process $X$ is passively observed. Here we consider a more
general setting, incorporating a control process $U_{t}$, so the
dynamics are
\begin{equation}
dX_{t}=\left(A\left(X_{t}\right)+B\left(U_{t}\right)\right)dt+D\left(X_{t}\right)dW_{t},\label{eq:closed_loop_dyn}
\end{equation}
where, in general, $U_{t}$ is a function of $\mathcal{N}_{t}$.

We denote by $p_{t}^{\,\mathcal{N}}\left(\cdot\right)$ the posterior
density -- that is, the density of $X_{t}$ given $\mathcal{N}_{t}$,
and by $\E_{t}^{\mathcal{N}}\left[\cdot\right]$ the posterior expectation
-- that is, expectation conditioned on $\mathcal{N}_{t}$. 

\subsection{The Innovation Measure}

We derive the filtering equations in terms of the \emph{innovation
measure} of the marked point process, which is a random measure closely
related to the notion of \emph{innovation process }associated with
unmarked point processes or diffusion processes. The role of the innovation
measure in filtering marked point processes is analogous to the role
of the innovation process in Kalman filtering. 

In section (\ref{subsec:model-finite}) we characterized the intensity
(or rate) of each point process in a finite population using the asymptotic
behavior of jump probabilities in small intervals. An alternative
definition is the following\footnote{A more detailed discussion of this definition and of innovation processes
is available in \citet{Segall1975-modelling}}: Consider an unmarked point process $N_{t}$ with $\mc N_{t}$ denoting
its history up to time $t$. Given some history $\mathcal{F}_{t}$
containing $\mc N_{t}$ (e.g. $\mc F_{t}$ might include the observation
process $N$ and its driving state process $X$), the process $\lambda_{t}^{\mc F}$
is called the intensity of $N$ relative to $\mc F$ when $\lambda_{t}$
is $\mc F_{t}$-measurable\footnote{i.e., it is strictly a function of the history $\mc F_{t}$.}
and 
\[
I_{t}^{\mc F}\triangleq N_{t}-\int_{0}^{t}\lambda_{s}^{\mc F}ds
\]
is an $\mc F_{t}$-martingale, meaning 
\[
\E\left[I_{t}^{\mathcal{F}}|\mathcal{F}_{s}\right]=I_{s}^{\mathcal{F}},
\]
or equivalently,
\[
\E\left[N_{t}|\mathcal{F}_{s}\right]=N_{s}+\E\left[\int_{s}^{t}\lambda^{\mc F}\Big|\mc{\mc F}_{s}\right].
\]
Heuristically we may write this relation as
\[
\E\left(dN_{t}|\mc F_{t}\right)=\lambda_{t}^{\mc F}dt.
\]
When $\mc F_{t}=\mc N_{t}$, the process $I_{t}\triangleq I_{t}^{\mc N}$
is called the \emph{innovation process}. We may write
\[
dN_{t}=\lambda_{t}^{\mc N}dt+dI_{t}=\E\left[dN_{t}|\mc N_{t}\right]+dI_{t},
\]
so the innovation increments $dI_{t}$ represent the ``unexpected''
part of the increments $dN_{t}$ after observation of $\mc N_{t}$.
The innovation process may be similarly defined for other stochastic
processes (such as discrete-time or diffusion processes), and it plays
an important role in the Kalman and the Kalman-Bucy filters. It plays
an analogous role in the filtering of point processes, as seen below.

As discussed in section (\ref{subsec:marked}), in the continuous
population model we characterize the observation process $N$ by its
intensity kernel relative to the history $\left(\mc N_{t},X_{\left[0,t\right]}\right)$,
\[
\E\left[N\left(dt,d\boldsymbol{y}\right)|X_{\left[0,t\right]},\mathcal{N}_{t}\right]=\lambda\left(X_{t};\boldsymbol{y}\right)f\left(d\boldsymbol{y}\right)dt.
\]
This equation is heuristic notation for the statement that the rate
of $N_{t}\left(Y\right)$ relative to $\left(\mc N_{t},X_{\left[0,t\right]}\right)$
is $\int_{Y}\lambda\left(X_{t};\boldsymbol{y}\right)f\left(d\boldsymbol{y}\right)$,
for any measurable $Y\subset\boldsymbol{Y}$. The rate relative to
the spiking history $\mc N_{t}$ is obtained by marginalizing over
$X_{t}$ (see \citet{Segall1975-modelling}, Theorem 2), yielding
the rate $\int_{\boldsymbol{Y}}\hat{\lambda}\left(\boldsymbol{y}\right)f\left(d\boldsymbol{y}\right)$
where 
\[
\hat{\lambda}\left(\boldsymbol{y}\right)=\E_{t}^{\mc N}\left[\lambda\left(X_{t};\boldsymbol{y}\right)\right].
\]
Therefore the innovation process of $N_{t}\left(Y\right)$ is $I_{t}\left(Y\right)=N_{t}\left(Y\right)-\int_{0}^{t}\int_{Y}\hat{\lambda}_{s}\left(y\right)f\left(d\boldsymbol{y}\right)ds$,
and accordingly we define the \emph{innovation measure} $I$ to be
the random measure
\begin{equation}
I\left(dt,d\boldsymbol{y}\right)\triangleq N\left(dt,dy\right)-\hat{\lambda}_{t}\left(\boldsymbol{y}\right)f\left(d\boldsymbol{y}\right)dt,\label{eq:innovation}
\end{equation}
so that the innovation process may be expressed as
\[
I_{t}\left(Y\right)=\int_{0}^{t}\int_{\boldsymbol{Y}}I\left(ds,dy\right).
\]
The martingale property of $I_{t}\left(Y\right)$ implies that the
innovation measure satisfies
\[
\E_{t}^{\mc N}\left[\int_{Y}I\left(dt,dy\right)\right]=0,
\]
for all measurable $Y\subset\boldsymbol{Y}$.

\subsection{Exact filtering equations}

Define
\[
\omega_{t}\left(x;\boldsymbol{y}\right)\triangleq\frac{\lambda\left(x;\boldsymbol{y}\right)}{\hat{\lambda}_{t}\left(\boldsymbol{y}\right)}-1,\quad\omega_{t}^{\boldsymbol{y}}\triangleq\omega_{t}\left(X_{t};\boldsymbol{y}\right).
\]
The stochastic PDE (\ref{eq:snyder-density}) is extended in \citet{RhoSny1977}
for the case of marked point process observation in the presence of
feedback\footnote{The setting considered in \citet{RhoSny1977} assumes linear dynamics
and an inifinite uniform population. However, these assumption are
only relevant to establish other proposition in that paper. The proof
of equation (\ref{eq:snyder-density}) still holds as is in our more
general setting.}: in this case, the posterior density $p_{t}^{\,\mathcal{N}}$ obeys
the equation

\begin{equation}
dp_{t}^{\,\mathcal{N}}\left(x\right)=\left\{ \mathcal{L}_{t}^{*}p_{t}^{\,\mathcal{N}}\right\} \left(x\right)dt+p_{t^{-}}^{\,\mathcal{N}}\left(x\right)\int_{y\in\mathbf{Y}}\omega_{t}\left(x;\boldsymbol{y}\right)I\left(dt,d\boldsymbol{y}\right),\label{eq:rhodes-snyder-cont}
\end{equation}
Here $\mathcal{L}_{t}$ is the posterior infinitesimal generator,
defined with an additional conditioning on $\mathcal{N}_{t}$,
\[
\mathcal{L}_{t}h\left(x\right)=\lim_{\Delta t\to0^{+}}\frac{\mathrm{E}\left[h\left(X_{t+\Delta t}\right)|X_{t}=x,\mathcal{N}_{t}\right]-h\left(x\right)}{\Delta t},
\]
and $\mathcal{L}_{t}^{*}$ is its adjoint. Note that in this closed-loop
setting, the infinitesimal generator is itself a random operator,
due to its dependence on past observations through the control law,
and that $N_{t}$ is no longer a doubly-stochastic Poisson process.

As in section (\ref{subsec:exact}), we use the notations
\[
\mu_{t}\triangleq\E_{t}^{\mathcal{N}}X_{t},\quad\tilde{X}_{t}\triangleq X_{t}-\mu_{t},\quad\Sigma_{t}\triangleq\E_{t}^{\mathcal{N}}\left[\tilde{X}_{t}\tilde{X}_{t}\transpose\right].
\]
We derive of the following equations for the first two posterior moments,
which generalize (\ref{eq:finite}) to marked point processes, and
for the presence of feedback in the state dynamics,\begin{subequations}\label{eq:exact-innovation}
\begin{align}
d\mu_{t} & =\left(\E_{t}^{\mathcal{N}}\left[A\left(X_{t}\right)\right]+B\left(U_{t}\right)\right)dt+\int_{\mathbf{Y}}\E_{t^{-}}^{\mathcal{N}}\left[\omega_{t^{-}}^{\boldsymbol{y}}X_{t^{-}}\right]I\left(dt,d\boldsymbol{y}\right)\label{eq:mean-I}\\
d\Sigma_{t} & =\E_{t}^{\mathcal{N}}\left[A\left(X_{t}\right)\tilde{X}_{t}^{T}+\tilde{X}_{t}A\left(X_{t}\right)^{T}+D\left(X_{t}\right)D\left(X_{t}\right)^{T}\right]dt\nonumber \\
 & \quad+\int_{\mathbf{Y}}\E_{t^{-}}^{\mathcal{N}}\left[\omega_{t^{-}}^{\boldsymbol{y}}\tilde{X}_{t^{-}}\tilde{X}_{t^{-}}^{T}\right]I\left(dt,d\boldsymbol{y}\right)\nonumber \\
 & \quad-\int_{\mathbf{Y}}\E_{t^{-}}^{\mathcal{N}}\left[\omega_{t^{-}}^{\boldsymbol{y}}X_{t^{-}}\right]\E_{t^{-}}^{\mathcal{N}}\left[\omega_{t^{-}}^{\boldsymbol{y}}X_{t^{-}}^{T}\right]N\left(dt,d\boldsymbol{y}\right).\label{eq:var-I}
\end{align}
\end{subequations}A rigorous derivation of (\ref{eq:mean-I}) under
more general conditions is found in \citet{Segall1975-filtering},
from which (\ref{eq:var-I}) may be derived by considering the dynamics
of the process $X_{t}X_{t}\transpose$. Here we provide a more heuristic
derivation based on (\ref{eq:rhodes-snyder-cont}).

Compare the mean update equation (\ref{eq:mean-I}) to the Kalman-Bucy
filter, which gives the posterior moments for a diffusion process
$X$ with noisy observations $Y$ of the form
\begin{align*}
dX_{t} & =A_{t}dX_{t}+D_{t}dW_{t}\\
dY_{t} & =H_{t}X_{t}dt+dV_{t},
\end{align*}
where $W,V$ are independent standard Wiener processes. The Kalman-Bucy
filter reads
\begin{align*}
d\mu_{t} & =A_{t}\mu_{t}dt+\Sigma_{t}H_{t}\transpose\left(dY_{t}-H_{t}\mu_{t}dt\right),\\
\dot{\Sigma}_{t} & =A_{t}\Sigma_{t}+\Sigma_{t}A_{t}+D_{t}D_{t}\transpose-\Sigma_{t}H_{t}\transpose H_{t}\Sigma_{t}.
\end{align*}
Here, the term $dY_{t}-H_{t}\mu_{t}dt$ appearing in the first equation
is an increment of the innovation process $Y_{t}-\int_{0}^{t}H_{s}\mu_{s}ds$.

For a sufficiently well-behaved function $h$, we find, using (\ref{eq:rhodes-snyder-cont})
and the definition of operator adjoint,

\begin{align}
d\E_{t}^{\mc N}\left[h\left(X_{t}\right)\right] & =\int dx\,h\left(x\right)\left[\left\{ \mathcal{L}_{t}^{*}p_{t}^{\,\mathcal{N}}\right\} \left(x\right)dt+p_{t^{-}}^{\,\mathcal{N}}\left(x\right)\int_{\boldsymbol{y}\in\boldsymbol{Y}}\omega_{t-}\left(x;\boldsymbol{y}\right)I\left(dt,d\boldsymbol{y}\right)\right]\nonumber \\
 & =\int dx\,\left[p_{t}^{\,\mathcal{N}}\left(x\right)\left\{ \mathcal{L}_{t}h\right\} \left(x\right)dt+p_{t^{-}}^{\,\mathcal{N}}\left(x\right)h\left(x\right)\int_{\boldsymbol{y}\in\boldsymbol{Y}}\omega_{t-}\left(x;\boldsymbol{y}\right)I\left(dt,d\boldsymbol{y}\right)\right]\nonumber \\
 & =\E_{t}^{\mc N}\left[\left\{ \mathcal{L}_{t}h\right\} \left(X_{t}\right)\right]dt+\int_{\boldsymbol{y}\in\boldsymbol{Y}}\E_{t^{-}}^{\mc N}\left[h\left(X_{t^{-}}\right)\omega_{t^{-}}^{\boldsymbol{y}}\right]I\left(dt,d\boldsymbol{y}\right)\label{eq:inc-E-h}
\end{align}
Assuming the state evolves as in (\ref{eq:closed_loop_dyn}), the
(closed loop) infinitesimal generator is
\[
\mathcal{L}_{t}h\left(x\right)=\left(A\left(x\right)+B\left(U_{t}\right)\right)\transpose\nabla h\left(x\right)+\frac{1}{2}\mathrm{Tr}\left[\nabla^{2}h\left(x\right)D\left(x\right)D\left(x\right)\transpose\right],
\]
which, when specialized to the functions $h_{i}\left(x\right)=x^{i}$
and $h_{ij}\left(x\right)=x^{i}x^{j}$, where $x^{i}$ is the $i$th
component of $x$, reads
\begin{align}
\mathcal{L}_{t}h_{i}\left(x\right) & =\left(A\left(x\right)+B\left(U_{t}\right)\right)^{i}\label{eq:gen-xi}\\
\mathcal{L}h_{ij}\left(x\right) & =\left(A\left(x\right)+B\left(U_{t}\right)\right)^{i}x^{j}+x^{i}\left(A\left(x\right)+B\left(U_{t}\right)\right)^{j}\nonumber \\
 & \quad+\frac{1}{2}\left(D\left(x\right)D\left(x\right)\transpose\right)^{ij}+\frac{1}{2}\left(D\left(x\right)D\left(x\right)\transpose\right)^{ji}\nonumber \\
 & =\left(A\left(x\right)+B\left(U_{t}\right)\right)^{i}x^{j}+x^{i}\left(A\left(x\right)+B\left(U_{t}\right)\right)^{j}\nonumber \\
 & \quad+\left(D\left(x\right)D\left(x\right)\transpose\right)^{ij}\label{eq:gen-xixj}
\end{align}
Substituting (\ref{eq:gen-xi}) into (\ref{eq:inc-E-h}) yields
\[
d\mu_{t}^{i}=\E_{t}^{\mathcal{N}}\left(A\left(x\right)+B\left(U_{t}\right)\right)^{i}dt+\int_{\boldsymbol{y}\in\boldsymbol{Y}}\E_{t^{-}}^{\mc N}\left[X_{t^{-}}^{i}\omega_{t^{-}}^{\boldsymbol{y}}\right]I\left(dt,d\boldsymbol{y}\right),
\]
or in vector notation
\begin{align}
d\mu_{t} & =\E_{t}^{\mc N}\left[A\left(X_{t}\right)\right]dt+B\left(U_{t}\right)dt+\int_{\boldsymbol{y}\in\boldsymbol{Y}}\E_{t^{-}}^{\mc N}\left[\omega_{t^{-}}^{\boldsymbol{y}}X_{t^{-}}\right]I\left(dt,d\boldsymbol{y}\right)\label{eq:dmu}
\end{align}

To compute $d\Sigma_{t}$ we use the representation $d\Sigma_{t}=d\E_{t}^{\mc N}\left[X_{t}X_{t}\transpose\right]-d\left(\mu_{t}\mu_{t}\transpose\right)$.
The first term is computed by substituting (\ref{eq:gen-xixj}) into
(\ref{eq:inc-E-h}), yielding
\begin{align*}
d\E_{t}^{\mathcal{N}}\left[X_{t}^{i}X_{t}^{j}\right] & =\E_{t}^{\mathcal{N}}\left[\left(A\left(X_{t}\right)+B\left(U_{t}\right)\right)^{i}X_{t}^{j}+X_{t}^{i}\left(A\left(X_{t}\right)+B\left(U_{t}\right)\right)^{j}\right]\\
 & \quad+\E_{t}^{\mathcal{N}}\left[\left(D\left(X_{t}\right)D\left(X_{t}\right)\transpose\right)^{ij}\right]+\int_{\boldsymbol{y}\in\boldsymbol{Y}}\E_{t^{-}}^{\mc N}\left[X_{t^{-}}^{i}X_{t^{-}}^{j}\omega_{t^{-}}^{\boldsymbol{y}}\right]I\left(dt,d\boldsymbol{y}\right),
\end{align*}
or in matrix notation, after some rearranging,
\begin{align}
d\E_{t}^{\mc N}\left[X_{t}X_{t}\transpose\right] & =\E_{t}^{\mc N}\left[A\left(X_{t}\right)X_{t}\transpose+X_{t}A\left(X_{t}\right)\transpose\right]dt+\left[B\left(U_{t}\right)\mu_{t}\transpose+\mu_{t}B\left(U_{t}\right)\transpose\right]dt\nonumber \\
 & \quad+\E_{t}^{\mc N}\left[D\left(X_{t}\right)D\left(X_{t}\right)\transpose\right]dt\nonumber \\
 & \quad+\int_{\boldsymbol{y}\in\boldsymbol{Y}}\E_{t^{-}}^{\mc N}\left[\omega_{t^{-}}^{\boldsymbol{y}}X_{t^{-}}X_{t^{-}}\transpose\right]I\left(dt,d\boldsymbol{y}\right)\label{eq:dXXT}
\end{align}
To calculate $d\left(\mu_{t}\mu_{t}\transpose\right)$ from (\ref{eq:dmu})
we separately handle the continuous terms, and the jump term involving
$N\left(dt,d\boldsymbol{y}\right)$. The continuous terms are the
continuous part of $d\mu_{t}\mu_{t}\transpose+\mu_{t}d\mu_{t}\transpose$.
To compute the jump terms, we note that when $\mu_{t}$ jumps by $\Delta_{t}$,
the corresponding jump in $\mu_{t}\mu_{t}\transpose$ is $\Delta_{t}\mu_{t}\transpose+\mu_{t}\Delta_{t}\transpose+\Delta_{t}\Delta_{t}\transpose$,
therefore
\begin{align}
d\left(\mu_{t}\mu_{t}\transpose\right)= & \left(\E_{t}^{\mc N}\left[A\left(X_{t}\right)\right]\mu_{t}\transpose+\mu_{t}\E_{t}^{\mc N}\left[A\left(X_{t}\right)\right]\transpose\right)dt+\left[B\left(U_{t}\right)\mu_{t}\transpose+\mu_{t}B\left(U_{t}\right)\transpose\right]dt\nonumber \\
 & -\int_{\boldsymbol{y}\in\boldsymbol{Y}}\left(\E_{t}^{\mc N}\left[\omega_{t}^{\boldsymbol{y}}X_{t}\right]\mu_{t}\transpose+\mu_{t}\E_{t}^{\mc N}\left[\omega_{t}^{\boldsymbol{y}}X_{t}\right]\transpose\right)\hat{\lambda}_{t}\left(\boldsymbol{y}\right)f\left(d\boldsymbol{y}\right)dt\\
 & +\int_{\boldsymbol{y}\in\boldsymbol{Y}}\Big(\E_{t^{-}}^{\mc N}\left[\omega_{t^{-}}^{\boldsymbol{y}}X_{t^{-}}\right]\mu_{t}\transpose+\mu_{t}\E_{t}^{\mc N}\left[\omega_{t^{-}}^{\boldsymbol{y}}X_{t}\right]\transpose\nonumber \\
 & \qquad+\E_{t^{-}}^{\mc N}\left[\omega_{t^{-}}^{\boldsymbol{y}}X_{t^{-}}\right]\E_{t^{-}}^{\mc N}\left[\omega_{t^{-}}^{\boldsymbol{y}}X_{t^{-}}\right]\transpose\Big)N\left(dt,d\boldsymbol{y}\right)\label{eq:dmu-mu}
\end{align}
Subtracting (\ref{eq:dmu-mu}) from (\ref{eq:dXXT}), and noting that
$\E_{t}^{\mc N}\omega_{t}^{\boldsymbol{y}}=0$, so that
\begin{align*}
\E_{t^{-}}^{\mc N}\left[\omega_{t^{-}}^{\boldsymbol{y}}\tilde{X}_{t^{-}}\tilde{X}_{t^{-}}\transpose\right] & =\E_{t^{-}}^{\mc N}\left[\omega_{t^{-}}^{\boldsymbol{y}}X_{t^{-}}X_{t^{-}}\transpose\right]\\
 & \quad-\E_{t^{-}}^{\mc N}\left[\omega_{t^{-}}^{\boldsymbol{y}}X_{t^{-}}\right]\mu_{t^{-}}\transpose-\mu_{t^{-}}\E_{t^{-}}^{\mc N}\left[\omega_{t^{-}}^{\boldsymbol{y}}X_{t^{-}}\right]\transpose
\end{align*}
yields (\ref{eq:var-I}).

Writing (\ref{eq:mean-I})-(\ref{eq:var-I}) according to the decomposition
described in section \ref{subsec:exact},
\begin{align}
d\mu_{t}^{\pi} & =\left(\E_{t}^{\mathcal{N}}\left[A\left(X_{t}\right)\right]+B\left(U_{t}\right)\right)dt\nonumber \\
d\Sigma_{t}^{\pi} & =\E_{t}^{\mathcal{N}}\left[A\left(X_{t}\right)\tilde{X}_{t}\transpose+\tilde{X}_{t}A\left(X_{t}\right)\transpose+D\left(X_{t}\right)D\left(X_{t}\right)\transpose\right]dt\nonumber \\
d\mu_{t}^{\mathrm{c}} & =-\int_{\boldsymbol{y}\in\boldsymbol{Y}}\E_{t}^{\mathcal{N}}\left[\omega_{t}^{\boldsymbol{y}}X_{t}\right]\hat{\lambda}\left(\boldsymbol{y}\right)f\left(d\boldsymbol{y}\right)dt\label{eq:mean-c}\\
d\Sigma_{t}^{\mathrm{c}} & =-\int_{\boldsymbol{y}\in\boldsymbol{Y}}\E_{t}^{\mathcal{N}}\left[\omega_{t}^{\boldsymbol{y}}\tilde{X}_{t}\tilde{X}_{t}\transpose\right]\hat{\lambda}\left(\boldsymbol{y}\right)f\left(d\boldsymbol{y}\right)dt\label{eq:var-c}\\
d\mu_{t}^{N} & =\int_{\boldsymbol{y}\in\boldsymbol{Y}}\E_{t^{-}}^{\mathcal{N}}\left[\omega_{t^{-}}^{\boldsymbol{y}}X_{t^{-}}\right]N\left(dt,d\boldsymbol{y}\right)\label{eq:mean-N}\\
d\Sigma_{t}^{N} & =\int_{\boldsymbol{y}\in\boldsymbol{Y}}\bigg(\E_{t^{-}}^{\mathcal{N}}\left[\omega_{t^{-}}^{\boldsymbol{y}}\tilde{X}_{t^{-}}\tilde{X}_{t^{-}}\transpose\right]-\E_{t^{-}}^{\mathcal{N}}\left[\omega_{t^{-}}^{\boldsymbol{y}}X_{t^{-}}\right]\E_{t^{-}}^{\mathcal{N}}\left[\omega_{t^{-}}^{\boldsymbol{y}}X_{t^{-}}\transpose\right]\bigg)N\left(dt,d\boldsymbol{y}\right).\label{eq:var-N}
\end{align}

\subsection{ADF approximation for Gaussian tuning\label{subsec:ADF-gaussian-derivation}}

We now proceed to apply the Gaussian ADF approximation $p_{t}^{\,\mathcal{N}}\left(x\right)\approx\mathcal{N}\left(x;\mu_{t},\Sigma_{t}\right)$
to (\ref{eq:mean-c})-(\ref{eq:var-N}) in the case of Gaussian neurons
(\ref{eq:gauss-tc}), deriving approximate filtering equations written
in terms of the population density $f\left(d\boldsymbol{y}\right)$.
From here on we use $\mu_{t},\Sigma_{t}$, and $p_{t}^{\,\mathcal{N}}$
to refer to the ADF approximation rather than to the exact values.

We use the following algebraic results. The first is a slightly generalized
form of a well-known result about the sum of quadratic forms, which
is useful for multiplying Gaussians with possibly degenerate precision
matrices. 
\begin{claim}
\label{claim:quad-sum}Let $x,a,b\in\mathbb{R}^{n}$ and $A,B\in\mathbb{R}^{n\times n}$
be symmetric matrices such that $A+B$ is non-singular. Then
\[
\left\Vert x-a\right\Vert _{A}^{2}+\left\Vert x-b\right\Vert _{B}^{2}=\left\Vert a-b\right\Vert _{A\left(A+B\right)^{-1}B}^{2}+\left\Vert x-\left(A+B\right)^{-1}\left(Aa+Bb\right)\right\Vert _{A+B}^{2}.
\]
\end{claim}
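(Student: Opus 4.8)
The plan is to regard the left-hand side as a quadratic form in $x$ and complete the square: the completed-square term will furnish the second term on the right, and the residual (constant in $x$) must be shown to equal the first. First I would expand, using only the symmetry of $A$ and $B$ (never their invertibility), to get
\[
\|x-a\|_A^2+\|x-b\|_B^2 = x\transpose(A+B)x - 2(Aa+Bb)\transpose x + a\transpose A a + b\transpose B b .
\]
All of the $x$-dependence is thus carried by $A+B$, which is non-singular by hypothesis, so completing the square is legitimate.

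Next, writing $C\triangleq A+B$ and $m\triangleq C^{-1}(Aa+Bb)$, the identity $x\transpose C x - 2(Aa+Bb)\transpose x = \|x-m\|_C^2 - (Aa+Bb)\transpose C^{-1}(Aa+Bb)$ immediately produces the term $\|x-(A+B)^{-1}(Aa+Bb)\|_{A+B}^2$. It then remains only to verify that the leftover constant
\[
a\transpose A a + b\transpose B b - (Aa+Bb)\transpose C^{-1}(Aa+Bb)
\]
equals $\|a-b\|_{A(A+B)^{-1}B}^2$.

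The crux is a single algebraic identity. Substituting $B=C-A$ gives $AC^{-1}B = AC^{-1}(C-A) = A - AC^{-1}A$, and symmetrically $AC^{-1}B = B - BC^{-1}B$ as well as $BC^{-1}A = A - AC^{-1}A = AC^{-1}B$; hence $M\triangleq AC^{-1}B$ is symmetric and coincides with both $A-AC^{-1}A$ and $B-BC^{-1}B$. Expanding the cross term as $a\transpose AC^{-1}A\,a + 2a\transpose AC^{-1}B\,b + b\transpose BC^{-1}B\,b$ and using $a\transpose A a - a\transpose AC^{-1}A\,a = a\transpose M a$ and $b\transpose B b - b\transpose BC^{-1}B\,b = b\transpose M b$, the constant collapses to $a\transpose M a - 2a\transpose M b + b\transpose M b = \|a-b\|_M^2$ with $M=A(A+B)^{-1}B$, as required.

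The only real subtlety — and the reason the statement is phrased for general symmetric $A,B$ rather than invertible ones — is that the argument must never invert $A$ or $B$ separately (the degenerate-precision case that motivates the generalization). The identity $A(A+B)^{-1}B = A - A(A+B)^{-1}A = B - B(A+B)^{-1}B$ is exactly what sidesteps this difficulty: it simultaneously guarantees that $A(A+B)^{-1}B$ is symmetric, so that the notation $\|a-b\|^2_{A(A+B)^{-1}B}$ is unambiguous, and reduces the residual constant to a single quadratic form. Everything else is routine bookkeeping with the symmetry of $A$ and $B$.
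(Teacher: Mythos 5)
Your proof is correct and amounts to the same direct algebraic computation the paper invokes (its proof is literally ``by straightforward expansion of each side''); you simply organize the expansion constructively, completing the square in $x$ and collapsing the residual via $A(A+B)^{-1}B = A - A(A+B)^{-1}A = B - B(A+B)^{-1}B$, rather than expanding both sides and comparing. The verification of that identity and of the symmetry of $A(A+B)^{-1}B$ is a welcome detail the paper leaves implicit.
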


\begin{proof}
By straightforward expansion of each side.
\end{proof}
Next we note two matrix inversion lemmas, the first of which is known
as the Woodbury identity. These are useful for transferring variance
matrices between state and perceptual coordinates in our model. Derivations
may be found in \citet{Henderson1981}.
\begin{claim}
\label{claim:inversion}For $U,V\transpose\in\mathbb{R}^{m\times n}$
and non-singular $A\in\mathbb{R}^{m\times m},C\in\mathbb{R}^{n\times n}$,
the following equalities hold
\begin{align}
\left(A+UCV\right)^{-1} & =A^{-1}-A^{-1}U\left(C^{-1}+VA^{-1}U\right)^{-1}VA^{-1},\label{eq:woodbury}\\
A^{-1}U\left(C^{-1}+VA^{-1}U\right)^{-1} & =\left(A+UCV\right)^{-1}UC,\nonumber 
\end{align}
whenever all the relevant inverses exist.
\end{claim}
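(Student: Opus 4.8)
The plan is to verify both identities by direct matrix multiplication, since each asserts a purely algebraic equality among rational expressions in $A, C, U, V$, with all relevant inverses assumed to exist; no structural or limiting argument is needed. Throughout I would abbreviate $M \triangleq C^{-1} + VA^{-1}U$, the inner factor whose inverse appears on both lines of the claim.

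First I would establish the Woodbury identity (\ref{eq:woodbury}) by multiplying $A + UCV$ against the proposed inverse and checking that the product is the identity. Expanding gives
\[
(A + UCV)\bigl[A^{-1} - A^{-1}U M^{-1} V A^{-1}\bigr] = I + UCVA^{-1} - U\bigl(I + CVA^{-1}U\bigr)M^{-1}VA^{-1}.
\]
The crucial regrouping is to recognize $I + CVA^{-1}U = C\bigl(C^{-1} + VA^{-1}U\bigr) = CM$, so that the last term collapses to $UC\,M M^{-1} V A^{-1} = UCVA^{-1}$, which cancels the middle term and leaves $I$. The opposite product is checked the same way, or one may simply invoke uniqueness of the inverse once one side is verified.

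For the second (``push-through'') identity I would avoid a fresh expansion and instead substitute the Woodbury formula just proved into the right-hand side $(A + UCV)^{-1}UC$. Writing $(A+UCV)^{-1} = A^{-1} - A^{-1}UM^{-1}VA^{-1}$ and factoring $A^{-1}U M^{-1}$ on the left reduces the remaining bracket to $MC - VA^{-1}UC = (C^{-1} + VA^{-1}U)C - VA^{-1}UC = I$, which yields exactly $A^{-1}U M^{-1} = A^{-1}U(C^{-1}+VA^{-1}U)^{-1}$. A cleaner alternative is to cross-multiply: left-multiplying both sides by $A+UCV$ and right-multiplying by $M$ reduces each side to $U + UCVA^{-1}U$, establishing the equality directly.

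There is no genuine obstacle here beyond bookkeeping; both are standard (see \citet{Henderson1981}). The one step requiring care is the single algebraic regrouping $I + CVA^{-1}U = CM$ (and its analogue $MC - VA^{-1}UC = I$), which is precisely what makes the inner inverse $M^{-1}$ cancel cleanly — everything else is routine distribution of products. Since all inverses are assumed to exist, no attention to degenerate cases is required.
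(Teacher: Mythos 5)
Your verification is correct: both multiplications check out, and the two regroupings you flag as the crux --- $I + CVA^{-1}U = C(C^{-1}+VA^{-1}U) = CM$ for the Woodbury identity, and $MC - VA^{-1}UC = I$ for the push-through identity --- are exactly the right cancellations. Note, however, that the paper does not prove this claim at all; it simply states it and refers the reader to \citet{Henderson1981} for derivations. So there is no in-paper argument to compare against, and your self-contained direct verification is a perfectly adequate (indeed more informative) substitute. The only point worth adding is that your closing remark that ``no attention to degenerate cases is required'' is consistent with the claim's own hedge ``whenever all the relevant inverses exist'': the hypotheses only require $A$ and $C$ non-singular, so the existence of $(C^{-1}+VA^{-1}U)^{-1}$ and $(A+UCV)^{-1}$ is an additional standing assumption rather than a consequence, and your cross-multiplication argument for the second identity implicitly uses both.
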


To evaluate the posterior of expectations in (\ref{eq:mean-c})-(\ref{eq:var-N})
we first simplify the expression
\begin{equation}
p_{t}^{\,\mathcal{N}}\left(x\right)\omega_{t}\left(x;\boldsymbol{y}\right)=\frac{p_{t}^{\,\mathcal{N}}\left(x\right)\lambda\left(x;\boldsymbol{y}\right)}{\int p_{t}^{\,\mathcal{N}}\left(\xi\right)\lambda\left(\xi;\boldsymbol{y}\right)d\xi}-p_{t}^{\,\mathcal{N}}\left(x\right).\label{eq:P-omega}
\end{equation}

Using the Gaussian ADF approximation $p_{t}^{\,\mathcal{N}}\left(x\right)=\mathcal{N}\left(x;\mu_{t},\Sigma_{t}\right)$,
equation (\ref{eq:gauss-tc-y}), and Claim \ref{claim:quad-sum},
we find 
\begin{align*}
p_{t}^{\,\mathcal{N}}\left(x\right) & \lambda\left(x;h,\theta,H,R\right)=\\
 & =h\mathcal{N}\left(x;\mu_{t},\Sigma_{t}\right)\exp\left(-\frac{1}{2}\left\Vert Hx-\theta\right\Vert _{R}^{2}\right)\\
 & =\frac{h\exp\left(-\frac{1}{2}\left\Vert x-\mu_{t}\right\Vert _{\Sigma_{t}^{-1}}^{2}-\frac{1}{2}\left\Vert Hx-\theta\right\Vert _{R}^{2}\right)}{\sqrt{\left(2\pi\right)^{n}\left|\Sigma_{t}\right|}}\\
 & =\frac{h}{\sqrt{\left(2\pi\right)^{n}\left|\Sigma_{t}\right|}}\exp\left(-\frac{1}{2}\left\Vert H_{r}^{-1}\theta-\mu_{t}\right\Vert _{Q_{t}}^{2}-\frac{1}{2}\left\Vert x-\mu_{t}^{\theta}\right\Vert _{\Sigma_{t}^{-1}+H^{T}RH}^{2}\right),
\end{align*}
where $H_{r}^{-1}$ is any right inverse of $H$, and 
\begin{eqnarray}
Q_{t} & \triangleq & \Sigma_{t}^{-1}\left(\Sigma_{t}^{-1}+H\transpose RH\right)^{-1}H\transpose RH,\label{eq:Q_t}\\
\mu_{t}^{\theta} & \triangleq & \left(\Sigma_{t}^{-1}+H\transpose RH\right)^{-1}\left(\Sigma_{t}^{-1}\mu_{t}+H\transpose R\theta\right).\label{eq:mu-theta}
\end{eqnarray}
To simplify the notation, we suppress the dependence of these and
other quantities on $H$ and $R$ throughout this section. Claim \ref{claim:inversion}
establishes the relation $Q_{t}=H\transpose S_{t}H$, where 
\[
S_{t}\triangleq\left(R^{-1}+H\Sigma_{t}H\transpose\right)^{-1},
\]
yielding 
\begin{align}
 & p_{t}^{\,\mathcal{N}}\left(x\right)\lambda\left(x;h,\theta,H,R\right)=h\left(2\pi\right)^{-n/2}\left|\Sigma_{t}\right|^{-1}\label{eq:post-times-rate}\\
 & \quad\times\exp\left(-\frac{1}{2}\left\Vert \theta-H\mu_{t}\right\Vert _{S_{t}}^{2}-\frac{1}{2}\left\Vert x-\mu_{t}^{\theta}\right\Vert _{\Sigma_{t}^{-1}+H\transpose RH}^{2}\right),\nonumber 
\end{align}
and by normalizing this Gaussian (see (\ref{eq:P-omega})) we find
that 
\begin{align*}
p_{t}^{\,\mathcal{N}}\left(x\right)\omega_{t}\left(x;h,\theta,R\right) & =\mathcal{N}\left(x;\mu_{t}^{\theta},\left(\Sigma_{t}^{-1}+H\transpose RH\right)^{-1}\right)-p_{t}^{\,\mathcal{N}}\left(x\right)
\end{align*}
yielding
\begin{align}
\E_{t}^{\mc N}\left[\omega_{t}\left(\boldsymbol{y}\right)X_{t}\right] & =\mu_{t}^{\theta}-\mu_{t}\label{eq:mu-diff}\\
\E_{t}^{\mc N}\left[\omega_{t}\left(\boldsymbol{y}\right)\tilde{X}_{t}\tilde{X}_{t}^{T}\right] & =\left(\Sigma_{t}^{-1}+H\transpose RH\right)^{-1}+\left(\mu_{t}-\mu_{t}^{\theta}\right)\left(\mu_{t}-\mu_{t}^{\theta}\right)\transpose-\Sigma_{t}.\nonumber 
\end{align}
Using Claim \ref{claim:inversion}, the difference $\mu_{t}^{\theta}-\mu_{t}$
may be rewritten as 
\begin{align*}
\mu_{t}^{\theta}-\mu_{t} & =\left(\Sigma_{t}^{-1}+H\transpose RH\right)^{-1}\left(\Sigma_{t}^{-1}\mu_{t}+H\transpose R\theta\right)-\mu_{t}\\
 & =-\left(\Sigma_{t}^{-1}+H\transpose RH\right)^{-1}H\transpose RH\mu_{t}+\left(\Sigma_{t}^{-1}+H\transpose RH\right)^{-1}H\transpose R\theta\\
 & =-\left(\Sigma_{t}^{-1}+H\transpose RH\right)^{-1}H\transpose R\delta_{t}\\
 & =-\Sigma_{t}H\transpose S_{t}\delta_{t},
\end{align*}
where $\delta_{t}=H\mu_{t}-\theta$, and an application of the Woodbury
identity (\ref{eq:woodbury}) yields
\begin{align*}
\left(\Sigma_{t}^{-1}+H\transpose RH\right)^{-1}-\Sigma_{t} & =\Sigma_{t}\left(\left(I+H\transpose RH\Sigma_{t}\right)^{-1}-I\right)\\
 & =\Sigma_{t}\left(I-H\transpose\left(R^{-1}+H\Sigma_{t}H\transpose\right)^{-1}H\Sigma_{t}-I\right)\\
 & =-\Sigma_{t}H\transpose S_{t}H\Sigma_{t}.
\end{align*}
 Now,
\begin{align*}
\mathrm{E}_{P}^{t} & \left[\omega_{t}\left(\boldsymbol{y}\right)X_{t}\right]=-\Sigma_{t}H^{T}S_{t}\delta_{t},\\
\mathrm{E}_{P}^{t} & \left[\omega_{t}\left(\boldsymbol{y}\right)\tilde{X}_{t}\tilde{X}_{t}^{T}\right]=\Sigma_{t}H^{T}\left(S_{t}\delta_{t}\delta_{t}\transpose S_{t}-S_{t}\right)H\Sigma_{t}.
\end{align*}
Plugging this result into (\ref{eq:mean-c})-(\ref{eq:var-N}) yields
(\ref{eq:mean-c-gaussian})-(\ref{eq:var-N-gaussian}). Integrating
(\ref{eq:post-times-rate}) over $x$ yields
\[
\hat{\lambda}\left(h,\theta,H,R\right)=\frac{h}{\sqrt{\left|\left(I+\Sigma_{t}H\transpose RH\right)\right|}}\exp\left(-\frac{1}{2}\left\Vert \theta-H\mu_{t}\right\Vert _{S_{t}}^{2}\right).
\]
Sylvester's determinant identity yields the equality $\left|I+\Sigma_{t}H^{T}RH\right|=\left|R\right|/\left|S_{t}^{R}\right|$,
from which (\ref{eq:self-rate-gaussian}) follows.

The continuous precision update (\ref{eq:prec-c-gaussian}) follows
directly from (\ref{eq:var-c-gaussian}) and the relation
\[
\frac{d\Sigma_{t}^{-1}}{dt}=-\Sigma_{t}^{-1}\frac{d\Sigma_{t}}{dt}\Sigma_{t}^{-1},
\]
which holds whenever $\Sigma_{t}$ is differentiable -- i.e., between
spikes. To derive (\ref{eq:prec-N-gaussian}), consider a spike at
time $t$ with mark $\boldsymbol{y}=\left(h,\theta,H,R\right)$. According
to (\ref{eq:var-N-gaussian}), 
\begin{align*}
\Sigma_{t}^{-1} & =\left(\Sigma_{t^{-}}-\Sigma_{t^{-}}H\transpose S_{t^{-}}H\Sigma_{t^{-}}\right)^{-1}\\
 & =\left(I-H\transpose S_{t^{-}}H\Sigma_{t^{-}}\right)^{-1}\Sigma_{t^{-}}^{-1}\\
 & =\left(I+H\transpose\left(S_{t^{-}}^{-1}-H\Sigma_{t^{-}}H\transpose\right)^{-1}H\Sigma_{t^{-}}\right)\Sigma_{t^{-}}^{-1}\tag{Woodbury}\\
 & =\Sigma_{t^{-}}^{-1}+H\transpose RH\tag{from \eqref{eq:S}}
\end{align*}
Finally, equation (\ref{eq:mean-N}) and (\ref{eq:mu-diff}) yields
$d\mu_{t}^{N}=\int\left(\mu_{t^{-}}^{\theta}-\mu_{t^{-}}\right)N\left(dt,d\boldsymbol{y}\right)$,
so at spike times 
\[
\mu_{t^{+}}=\mu_{t^{-}}^{\theta}=\left(\Sigma_{t^{-}}^{-1}+H\transpose RH\right)^{-1}\left(\Sigma_{t^{-}}^{-1}\mu_{t^{-}}+H\transpose R\theta\right).
\]
The finite-population case is (\ref{eq:mean-spike}).

\subsection{Approximation of continuous terms for specific population distributions}

\subsubsection{Gaussian population}

When the preferred stimuli are normally distributed (\ref{eq:gaussian-f}),
the continuous update terms (\ref{eq:mean-c-gaussian})-(\ref{eq:var-c-gaussian})
may be computed analogously to the derivation in section (\ref{subsec:ADF-gaussian-derivation})
above. First, starting from (\ref{eq:self-rate-gaussian}), an analogous
computation to the derivation of (\ref{eq:post-times-rate}) and (\ref{eq:self-rate-gaussian})
above yields
\begin{align*}
\hat{\lambda}_{t}\left(h,\theta,H,R\right)f\left(d\theta\right) & =\hat{\lambda}_{t}^{f}\cdot\mathcal{N}\left(\theta;\mu_{t}^{f},\left(\Sigma_{\mathrm{pop}}^{-1}+S_{t}\right)^{-1}\right)d\theta,
\end{align*}
where 
\[
\mu_{t}^{f}\triangleq\left(\Sigma_{\mathrm{pop}}^{-1}+S_{t}\right)^{-1}\left(S_{t}\mu_{t}+\Sigma_{\mathrm{pop}}^{-1}c\right).
\]
Integrating this equation over $\theta$ and applying Sylvester's
determinant lemma as in the derivation of (\ref{eq:self-rate-gaussian})
yields (\ref{eq:total-rate-gauss-gauss}). The matrix $Z_{t}^{H,R}$
(eq. (\ref{eq:Z^HR})) and vector $\mu_{t}^{f}$ play an analogous
role to that of $S_{t}$ and $\mu_{t}^{\theta}$ respectively in section
(\ref{subsec:ADF-gaussian-derivation}). Substituting into (\ref{eq:mean-c-gaussian})-(\ref{eq:var-c-gaussian})
and simplifying analogously yields (\ref{eq:c-gauss-gauss}).

\subsubsection{Uniform population on an interval}

In this case $R,h$ are constant, $m=n=1$ and $H=1$, and the preferred
stimulus distribution is $f\left(d\theta\right)=1_{\left[a,b\right]}\left(\theta\right)d\theta$,
so (\ref{eq:mean-c-gaussian})-(\ref{eq:var-c-gaussian}) take the
form

\begin{align}
d\mu_{t}^{\mathrm{c}} & =\frac{\sigma_{t}^{2}}{\sigma_{t}^{2}+\alpha^{2}}\int_{a}^{b}\left(\mu_{t}-\theta\right)\hat{\lambda}_{t}\left(\theta\right)d\theta dt\label{eq:mean-c-gaussian-1d}\\
d\sigma_{t}^{2,\mathrm{c}} & =\left(\hat{\lambda}_{t}^{f}-\frac{\int_{a}^{b}\left(\theta-\mu_{t}\right)^{2}\hat{\lambda}_{t}\left(\theta\right)d\theta}{\sigma_{t}^{2}+\alpha^{2}}\right)\frac{\sigma_{t}^{2}}{\sigma_{t}^{2}+\alpha^{2}}\sigma_{t}^{2}dt\label{eq:var-c-gaussian-1d}\\
\hat{\lambda}_{t}\left(\theta\right) & =h\sqrt{2\pi\alpha^{2}}\mathcal{N}\left(\theta;\mu_{t},\sigma_{t}^{2}+\alpha^{2}\right),\nonumber 
\end{align}
where $\sigma_{t}^{2}=\Sigma_{t},\sigma_{\mathrm{r}}^{2}=R^{-1}$,
and we suppressed the dependence of $\hat{\lambda}$ on $h,R$ from
the notation, since $h,R$ are fixed. The integrals can be computed
from the following identities
\begin{eqnarray*}
\int_{a}^{b}\mathcal{N}\left(x;\mu,\sigma^{2}\right)\left(x-\mu\right)\,dx & = & -z\sigma,\\
\int_{a}^{b}\mathcal{N}\left(x;\mu,\sigma^{2}\right)\left(x-\mu\right)^{2}\,dx & = & \left(Z-z'\right)\sigma^{2},
\end{eqnarray*}
where 
\begin{align*}
a' & \triangleq\frac{a-\mu}{\sigma}\,, & Z & \triangleq\int_{a'}^{b'}\phi=\Phi\left(b'\right)-\Phi\left(a'\right)\\
b' & \triangleq\frac{b-\mu}{\sigma}\,, & z & \triangleq\phi\left(b'\right)-\phi\left(a'\right),\\
\phi\left(x\right) & \triangleq\mathcal{N}\left(x;0,1\right), & z' & \triangleq b'\phi\left(b'\right)-a'\phi\left(a'\right).
\end{align*}
Writing
\begin{align*}
a & '_{t}\triangleq\frac{a-\mu_{t}}{\sqrt{\sigma_{t}^{2}+\alpha^{2}}} & Z_{t} & \triangleq\Phi\left(b'_{t}\right)-\Phi\left(a'_{t}\right)\\
b'_{t} & \triangleq\frac{b-\mu_{t}}{\sqrt{\sigma_{t}^{2}+\alpha^{2}}} & z_{t} & \triangleq\phi\left(b'_{t}\right)-\phi\left(a'_{t}\right),\\
 &  & z'_{t} & \triangleq b'_{t}\phi\left(b'_{t}\right)-a'_{t}\phi\left(a'_{t}\right).
\end{align*}
we find that
\begin{align*}
\int_{a}^{b}\left(\mu_{t}-\theta\right)\hat{\lambda}_{t}\left(\theta\right)d\theta & =h\sqrt{2\pi\alpha^{2}}\int_{a}^{b}\left(\mu_{t}-\theta\right)\mathcal{N}\left(\theta;\mu_{t},\sigma_{t}^{2}+\alpha^{2}\right)d\theta\\
 & =h\sqrt{2\pi\alpha^{2}}z_{t}\sqrt{\sigma_{t}^{2}+\alpha^{2}},\\
\int_{a}^{b}\left(\theta-\mu_{t}\right)^{2}\hat{\lambda}_{t}\left(\theta\right)d\theta & =h\sqrt{2\pi\alpha^{2}}\int_{a}^{b}\left(\theta-\mu_{t}\right)^{2}\mathcal{N}\left(\theta;\mu_{t},\sigma_{t}^{2}+\alpha^{2}\right)d\theta\\
 & =h\sqrt{2\pi\alpha^{2}}\left(Z_{t}-z'_{t}\right)\left(\sigma_{t}^{2}+\alpha^{2}\right)\\
\hat{\lambda}_{t}^{f}=\int_{a}^{b}\hat{\lambda}_{t}\left(\theta\right)d\theta & =h\sqrt{2\pi\alpha^{2}}Z_{t}.
\end{align*}

Substitution into (\ref{eq:mean-c-gaussian-1d})-(\ref{eq:var-c-gaussian-1d})
yields (\ref{eq:c-interval}).

\section{Implementation Details}

\subsection{State dynamics}

All simulations in this paper use linear dynamics of the form
\begin{equation}
dX_{t}=AX_{t}dt+D\,dW_{t},\label{eq:linear}
\end{equation}
which are implemented via a straightforward Euler scheme. Specifically,
for step-size $\Delta t$, we approximate $X_{k\Delta t}$ by $x_{k}$
where
\begin{align*}
x_{0} & =X_{0},\\
x_{k+1} & =x_{k}+Ax_{k}\Delta t+D\xi_{k}\sqrt{\Delta t},
\end{align*}
and $\xi_{k}$ is a sequence of independent standard normal variables
(independent of $X_{0}$).

\subsection{Continuous neural population}

The simulation of marked point processes, used to model continuous
neural populations (see Section \ref{subsec:marked}), involves the
generation of random times and corresponding random marks. In the
case of a finite population, there is a finite number of marks, and
the point process corresponding to each mark may be simulated separately.
For an infinite population, a different approach is required.

Given the intensity kernel $\lambda\left(X_{t};\boldsymbol{y}\right)f\left(d\boldsymbol{y}\right),$
we simulate a marked point process in two stages: first generating
the random times (spike times), and then the random marks (neuron
parameters). To generate the random times $\left(t_{1},\ldots t_{N_{T}}\right),$
note that the total history-conditioned firing rate at time $t$ is
given by 
\[
r\left(X_{t}\right)\triangleq\int_{Y}\lambda\left(X_{t};\boldsymbol{y}\right)f\left(d\boldsymbol{y}\right),
\]
and the unmarked process $N_{t}$ is a doubly-stochastic Poisson process
with random rate $r\left(X_{t}\right)$. Conditioned on $X$ and the
point process history, each random mark $\boldsymbol{y}_{i}$ is distributed
\[
\boldsymbol{y}_{i}|X,t_{1},\ldots t_{i},\boldsymbol{y}_{1}\ldots\boldsymbol{y}_{i-1}\sim\kappa\left(X_{t_{i}},d\boldsymbol{y}\right)\triangleq\frac{\lambda\left(X_{t_{i}};\boldsymbol{y}\right)f\left(d\boldsymbol{y}\right)}{r\left(X_{t_{i}}\right)}.
\]
(see \citet[Theorem T6]{Bremaud81}). 

\begin{table}
\centering{}\begin{tiny}%
\begin{tabular}{ccc}
\toprule 
$f\left(d\theta\right)$ & $r\left(x\right)$ & $\kappa\left(x;d\theta\right)$\tabularnewline
\midrule
\midrule 
$\delta_{\theta_{0}}\left(d\theta\right)$ & $\lambda\left(x;\theta_{0}\right)$ & $\delta_{\theta_{0}}\left(d\theta\right)$\tabularnewline
\midrule 
$d\theta$ & $h\sqrt{\frac{\left(2\pi\right)^{m}}{\det\left(R\right)}}$ & $\mathcal{N}\left(\theta;Hx,R^{-1}\right)d\theta$\tabularnewline
\midrule 
\multirow{2}{*}{$\mathcal{N}\left(\theta;c,G\right)d\theta$} & \multirow{2}{*}{$h\sqrt{\frac{\left(2\pi\right)^{m}}{\det\left(R\right)}}\mathcal{N}\left(c;Hx,R^{-1}+G\right)$} & $\mathcal{N}\left(\theta;GR_{\mathrm{G}}Hx+R^{-1}R_{\mathrm{G}}c,\left(R+G^{-1}\right)^{-1}\right)d\theta,$\tabularnewline
 &  & where $R_{\mathrm{G}}=\left(R^{-1}+G\right)^{-1}$\tabularnewline
\midrule 
\multirow{2}{*}{$\boldsymbol{1}\left\{ a\leq\theta\leq b\right\} d\theta$} & $h\sqrt{\frac{2\pi}{R}}\left[\Phi(z_{b}\left(x\right))-\Phi(z_{a}\left(x\right))\right]$, & $\mathcal{N}_{\left[a,b\right]}\left(\theta;Hx,R^{-1}\right)d\theta$\tabularnewline
 & where $z_{s}\left(x\right)=\sqrt{R}\left(s-Hx\right)$ & (truncated normal distribution)\tabularnewline
\bottomrule
\end{tabular}\end{tiny}\caption{Total population rates $r\left(x\right)$ and mark sampling distributions
$\kappa\left(x;d\theta\right)$ for the preferred stimulus distributions
$f\left(d\theta\right)$ of section \ref{subsec:cont-filtering} with
Gaussian tuning $\lambda\left(x;\theta\right)=h\exp\left(-\frac{1}{2}\left\Vert Hx-\theta\right\Vert _{R}^{2}\right)$.
The derivations of these closed forms is straightforward for the Dirac
and uniform distributions; the derivation for Gaussian distribution
is by multiplication of Gaussians, completely paralleling the computations
in section \ref{subsec:ADF-gaussian-derivation}. }
\label{total-rate-mark-sampling}
\end{table}
Accordingly, the simulation of the marked process $N$ proceeds as
follows:
\begin{enumerate}
\item Using the generated trajectory $x_{k}\approx X_{k\Delta t}$, simulate
a Poisson process with rate $r\left(X_{t}\right)$, yielding the random
times $(t_{1},\ldots t_{N_{T}})$. This may be accomplished either
via direct generation of a point for each time step with probability
$r\left(x_{k}\right)\Delta t$, or more efficiently via time rescaling
(see, e.g. \citet{brown2002time}).
\item Generate random marks $(\boldsymbol{y}_{1},\ldots\boldsymbol{y}_{N_{T}})$
by sampling independently from the distribution $\kappa\left(X_{t_{i}},d\boldsymbol{y}\right)$. 
\end{enumerate}
As in section \ref{subsec:cont-filtering}, when $h,H,R$ are fixed
across the population, we abuse notation and write $f\left(dh',d\theta,dH',dR'\right)=\delta_{h}\left(dh'\right)\delta_{H}\left(dH'\right)\delta_{R}\left(dR'\right)f\left(d\theta\right)$,
and similarly for $\kappa\left(x;d\theta\right)$. The functions $r\left(x\right)$
and the distribution $\kappa(x,d\theta)$ for each of the distributions
of preferred stimuli in section \ref{subsec:cont-filtering} are given
in closed form in Table \ref{total-rate-mark-sampling} . For a finite
heterogeneous mixture population, $r$ and $\kappa$ may be obtained
through the appropriate weighted summation; however, it is easier
to simulate each component separately.

\subsection{Filter}

Similarly to the state dynamics, we approximate the filter equations
\begin{align*}
d\mu_{t} & =d\mu_{t}^{\pi}+d\mu_{t}^{\mathrm{c}}+d\mu_{t}^{N},\\
d\Sigma_{t} & =d\Sigma_{t}^{\pi}+d\Sigma_{t}^{\mathrm{c}}+d\Sigma_{t}^{N},
\end{align*}
using a Euler approximation
\begin{align*}
\mu_{\left(k+1\right)\Delta t} & =\mu_{k\Delta t}+\left[\left(\frac{d\mu_{t}^{\pi}}{dt}+\frac{d\mu_{t}^{\mathrm{c}}}{dt}\right)\Delta t+d\mu_{t}^{N}\right]_{t=k\Delta t},\\
\Sigma_{\left(k+1\right)\Delta t} & =\Sigma_{k\Delta t}+\left[\left(\frac{d\Sigma_{t}^{\pi}}{dt}+\frac{d\Sigma_{t}^{\mathrm{c}}}{dt}\right)\Delta t+d\Sigma_{t}^{N}\right]_{t=k\Delta t}.
\end{align*}

For the linear dynamics (\ref{eq:linear}) used in simulations throughout
this paper, the prior terms $d\mu_{t}^{\pi},d\Sigma_{t}^{\pi}$ are
given by (\ref{eq:p-linear}). The continuous update terms $d\mu_{t}^{\mathrm{c}},d\Sigma_{t}^{\mathrm{c}}$
depend on the population as described in section \ref{subsec:cont-filtering}.
The discontinuous updates $d\mu_{t}^{N},d\Sigma_{t}^{N}$ are given
by (\ref{eq:mean-N-gaussian})-(\ref{eq:var-N-gaussian}), and are
non-zero only at time-steps containing a spike.

\section{Variance as proxy for MSE \label{sec:Variance-as-proxy}}

In section (\ref{sec:Encoding}), we studied optimal encoding using
the posterior variance as a proxy for the MSE. Letting $\mu_{t},\Sigma_{t}$
denote the approximate posterior moments given by the filter, the
MSE and posterior variance are related as follows,
\begin{eqnarray*}
\mathrm{MSE}_{t} & \triangleq & \E\left[\mathrm{tr}\left(X_{t}-\mu_{t}\right)\left(X_{t}-\mu_{t}\right)^{T}\right]=\E\E_{t}^{\mc N}\mathrm{tr}\left(X_{t}-\mu_{t}\right)\left(X_{t}-\mu_{t}\right)^{T}\\
 & = & \E\left[\mathrm{tr}\left(\mathrm{Var}_{t}^{\mc N}X_{t}\right)\right]+\E\left[\mathrm{tr}\left(\mu_{t}-\E_{t}^{\mc N}X_{t}\right)\left(\mu_{t}-\E_{t}^{\mc N}X_{t}\right)^{T}\right],
\end{eqnarray*}
where $\E_{t}^{\mc N}\left[\cdot\right],\mathrm{Var}_{t}^{\mc N}\left[\cdot\right]$
are resp.\ the mean and covariance matrix conditioned on $\mathcal{N}_{t}$,
and $\mathrm{tr}$ is the trace operator. Thus for an exact filter,
having $\mu_{t}=\E_{t}^{\mc N}X_{t},\Sigma_{t}=\mathrm{Var}_{t}^{\mc N}X_{t}$,
we would have $\mathrm{MSE}_{t}=\E[\mathrm{tr}\Sigma_{t}]$. Conversely,
if we find that $\mathrm{MSE}_{t}\approx\E[\mathrm{tr}\Sigma_{t}]$,
it suggests that the errors are small (though this is not guaranteed,
since the errors in $\mu_{t}$ and $\Sigma_{t}$ may effect the MSE
in opposite directions, if the variance is underestimated).

Figure \ref{variance-mse} shows the variance and MSE in estimating
the same process as in Figure \ref{harper}, after averaging across
10000 trials. The results indicate that the filter has good accuracy
with these parameters, so that the variance is a reasonable approximation
for the MSE.

\begin{figure}
\centering{}\subfloat[Narrow prior]{\begin{centering}
\includegraphics[height=3.5cm]{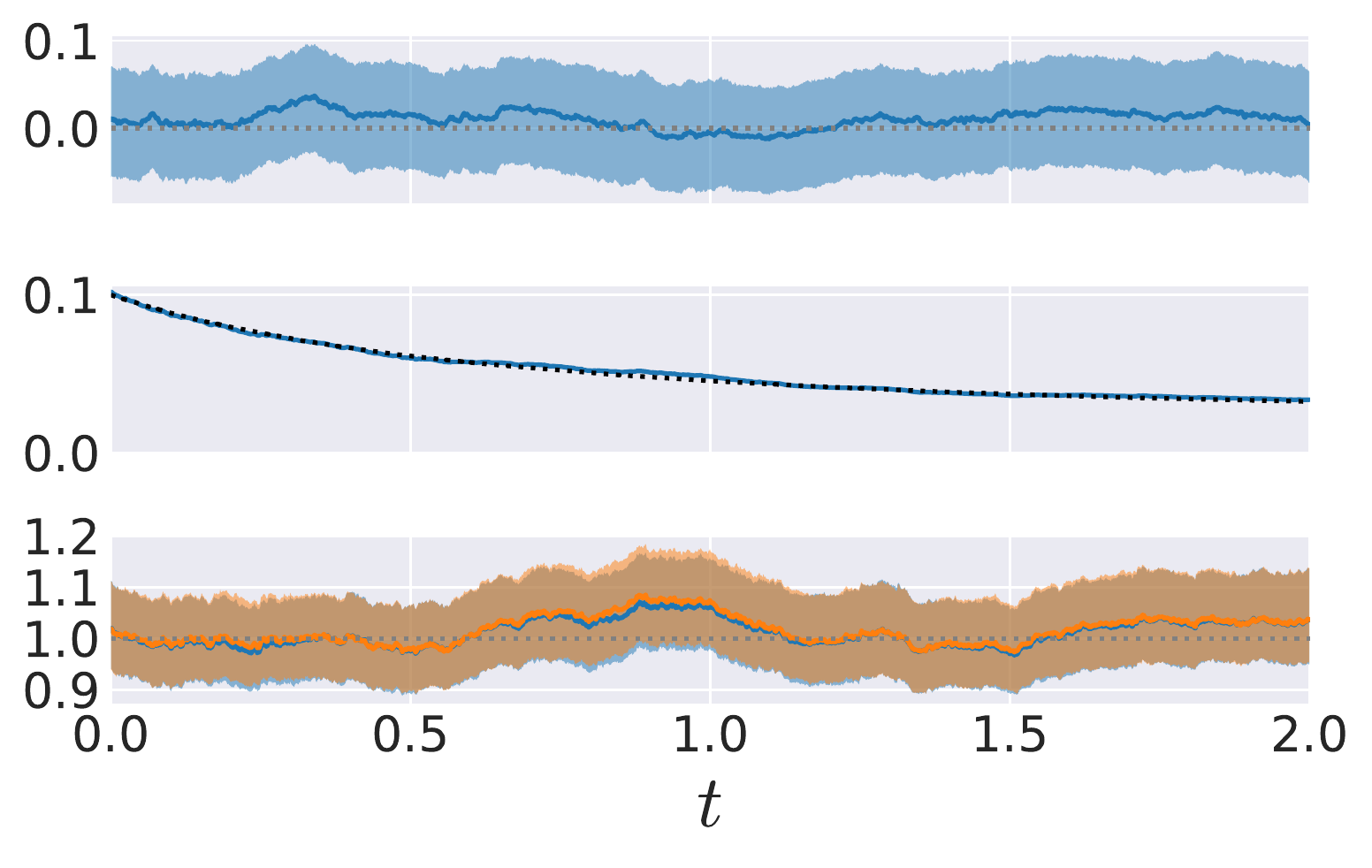}
\par\end{centering}
}\subfloat[Wide prior]{\begin{centering}
\includegraphics[height=3.5cm]{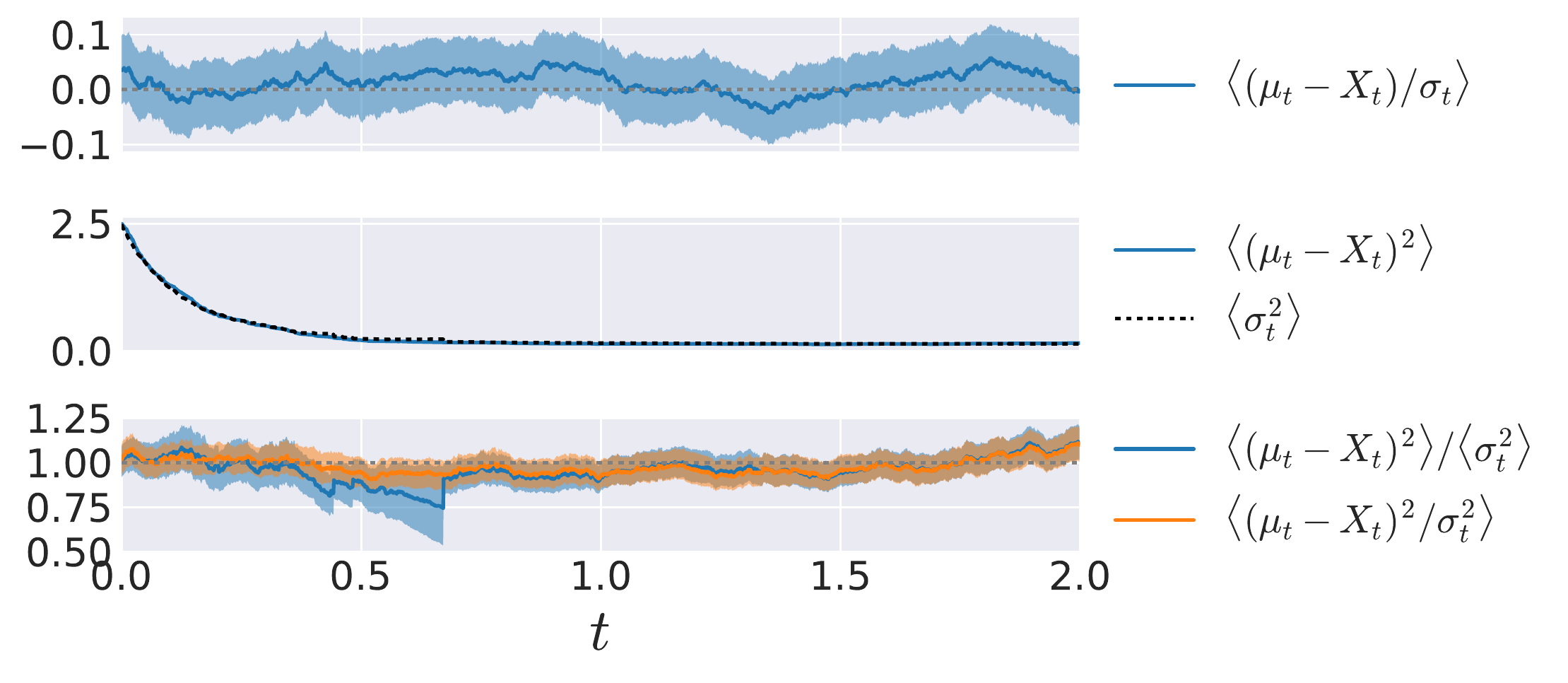}
\par\end{centering}
}\caption{Posterior variance vs. MSE when filtering the one-dimensional process
$dX_{t}=aX_{t}\,dt+s\,dW_{t}$ of Figure \ref{harper}. As in Figure
\ref{harper}, parameters on the left are $a=-0.05,s=0.1$, and on
the right $a=-0.05,s=0.5$. The sensory population is Gaussian (\ref{eq:gaussian-f}),(\ref{eq:gauss-tc-y})
with population parameters $c=0,\sigma_{\mathrm{pop}}^{2}=4$ with
and tuning parameters $\alpha=1,h=50$. The top plots show the normalized
measured bias $\left\langle \left(\mu_{t}-X_{t}\right)/\sigma_{t}\right\rangle $,
where $\left\langle \cdot\right\rangle $ denotes averaging across
trials. The center plots show the MSE $\langle\left(\mu_{t}-X_{t}\right)^{2}\rangle$
and mean posterior variance $\langle\sigma_{t}^{2}\rangle$. The bottom
plot shows the ratio of means $\langle\left(\mu_{t}-X_{t}\right)^{2}\rangle/\langle\sigma_{t}^{2}\rangle$
and the mean ratio $\langle\left(\mu_{t}-X_{t}\right)^{2}/\sigma_{t}^{2}\rangle$.
The means were taken across 10000 trials. Shaded areas indicate 95\%
confidence intervals obtained via bootstrapping.}
\label{variance-mse}
\end{figure}
Figure (\ref{harper-mse}) is a variant of Figure (\ref{harper})
showing the MSE rather than the variance. The results are noisier
but qualitatively similar. The largest differences are observed in
Figure \ref{mse-wide} for small population variance, where the ADF
estimation is poor due to very few spikes occurring.

\begin{figure}
\captionsetup{position=top}\subfloat{\includegraphics[width=0.5\columnwidth]{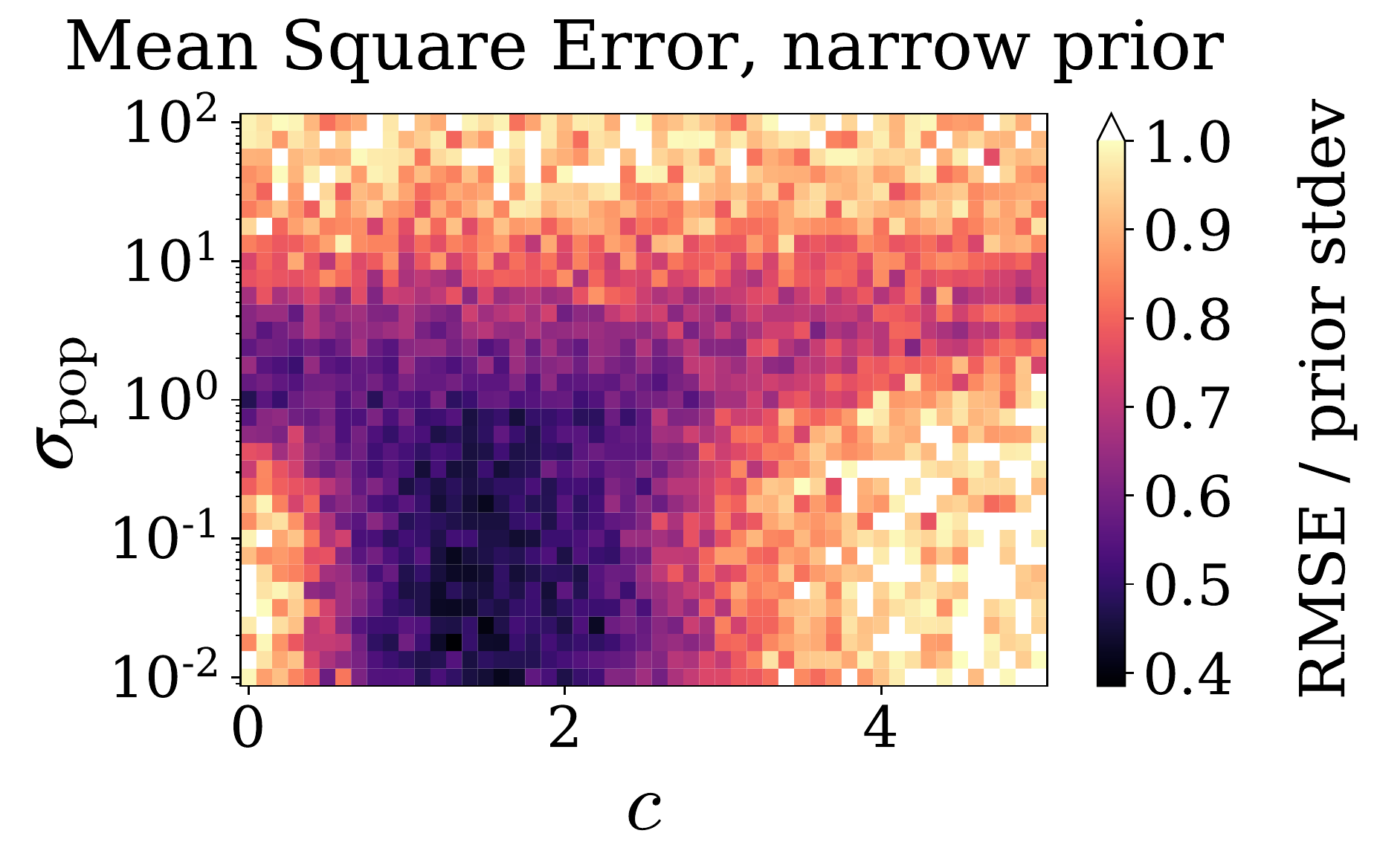}\label{mse-narrow}}\subfloat{\includegraphics[width=0.5\columnwidth]{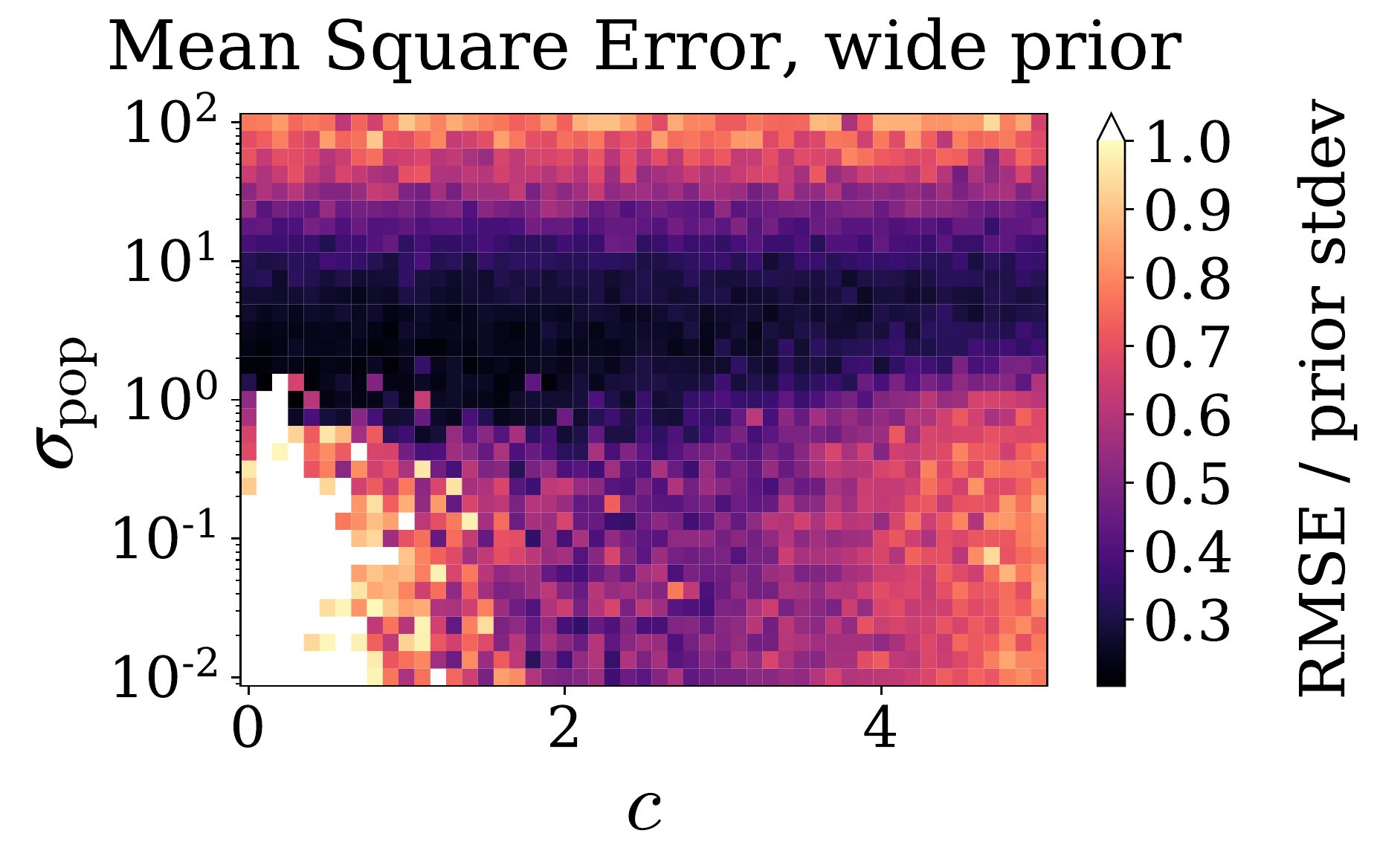}\label{mse-wide}}\caption{Optimal population distribution depends on prior variance relative
to tuning curve width. This figure is based on the same data as Figure
\ref{harper} in the main text, with MSE plotted instead of estimated
posterior variance. See Figure \ref{harper} for more details.}
\label{harper-mse}
\end{figure}

\section{Comparison to previous works -- additional details\label{sec:EB-appendix}}

As noted in section \ref{subsec:Decoding-Previous}, we are aware
of a single previous work \citet{EdenBrown2008} deriving a closed-form
filter for \emph{non-uniform }coding of a diffusion process in \emph{continuous
time}. We now present a detailed comparison of our work to \citet{EdenBrown2008}.

The filter derived in \citet{EdenBrown2008} is a continuous-time
version of a discrete-time filter presented in \citet{Eden2004}.
The setting of \citet{Eden2004} involves linear discrete-time state
dynamics, and a finite population of neurons with arbitrary tuning
functions, firing independently (given the state). The derivation
of the filter relies on an approximation applied to the measurement
update for the posterior density,

\[
\log p\left(X_{k}|\mathcal{N}_{k}\right)=\log p\left(X_{k}|\mathcal{N}_{k-1}\right)+\log P\left(\Delta\mathcal{N}_{k}|X_{k}\right)+\mathrm{const},
\]
where $X_{k}$ is the external state at time $t_{k}$, $\mathcal{N}_{k}$
is the spiking history up to time $t_{k}$, and $\Delta\mathcal{N}_{k}$
the spike counts in the interval $(t_{k-1},t_{k}]$. A Gaussian density
is substituted for each of the terms $p\left(X_{k+1},\mathcal{N}_{k+1}\right),p\left(X_{k+1},\mathcal{N}_{k}\right)$,
and each side is expanded to a second-order Taylor series about the
point $\E\left[X_{k}|\mathcal{N}_{k-1}\right]$. This yields equations
relating the first two moments of $p\left(X_{k}|\mathcal{N}_{k}\right)$
to those of $p\left(X_{k}|\mathcal{N}_{k-1}\right)$. The time update
equations for the first two moments (relating $p\left(X_{k+1}|\mathcal{N}_{k}\right)$
to $p\left(X_{k}|\mathcal{N}_{k}\right)$) require no approximation
since the dynamics are linear. The resulting discrete-time filtering
equations (equations (2.7)-(2.10) in \citet{Eden2004}) depend on
the gradient and Hessian of the logarithm of the tuning functions
at the point $\E\left[X_{k}|\mathcal{N}_{k-1}\right]$. The continuous-time
version (equations (6.3)-(6.4) in \citet{EdenBrown2008}) is derived
by taking the limit as the time discretization step $\Delta t$ approaches
0.

In contrast, the starting point in our derivation is the exact update
equations for the first two moments expressed in terms of posterior
expectations. A Gaussian posterior is substituted into these expectations,
resulting in tractable integrals in the case of Gaussian tuning functions.
The tractability of these integrals depends on the Gaussian form of
the tuning functions.

To compare the resulting filters, we consider a finite population
of Gaussian neurons: this is the intersection of the setting of the
current work with that of \citet{EdenBrown2008}. In this case, the
filtering equations in \citet{EdenBrown2008} yield the same discontinuous
update terms, but the continuous update terms take the form\begin{subequations}\label{eq:eb}
\begin{align}
d\mu_{t}^{\mathrm{c,EB}} & =\sum_{i}\lambda^{i}\left(\mu_{t}\right)\Sigma_{t}H_{i}\transpose R_{i}\delta_{t}^{i}dt\label{eq:eb-mean}\\
d\Sigma_{t}^{\mathrm{c,EB}} & =\sum_{i}\lambda^{i}\left(\mu_{t}\right)\Sigma_{t}H_{i}\transpose\left(R_{i}-R_{i}\delta_{t}^{i}\left(\delta_{t}^{i}\right)\transpose R_{i}\right)H_{i}\Sigma_{t}dt,\label{eq:eb-var}
\end{align}
where
\[
\delta_{t}^{i}\triangleq H_{i}\mu_{t}-\theta_{i}.
\]
\end{subequations}(see section \ref{subsec:EB-derivation}). These
equations differ from (\ref{eq:mean-c-gaussian-finite})-(\ref{var-c-gaussian-finite})
in the use of the tuning function shape matrix $R_{i}$ in place of
$S_{t}^{i}$ (defined in (\ref{eq:S-finite})), and $\lambda^{i}\left(\mu_{t}\right)$
in place of $\hat{\lambda}_{t}^{i}$. The difference between $\lambda^{i}\left(\mu_{t}\right)$
and $\hat{\lambda}_{t}^{i}$ similarly involves substituting $R_{i}$
for $S_{t}^{i}$,
\begin{align*}
\lambda^{i}\left(\mu_{t}\right) & =h\exp\left(-\frac{1}{2}\left(\delta_{t}^{i}\right)\transpose R_{i}\delta_{t}^{i}\right)\\
\hat{\lambda}_{t}^{i} & =h\sqrt{\frac{\left|S_{t}^{i}\right|}{\left|R_{i}\right|}}\exp\left(-\frac{1}{2}\left(\delta_{t}^{i}\right)\transpose S_{t}^{i}\delta_{t}^{i}\right).
\end{align*}
Since $S_{t}^{i}=\left(R_{i}^{-1}+H_{i}\transpose\Sigma_{t}H_{i}\right)^{-1}$,
our filtering equations take into account the posterior variance $\Sigma_{t}$
in several places where it is absent in (\ref{eq:eb}). Note that
when $\Sigma_{t}=0$ we have $R_{i}=S_{t}^{i}$ and $\lambda\left(\mu_{t}\right)=\hat{\lambda}_{t}$,
so the equations become increasingly similar when $\Sigma_{t}\to0$.
We refer to the filter (\ref{eq:eb}) as the Eden-Brown (EB) filter.

We compared the performance of our filter and (\ref{eq:eb}) in simulations
of a simple one-dimensional setup. Figure \ref{fig-eb-example-1}
shows an example of filtering a static one-dimensional state observed
through two Gaussian neurons (\ref{eq:gauss-tc}), using both the
ADF approximation (\ref{eq:mean-c-gaussian-finite})-(\ref{var-c-gaussian-finite})
and the EB filter (\ref{eq:eb}) for comparison. Since the Mean Square
Error is highly sensitive to outliers where the approximation fails
and the filtering error becomes large, we compare the filters by observing
the distribution of Absolute Errors (AE) in estimating the state.
Figure \ref{eb-comparison-1} compares the AE in estimating the state
for the two filters. As expected from the analysis above, the ADF
filter has an advantage particularly in earlier times, when the posterior
variance is large. This is seen most clearly in the 95th percentile
of AEs in Figure \ref{eb-comparison-1}(a), and in the tail histograms
(\ref{eb-comparison-1}(c)), but a small advantage may also be observed
for the median in \ref{eb-comparison-1}(d). However, in some trials
the error in the EB filter remains large throughout.

In Figure \ref{eb-comparison-1} (left) we chose the preferred stimuli
$-0.51$ and $0.5$. These are not symmetric around 0 due to a limitation
of the EB filter: when applied to a homogeneous population and the
current posterior mean is precisely in the average of the population's
preferred stimuli, the posterior mean remains constant until the next
spike, while the posterior variance evolves as 
\[
d\sigma_{t}^{2,\mathrm{c,EB}}=\sum_{i}\lambda^{i}\left(\mu_{t}\right)\frac{1}{\alpha^{2}}\left(1-\frac{\left(\mu_{t}-\theta_{i}\right)^{2}}{\alpha^{2}}\right)\sigma_{t}^{4}dt=\mathrm{const}\cdot\sigma_{t}^{4}dt,
\]
which diverges in finite time if the constant coefficient is positive.
The coefficient is positive when $\theta_{i}$ are close to $\mu_{t}$.
This causes divergence of the filter when the preferred stimuli are
symmetric around the initial estimate $\mu_{0}$ and sufficiently
near it, and the first spike is sufficiently delayed. To avoid this
behavior we chose preferred stimuli that are not symmetric around
$\mu_{0}=0$. This asymmetry causes an eventual shift in the posterior
mean in the absence of spikes, which suppresses the growth of the
posterior variance. This effect may cause high estimation errors before
the first spike, as evident in Figure \ref{eb-comparison-1}d (left).

\begin{figure}
\includegraphics[width=1\columnwidth]{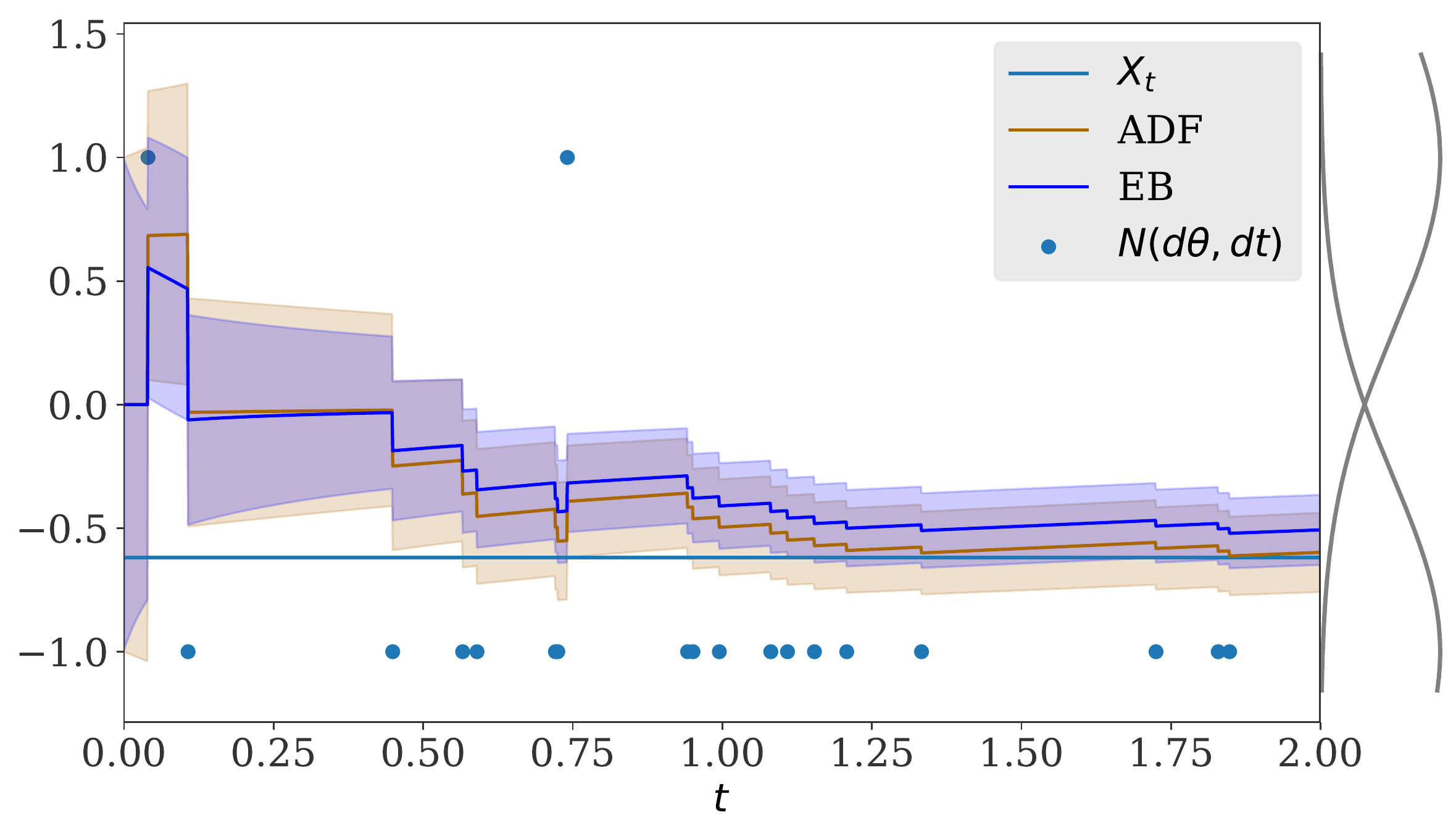}\caption{An example of a static 1-d state observed through two sensory neurons
and filtered by ADF (\ref{eq:mean-c-gaussian-finite})-(\ref{var-c-gaussian-finite})
and by the EB filter (\ref{eq:eb-mean})-(\ref{eq:eb-var}). Each
dot corresponds to a spike with the vertical location indicating the
neuron's preferred stimulus $\theta$. The approximate posterior means
obtained from ADF and the EB filter are shown in orange and blue respectively,
with the corresponding posterior standard deviations in shaded areas
of the respective colors. The curves to the right of the graph show
the tuning functions of the two neurons. Parameters are: $h=10,H=1,R^{-1}=0.5,\mu_{0}=0,\sigma_{0}^{2}=1$,
and the neurons are centered at $-1$ and $1$.The dynamics were discretized
with time step $\Delta t=10^{-3}$. }
\label{fig-eb-example-1}
\end{figure}
\begin{sidewaysfigure}
\includegraphics[width=0.45\columnwidth]{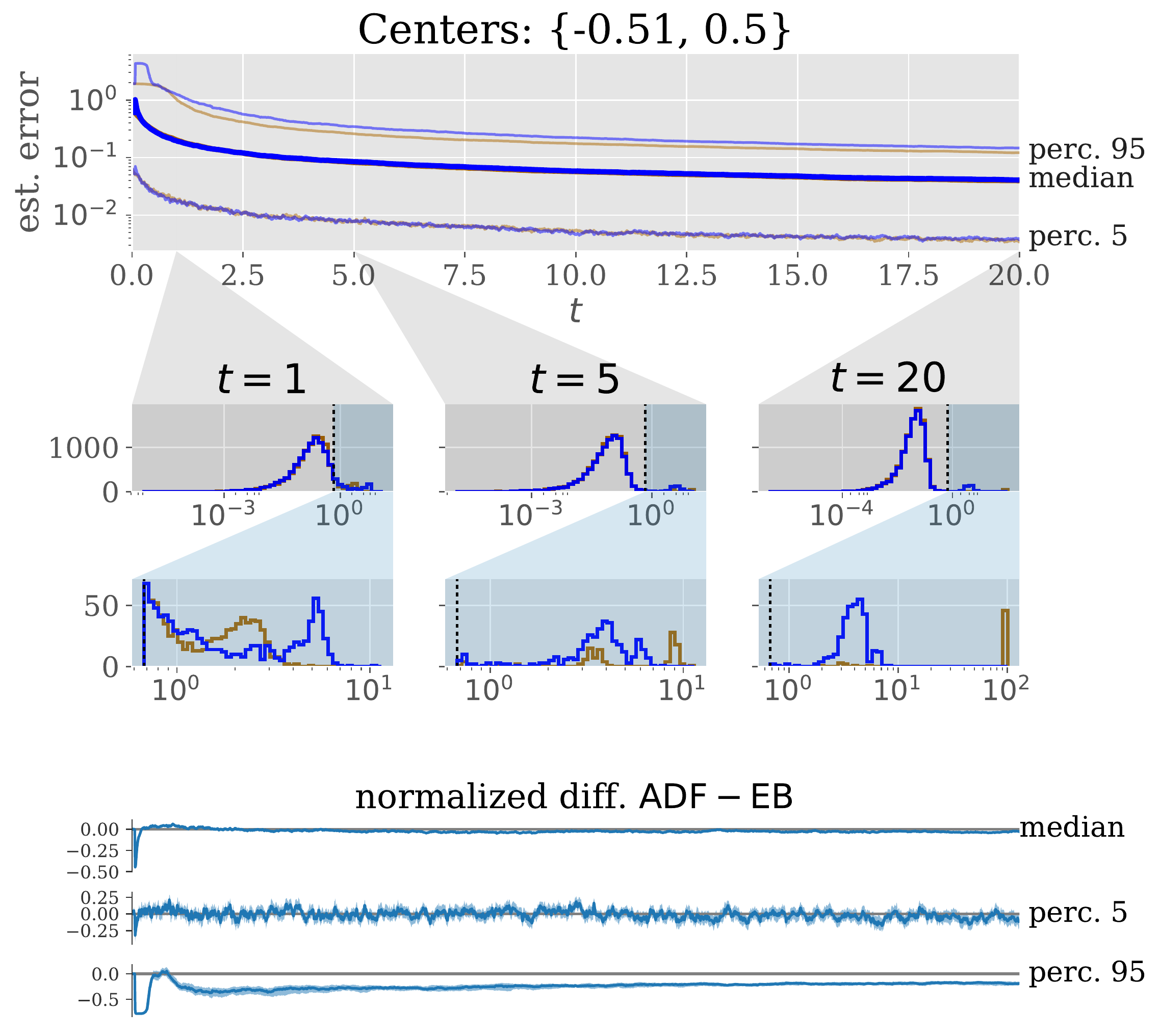}\hfill{}\includegraphics[width=0.45\columnwidth]{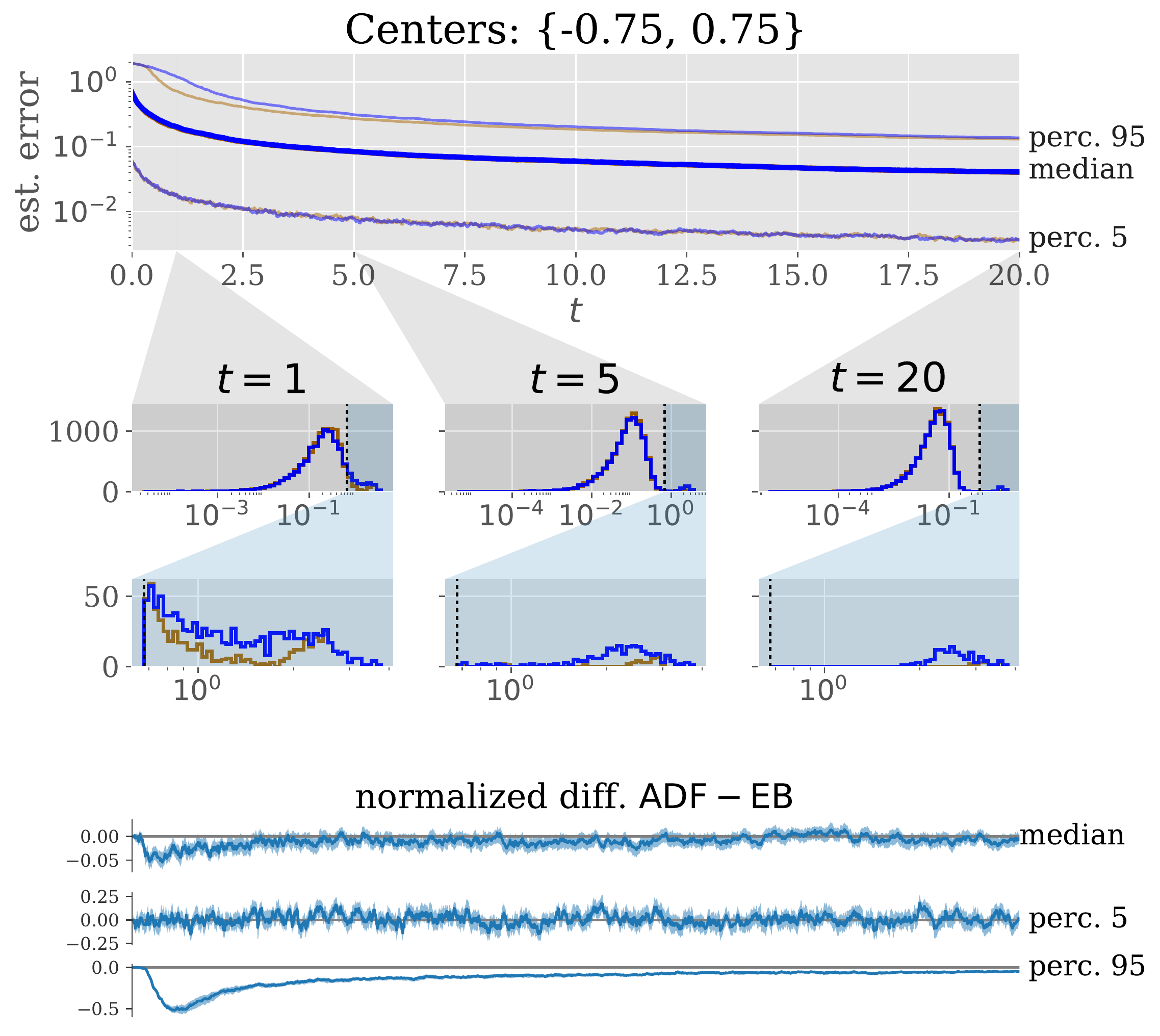}\caption{Distribution of absolute estimation errors as a function of time,
collected from 10,000 trials. Parameters are the same as in Figure
(\ref{fig-eb-example-1}), with the state sampled from $\mathcal{N}\left(0,1\right)$
in each trial. Preferred stimuli are noted above each subfigure.\emph{
}\textbf{(a)} Medians of the distribution of AEs for ADF (orange)
and the EB filter (blue), along with 5th and 95th percentiles, as
a function of time. The medians are indistinguishable at this scale.
\textbf{(b)} Histograms of AEs for some specific times, with logarithmically
spaced bins. The vertical dotted line indicates the AE at filter initialization,
which equals the prior expectation $\protect\E\left|X_{0}-\mu_{0}\right|=\sigma_{0}\Phi^{-1}\left(\frac{3}{4}\right)$
where $\Phi^{-1}$ is the quantile function for the standard normal
distribution. \textbf{(c)} Histogram of AEs larger than the initial
AE. \textbf{(d) }Normalized differences of AE percentiles $\left(p_{\mathrm{ADF}}^{r}-p_{\mathrm{EB}}^{r}\right)/\frac{1}{2}\left(p_{\mathrm{ADF}}^{r}+p_{\mathrm{EB}}^{r}\right)$,
where $p_{\mathrm{ADF}}^{r},p_{\mathrm{EB}}^{r}$ are the $r$th quantile
of the AE distribution for the ADF and EB filters respectively, for
$r=0.5,0.05,0.95$. Negative values indicate an advantage of ADF over
EB. Shaded areas indicate 95\% confidence intervals derived via bootstrapping
(e.g. \citet{Efron1994introduction}).}
\label{eb-comparison-1}
\end{sidewaysfigure}

\subsection{Derivation of (\ref{eq:eb})\label{subsec:EB-derivation}}

In our notation, the filtering equation of \citet{EdenBrown2008}
read\begin{subequations}\label{eq:eb-general}
\begin{align}
d\mu_{t}^{\mathrm{c,EB}} & =-\sum_{i}\Sigma_{t}\nabla\log\lambda^{i}\left(\mu_{t}\right)\lambda^{i}\left(\mu_{t}\right)dt\label{eq:eb-mean-c-general}\\
d\Sigma_{t}^{\mathrm{c,EB}} & =-\sum_{i}\Sigma_{t}\nabla^{2}\lambda^{i}\left(\mu_{t}\right)\Sigma_{t}dt\\
d\mu_{t}^{N,EB} & =\sum_{i}\Sigma_{t^{+}}\nabla\log\lambda^{i}\left(\mu_{t^{-}}\right)dN_{t}^{i}\\
d\Sigma_{t}^{N,EB} & =-\sum_{i}\Sigma_{t^{-}}S{}_{t^{-}}^{\mathrm{EB}}\Sigma_{t^{-}}dN_{t}^{i}\label{eq:eb-var-N-general}
\end{align}
\end{subequations}where $\nabla,\nabla^{2}$ denote the gradient
and Hessian respectively, and $S_{t^{-}}^{\mathrm{EB}}$ is defined
as
\[
S_{t}^{\mathrm{EB}}\triangleq\nabla^{2}\log\lambda^{i}\left(\mu_{t}\right)\left(\Sigma_{t}\nabla^{2}\log\lambda^{i}\left(\mu_{t}\right)-I\right)^{-1}
\]
\textbf{Note }this is a corrected version of the definition used in
\citet{EdenBrown2008}, which is unusable when the Hessian is singular
(this occurs in our model whenever $m<n$). The definition of $S_{t}^{\mathrm{EB}}$
given here extends it to the singular case.

For a Gaussian firing rate, (\ref{eq:gauss-tc}), the relevant gradient
and Hessians are given by
\begin{align*}
\nabla\log\lambda^{i}\left(x\right) & =-H_{i}\transpose R_{i}\left(H_{i}x-\theta_{i}\right)\\
\nabla^{2}\log\lambda^{i}\left(x\right) & =-H_{i}\transpose R_{i}H_{i}\\
\nabla^{2}\lambda^{i}\left(x\right) & =-\lambda^{i}\left(x\right)H_{i}\transpose\left(R_{i}-R_{i}\left(H_{i}x-\theta_{i}\right)\left(H_{i}x-\theta_{i}\right)\transpose R_{i}\right)H_{i},
\end{align*}
so
\begin{align*}
S_{t}^{\mathrm{EB}} & =\nabla^{2}\log\lambda^{i}\left(\mu_{t}\right)\left(\Sigma_{t}\nabla^{2}\log\lambda^{i}\left(\mu_{t}\right)-I\right)^{-1}\\
 & =-H_{i}\transpose R_{i}H_{i}\left(-\Sigma_{t}H_{i}\transpose R_{i}H_{i}-I\right)^{-1}\\
 & =H_{i}\transpose R_{i}H_{i}\left(\Sigma_{t}^{-1}+H_{i}\transpose R_{i}H_{i}\right)^{-1}\Sigma_{t}^{-1}\\
 & =H_{i}\transpose\left(R_{i}^{-1}+H_{i}\Sigma_{t}H_{i}\transpose\right)^{-1}H_{i}\tag{using Claim \ref{claim:inversion}}\\
 & =H_{i}\transpose S_{t}^{i}H_{i}
\end{align*}
where $S_{t}^{i}$ is given by (\ref{eq:S-finite}). Substituting
into (\ref{eq:eb-general}) yields the continuous updates (\ref{eq:eb-mean})-(\ref{eq:eb-var}),
and the discontinuous updates
\begin{align*}
d\mu_{t}^{N,\mathrm{EB}} & =\sum_{i}\Sigma_{t^{+}}H_{i}\transpose R_{i}\delta_{t^{-}}^{i}dN_{t}^{i},\\
d\Sigma_{t}^{N,\mathrm{EB}} & =-\sum_{i}\Sigma_{t^{-}}H_{i}\transpose S{}_{t^{-}}^{i}H_{i}\Sigma_{t^{-}}dN_{t}^{i}.
\end{align*}
The discontinuous variance update is identical to the ADF update (\ref{eq:mean-N-gaussian-finite}).
The discontinuous mean update can be rewritten in terms of $\Sigma_{t^{-}}$
by noting that
\begin{align*}
\Sigma_{t^{+}}H_{i}\transpose R_{i} & =\left(\Sigma_{t^{-}}-\Sigma_{t^{-}}H_{i}\transpose S{}_{t^{-}}^{i}H_{i}\Sigma_{t^{-}}\right)H_{i}\transpose R_{i}\\
 & =\Sigma_{t^{-}}H_{i}\transpose\left(I-S_{t^{-}}^{i}H_{i}\Sigma_{t^{-}}H_{i}\transpose\right)R_{i}\\
 & =\Sigma_{t^{-}}H_{i}\transpose\left(I-\left(R_{i}^{-1}+H_{t}\Sigma_{t}H_{i}\transpose\right)^{-1}H_{i}\Sigma_{t^{-}}H_{i}\transpose\right)R_{i}\\
 & =\Sigma_{t^{-}}H_{i}\transpose\left(I+R_{i}H_{i}\Sigma_{t^{-}}H_{i}\transpose\right)^{-1}R_{i}\tag{Woodbury}\\
 & =\Sigma_{t^{-}}H_{i}\transpose S_{t}^{i}
\end{align*}
so the mean update may be rewritten as
\begin{align*}
d\mu_{t}^{N,\mathrm{EB}} & =\sum_{i}\Sigma_{t^{-}}H_{i}\transpose S_{t}^{i}\delta_{t^{-}}^{i}dN_{t}^{i},
\end{align*}
in agreement with (\ref{eq:mean-N-gaussian-finite}).
\end{document}